\newtheorem{thm}{Theorem}[section]
\newtheorem{cor}[thm]{Corollary}
\newtheorem{lem}[thm]{Lemma}
\newtheorem{prop}[thm]{Proposition}
\newtheorem*{prob*}{Problem}
\newtheorem*{thm*}{Theorem}
\theoremstyle{definition}
\newtheorem{defn}[thm]{Definition}
\newtheorem*{defn*}{Definition}
\newtheorem{rem}[thm]{Remark}
\newtheorem*{rem*}{Remark}
\numberwithin{equation}{section}
\newcommand{\C}{\mathbb C}
\newcommand{\R}{\mathbb R}
\newcommand{\Y}{\mathbb Y}
\newcommand{\eps}{\varepsilon}
\newcommand{\wt}{\widetilde}
\DeclareMathOperator{\Sp}{Sp}
\DeclareMathOperator{\Jacobi}{Jacobi}
\DeclareMathOperator{\const}{const}
\DeclareMathOperator{\E}{\mathbb{E}}
\DeclareMathOperator{\Product}{Product}
\DeclareMathOperator{\Markov}{Markov}
\DeclareMathOperator{\Macdonald}{Macdonald}
\begin{document}
\title[Product Matrix Processes via Symmetric Functions]
{\bf{Product Matrix Processes with Symplectic and Orthogonal Invariance via Symmetric Functions}}
\author{Andrew Ahn}
\address{Department of Mathematics, Massachusetts Institute of Technology,
77 Massachusetts Avenue
Cambridge, MA 02139-4307, USA}\email{ajahn@mit.edu}
\author{Eugene Strahov}
\address{Department of Mathematics, The Hebrew University of Jerusalem, Givat Ram, Jerusalem 91904, Israel}
\email{strahov@math.huji.ac.il}
\keywords{Products of  random matrices, singular value statistics, Macdonald symmetric functions, Gelfand pairs}
\commby{}
\begin{abstract}
We apply symmetric function theory to study random processes formed by singular values of products of truncations of Haar distributed symplectic and orthogonal matrices. These product matrix processes are degenerations of Macdonald processes introduced by Borodin and Corwin. Through this connection, we obtain explicit formulae for the distribution of singular values of a deterministic matrix multiplied by a truncated Haar orthogonal or symplectic matrix under conditions where the latter factor acts as a rank $1$ perturbation. Consequently, we generalize the recent Kieburg-Kuijlaars-Stivigny formula for the joint singular value density of a product of truncated unitary matrices to symplectic and orthogonal symmetry classes. Specializing to products of two symplectic matrices with a rank $1$ perturbative factor, we show that the squared singular values form a Pfaffian point process.
 \end{abstract}

\maketitle
\section{Introduction}
Products of random matrices enjoy a special seat at the intersection of numerous areas including dynamical systems \cite{Furstenberg,Pollicott}, neural networks \cite{Hanin}, and random matrices. In particular, the squared singular values are of interest, in connection with Lyapunov exponents in dynamical systems, exploding and vanishing gradients in neural networks, and as a particle system. For a variety of ensembles with unitary symmetry, such as products of Ginibre ensembles and truncated unitary matrices, the presence of exact formulas has enabled a fruitful exploration into these random processes, including asymptotic results under numerous limiting regimes, see Kuijlaars and Zhang \cite{KuijlaarsZhang},  Liu, Wang, and Zhang \cite{LiuWangZhang}, Akemann, Burda, and Kieburg \cite{AkemannBurdaKieburg}, Liu, Wang, and Wang \cite{LiuWangWang} and references therein.  The existence of exact formulas are due largely to the remarkably special nature of unitary symmetry. For analogous ensembles with orthogonal or symplectic symmetries instead, there is great scarcity of exact formulas even for just the product of two matrices.

One of the problems arising in the context of products of random matrices can  be formulated as follows. Assume that $X$ is a fixed matrix whose singular values are known, and
let $T$ be a random matrix of a compatible size. What can be said about the distribution of the singular values of the product $TX$? This problem is similar to
the randomized multiplicative Horn problems
discussed recently in the literature, see Forrester and Zhang \cite{ForresterZhang}, Zhang, Kieburg, and Forrester \cite{ZhangKieburgForrester}.
The randomized multiplicative Horn problems are versions of the classical Horn problem \cite{Horn} on finding the support of the eigenvalues of the sum $C=A+B$ of two fixed Hermitian matrices
$A$ and $B$ whose eigenvalues are given. One randomized multiplicative Horn problem is to study eigenvalues of the product matrix $C=A^{\frac{1}{2}}BA^{\frac{1}{2}}$ where $A$ and $B$
are given  by their fixed eigenvalues, and their diagonalizing Haar-distributed unitary matrices. For a progress on the multiplicative randomized Horn problems we refer
the reader to Refs. \cite{ForresterZhang, ZhangKieburgForrester}.

Concerning the distribution of the singular values of the product $TX$
the simplest  case  is that where $T$ has independent entries whose real and imaginary parts are independent  and have  a standard normal distribution, i.e. $T$ is a complex Ginibre matrix. In this situation the joint probability density function of the squared singular values can be explicitly computed, see Kuijlaars and Stivigny \cite{KuijlaarsStivigny}, Lemma 2.2,
and references therein to related  works.  Another known case is where $T$ is a submatrix (or truncation) of a Haar distributed unitary matrix. In this case  the squared singular values of $TX$  are distributed by a polynomial
ensemble, see Theorem 2.1 in Kieburg, Kuijlaars, and Stivigny \cite{KieburgKuijlaarsStivigny}.

The results mentioned above can be interpreted as those for the Markov transition kernel, from the squared singular values of a deterministic matrix $X$ to the squared singular values of $TX$, where $T$ is a random matrix. The explicit formulae for such Markov kernels can be used to study products of random matrices, and, in particular, \textit{product matrix processes}. Indeed, let us  assume that $X$ is a random matrix such that its squared singular values form a polynomial ensemble, and that $T$ is a complex Ginibre matrix  independent from $X$. Then the explicit formula for the  relevant Markov transition kernel implies  that the squared singular values of $TX$ form  a polynomial ensemble as well, see Theorem 2.1 in Kuijlaars \cite{Kuijlaars}. The subsequent application of  Theorem 2.1 in Kuijlaars \cite{Kuijlaars} gives the distribution of the squared singular values for the product
of an arbitrary number of independent Ginibre matrices (first obtained in the papers by Akemann, Kieburg, and Wei \cite{AkemannKieburgWei}, and by Akemann, Ipsen, and Kieburg \cite{AkemannIpsenKieburg}). Now, consider sequences $\left\{T_l\cdots T_1\right\}_{l=1}^m$ of products of independent Ginibre matrices instead of fixed products,
assume that each $T_l$ is of size $\left(n+\nu_l\right)\times\left(n+\nu_{l-1}\right)$ (where $l=1,\ldots, m; \nu_0=0,\nu_1\geq 0,\ldots, \nu_m\geq 0$), and for each $l=1,\ldots, m$ denote by $y_j^l$, $j=1,\ldots,n$, the squared singular values of $T_l\ldots T_1$. The configuration
$$
\left\{\left(l,y_j^l\right)\biggr|l=1,\ldots,m;\; j=1,\ldots,n\right\}
$$
of all these singular values generates a random point process on $\left\{1,\ldots,m\right\}\times \R_{>0}$ is called the \textit{Ginibre product process} in Strahov \cite{StrahovD}.
The application of the Markov transition kernel to this process gives the joint distribution of $y_j^l$ as a product of determinants.  The density formula implies that the Ginibre product matrix process is a discrete-time determinantal process, and accesses various asymptotic results, see Strahov \cite{StrahovD} for further details.

Matrix products with truncated unitary matrices can be studied in a similar way. An explicit formula for the Markov transition kernel, from the squared singular values
of a deterministic matrix $X$ to the squared singular values of $TX$ where $T$ is a truncated unitary matrix (see Kieburg, Kuijlaars, and Stivigny \cite{KieburgKuijlaarsStivigny}, Theorem 2.1) gives the joint densities of squared singular values for  products  of truncated unitary matrices.
These densities have   explicit determinantal forms in terms of Meijer $G$-functions, which leads to determinantal point processes formed by the squared singular values.
Through the remarkable fact that the product matrix process with truncated unitary matrices can be understood as a scaling limit of the Schur process, see  Borodin, Gorin, and Strahov \cite{BorodinGorinStrahov}, one can obtain determinantal formulas for (dynamical) correlation functions.

The existence of exact formulas for matrix products mentioned above is due to the special nature of \textit{unitary} symmetry.
In this article  we derive new formulas for the squared singular values
of truncated \textit{orthogonal} and \textit{symplectic} matrices, and for product matrix processes
formed by such truncations. By unitary, orthogonal and symplectic matrices we mean those taken from the classical unitary group $U(m)$, from the orthogonal group $O(m)$, and from the compact symplectic group $Sp(2m)$, respectively. The orthogonal matrices can be obtained by the restricting unitary matrices from $U(m)$ to real elements, and the symplectic matrices can be obtained by restricting unitary matrices from $U(m)$ to real quaternion elements. If we use a $2\times 2$ representation of the quaternions in terms of Pauli matrices, then rank $1$ pertubations for symplectic matrices are essentially rank 2 pertubations that respect the quaternion structure, and $k\times r $ truncations are essentially $2k\times 2r$ truncations.

Our main result is an explicit formula for the distribution of the squared singular values of a fixed matrix multiplied with a truncation of a Haar distributed orthogonal or symplectic matrix such that its number of rows is one less than that of the ambient Haar matrix, see Theorem \ref{THEOREMMarkovKernel} of the present paper. The condition on the number of rows effectively makes multiplication by this truncated Haar matrix behave as a rank $1$ pertubation at the level of singular values, in the sense that the singular values of the initial matrix and new matrix interlace. Although this assumption on row numbers may seem limiting, we note that the Markov transition kernel for an arbitrary truncated Haar orthogonal/symplectic matrix can be obtained by a composition of Markov transition kernels of rank $1$-pertubative type. Thus the rank $1$ perturbations we study can be viewed as the elementary building blocks for general truncated matrix transitions. A subsequent application of Theorem \ref{THEOREMMarkovKernel} gives the joint law for the product matrix processes with truncated symplectic and orthogonal matrices, see
Theorem \ref{THEOREMProductTruncatedProcessTHETA}. Theorem \ref{THEOREMProductTruncatedProcessTHETA} is used to derive two formulas for the joint probability density of the squared singular values of matrix products
constructed with truncated symplectic and orthogonal matrices, see Theorem \ref{THEOREMsvproduct} and Theorem \ref{THEOREMDENSITYJACKREPRESENTATION}. Theorem \ref{THEOREMsvproduct} presents the joint density
as an integral where the dimension of the integral depends only on the number of matrices in products, and Theorem \ref{THEOREMDENSITYJACKREPRESENTATION} gives the joint probability density of the squared singular values in terms of a Jack symmetric function with the appropriate parameter associated to the symmetry class.

We highlight a key structural result which we obtain in the symplectic case from our main result. We recast the two-product density for truncated symplectic matrices --- again, where one factor is a rank $1$-pertubative type --- in a determinantal form in terms of Meijer $G$-functions. The structure of this determinant indicates that it is a Pfaffian point process and we compute the correlation kernel via skew orthogonal polynomials. The methods we apply in the two product case are not stable under iteration, thus we are unable to see the Pfaffian structure for products of three or more matrices, if it exists. If there exists a Pfaffian structure for general products of truncated orthogonal and symplectic matrices, this may be due to special properties of the Jack functions in the parameters associated to these symmetry classes. Thus, one possible approach to find this structure in general may be exploit the Jack function representation of our density. We hope to explore this in a later work.

Our entry point to these results is a connection between Macdonald processes introduced in Borodin and Corwin \cite{BorodinCorwin}, and products of truncated orthogonal, unitary, and symplectic matrices. This connection is rooted in the fact that the zonal spherical functions associated to the Gelfand pairs $GL_n(\R)/O(n)$, $GL_n(\C)/U(n)$, $GL_n(\mathbb{H})/\Sp(2n)$ (where $\mathbb{H}$ denotes the skew field of real quaternions) are given by the Jack functions with appropriate parameters which are certain degenerations of the Macdonald symmetric functions. More concretely, the squared singular values of these products can be realized as a degeneration of certain Macdonald processes. In \cite{BorodinGorinStrahov}, this degeneration was applied in the unitary case, and \cite{Ahn} considered this degeneration for the other symmetry classes. Through this connection, we produce an explicit formula for the Markov transition kernel from squared singular value of a deterministic matrix $X$ to the squared singular values of a product $TX$ where $T$ is a truncated orthogonal or symplectic matrix, analogous to the result of \cite{KieburgKuijlaarsStivigny,Kuijlaars}.
With the Markov transition kernel, we compute the density in terms of Cauchy-type and Vandermonde determinants. Using generalized Dixon-type integrals, we are able to derive  an integral representation for the density where the dimension of the integral depends only on the number of products. By exploiting a variable-index symmetry in the Macdonald symmetric functions, we derive an alternative formula for the density in terms of the Jack symmetric functions.

Similar problems to those considered in this paper can be formulated for sums of random matrices, and for the Hermitised products $X_M\ldots X_2X_1AX_1^{T}X_2^{T}\ldots X_M^T$, where each $X_i$ is a real random matrix, and $A$ is real antisymmetric,
 see Kuijlaars and Rom$\acute{\mbox{a}}$n \cite{KuijlaarsRoman},
Kieburg \cite{Kieburg}, Kieburg,  Forrester, and  Ipsen \cite{KieburgForresterIpsen} and references therein for applications of the theory of spherical functions to these ensembles. An application of the theory of symmetric functions
to such sums and products is a possible topic for a future research.

While the focus of this article is on the structure and form of densities of the squared singular values of random matrix products, we mention a couple asymptotic applications. First, we note that the Dyson index $\beta$, where $\beta = 1,2,4$ corresponds to the orthogonal, unitary, and symplectic symmetry classes respectively, can be generalized to arbitrary $\beta > 0$ in our model, extending the idea of $\beta$-ensembles to the setting of matrix products, see e.g. \cite[Chapter 20]{AkemannBaikDiFrancesco}. This generalization is described in Section \ref{SECarbitrarybeta}. By taking $\beta \to \infty$, we find that the particles crystallize and the fluctuations of the particles are described by correlated Gaussians, see Section \ref{SECbeta=infty}. An analogous object was introduced by Gorin and Marcus \cite{GorinMarcus} for the corners process and studied in the $n\to\infty$ limit by Gorin and Kleptsyn \cite{GorinKleptsyn}. We expect that similar methods as in \cite{GorinKleptsyn} may be employed to study the $n\to\infty$ of the $\beta = \infty$ product process.

We mention one more application. Recall that our Theorem \ref{THEOREMDENSITYJACKREPRESENTATION} describes the density of squared singular values of matrix products in terms of the Jack function. With this density formula and integral representations for the Jack function \cite{Cuenca}, we can access the limiting behavior of the squared singular values in the regime where $n$ is fixed but the number of products tends to $\infty$. Here, we expect Lyapunov exponents to determine the asymptotic positions of the log singular values, and the fluctuations to be Gaussian. This regime was considered for products of $p$-adic random matrices in \cite{VanPeski} where the Lyaunov exponents and fluctuations were determined. A related regime for complex Ginibre and truncated unitary matrices was also considered in \cite{AkemannBurdaKieburg}, \cite{LiuWangWang}, \cite{Ahn}, where it was shown that if the matrix size $n$ and the number of products $p$ tend to infinity with $n/p \to 0$, then the largest log singular values tend to Lyapunov exponents with Gaussian fluctuations as well. We defer further details in this direction to a future article.

The remainder of this article is organized as follows. In Section \ref{SECmainresults}, we describe in more details the background behind our work and our main results. In Section \ref{SECkernel}, we prove our formula for the Markov transition and the first density formula. Section \ref{SECdixon} derives the second density formula via generalized Dixon integration. Section \ref{SECjack} derives the third density formula in terms of Jack functions. We derive the density in the case of products of two truncated orthogonal matrices and the corresponding Pfaffian correlation kernel in Section \ref{SECpfaffian}. In Section \ref{SECarbitrarybeta}, we give some details about the interpretation of our models for arbitrary $\beta > 0$. Lastly, in Section \ref{SECbeta=infty}, we take the $\beta \to \infty$ limit of our model. \\
\textbf{Acknowledgements.}
This work was supported by the BSF grant 2018248 ``Products of random matrices via the theory of symmetric functions''. A. Ahn was partially supported by National Science Foundation Grant DMS-1664619.

\section{Formulation of the problem and the main results} \label{SECmainresults}
The starting point of the present research is the well-known fact that the distribution of the squared singular values of a truncation
of a Haar distributed  matrix taken from unitary, symplectic, or orthogonal group is given by the Jacobi ensemble from Random Matrix Theory.
\subsection{The Jacobi ensembles related to truncated unitary, orthogonal, and symplectic matrices}
In this paper we are dealing with random matrix ensembles related to classical compact groups $U(m)$, $O(m)$, and $Sp(2m)$.
Here $U(m)$ denotes the group of $m\times m$ unitary matrices over $\C$, $O(m)$ denotes the group of $m\times m$ unitary matrices over $\R$, and $Sp(2m)$ denotes the group of $m\times m$ unitary matrices over the field
of real quaternions $\mathbb{H}$.  In what follows we refer to $U(m)$ as to the unitary group, to $O(m)$ as to the orthogonal group, and to $Sp(2m)$ as to the symplectic group.
Let $S$ be a Haar distributed matrix taken from the unitary group $U(m)$, or from the orthogonal group $O(m)$, or from the symplectic group $\Sp(2m)$.
Let the integers $k$, $r$ be chosen such that the condition $1\leq k,r\leq m$ is satisfied. The submatrix $T$ of $S$ defined by
$$
T=\left(
          \begin{array}{ccc}
            S_{1,1} & \ldots & S_{1,r} \\
            \vdots &  &  \\
            S_{k,1} & \ldots & S_{k,r} \\
          \end{array}
        \right)
$$
is called a $k\times r$ \textit{truncation} of $S$.
\begin{prop}\label{PROPOSITIONJacobiEnsembles}Let $\nu$ be a positive integer, and assume that $T$ is a $(n+\nu)\times n$-truncation of a Haar distributed matrix $S$ taken from the unitary group $U(m)$,
or from the orthogonal group $O(m)$, or from the symplectic group $\Sp(2m)$. In addition, assume that the condition
$$
m\geq 2n+\nu
$$
is satisfied. Then the distribution of  the eigenvalues $\left(x_1,\ldots,x_n\right)$ of $T^*T$ is given by the probability measure on $[0,1]^n$ defined by
\begin{equation}\label{DISTRIBUTIONONEMATRIX}
\begin{split}
&P_{n,\Jacobi}^{(\theta)}\left(x_1,\ldots,x_n\right)dx_1\ldots dx_n\\
&=\frac{1}{Z_{n,\Jacobi}^{(\theta)}}\prod\limits_{1\leq j<k\leq n}\left|x_j-x_k\right|^{2\theta}\prod\limits_{j=1}^n
\left(x_j\right)^{\theta\left(\nu+1\right)-1}\left(1-x_j\right)^{\theta(m-2n-\nu+1)-1}dx_1\ldots dx_n.
\end{split}
\end{equation}
Here $Z_{n,\Jacobi}^{(\theta)}$ is the normalization constant
given by the formula
\begin{equation}\label{ZJacobi}
Z_{n,\Jacobi}^{(\theta)}=n!\left(\Gamma(\theta)\right)^n\prod\limits_{j = \nu_1+1}^{m_1-n}\frac{\Gamma(\theta j)}{\Gamma(\theta(j+n))}
\prod\limits_{j=1}^n\frac{\Gamma(\theta\left(m-2n-\nu+j\right))\Gamma(\theta j)}{\left(\Gamma(\theta)\right)^2},
\end{equation}
$\theta=1$ in case
$T$ is a $(n+\nu)\times n$-truncation of a Haar distributed matrix  taken from the unitary group $U(m)$,
$\theta=\frac{1}{2}$ in case $T$ is a $(n+\nu)\times n$-truncation of a Haar distributed matrix
taken  from the orthogonal group $O(m)$, and $\theta=2$ in case $T$ is a $(n+\nu)\times n$-truncation of a Haar distributed matrix taken from the symplectic group $\Sp(2m)$.
\end{prop}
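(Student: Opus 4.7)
The plan is to proceed in two stages. In the first, I find the joint density of the matrix $T$ (rather than just of its singular values) with respect to the Lebesgue-type reference measure on $(n+\nu)\times n$ matrices over the underlying field $\mathbb{F}\in\{\R,\C,\Hq\}$; in the second, I pass to the eigenvalues of $T^*T$ via the singular value decomposition.

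For the first stage, partition $S$ as
$$S=\begin{pmatrix} T & B \\ C & D \end{pmatrix},$$
where $C$ has size $(m-n-\nu)\times n$. Orthonormality of the first $n$ columns of $S$ gives the constraint $T^*T+C^*C=I_n$, and the hypothesis $m\geq 2n+\nu$ ensures that $C$ is at least as tall as it is wide, which is precisely the condition under which the push-forward of Haar measure onto $T$ has an absolutely continuous density. Writing $C=\wt{C}(I_n-T^*T)^{1/2}$, where $\wt{C}$ ranges over uniform isometries $\mathbb{F}^n\hookrightarrow\mathbb{F}^{m-n-\nu}$ independently of $T$, and tracking both the Jacobian of this change of variables and the fiber volume of the Stiefel factor $\wt{C}$, one obtains
$$p(T)\,dT=\frac{1}{Z_T^{(\theta)}}\det(I_n-T^*T)^{\theta(m-2n-\nu+1)-1}\,dT,$$
supported on $\{T:T^*T<I_n\}$, with $\theta=1/2,1,2$ for $\mathbb{F}=\R,\C,\Hq$ respectively. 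This is the classical \.{Z}yczkowski--Sommers density in the unitary case, and its orthogonal and symplectic analogues appear in Collins and in Forrester's book.

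For the second stage, apply the SVD
$$T=U\begin{pmatrix}\Sigma\\0\end{pmatrix}V^*,\qquad \Sigma=\diag(s_1,\dots,s_n),$$
where $U$ lies in the compact classical group of size $n+\nu$ over $\mathbb{F}$ and $V$ in that of size $n$. The standard SVD Jacobian on $(n+\nu)\times n$ matrices over $\mathbb{F}$ reads
$$dT=c\prod_{j=1}^n s_j^{2\theta(\nu+1)-1}\prod_{1\leq j<k\leq n}|s_j^2-s_k^2|^{2\theta}\prod_{j=1}^n ds_j\,dU\,dV.$$
Substituting $x_j=s_j^2$ converts this factor into $\prod_j x_j^{\theta(\nu+1)-1}\prod_{j<k}|x_j-x_k|^{2\theta}\prod_j dx_j$; combining with $\det(I_n-T^*T)=\prod_j(1-x_j)$ in eigenvalue coordinates, and integrating out the Haar variables $U$ and $V$, produces the density \eqref{DISTRIBUTIONONEMATRIX}. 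The explicit constant \eqref{ZJacobi} is then extracted from the Selberg--Aomoto evaluation of $\int_{[0,1]^n}\prod_{j<k}|x_j-x_k|^{2\theta}\prod_j x_j^{a-1}(1-x_j)^{b-1}\,dx$ in terms of gamma functions, with $a=\theta(\nu+1)$ and $b=\theta(m-2n-\nu+1)$.

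The main bookkeeping obstacle lies in the symplectic case: here $T^*T$ is an $n\times n$ matrix over $\Hq$ whose $n$ quaternion eigenvalues correspond, via the standard $2\times 2$ complex representation of $\Hq$, to $n$ paired eigenvalues of a $2n\times 2n$ complex matrix, and the factor-of-two discrepancy between the sizes of $\Sp(2m)$ and $U(2m)$ is precisely what produces the value $\theta=2$ in all three exponents. Aside from keeping this doubling consistent in both the Stiefel reduction of Stage 1 and the Jacobian of Stage 2, the argument is a direct change-of-variables calculation; in the unitary case one could simply cite the \.{Z}yczkowski--Sommers density without reproducing the Stiefel reduction.
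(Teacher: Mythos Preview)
Your sketch is correct and follows the standard derivation; the paper itself does not give a proof at all but simply refers the reader to Forrester's book \cite{Forrester}, Section 3.8.3, where precisely this two-stage argument (density of the truncation $T$ via Stiefel-manifold reduction, then passage to singular values via the SVD Jacobian) is carried out. Your outline is essentially a summary of that reference.
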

\begin{proof}
See Forrester \cite{Forrester}, Section 3.8.3.
\end{proof}
Proposition \ref{PROPOSITIONJacobiEnsembles} is a fundamental fact of Random Matrix Theory. In particular, Proposition \ref{PROPOSITIONJacobiEnsembles} implies that
the eigenvalues $\left(x_1,\ldots,x_n\right)$ of $T^*T$ form a determinantal point process in case  $T$ is a truncation of a unitary matrix, or Pfaffian point process in case $T$ is a truncation
of an orthogonal matrix, or a matrix from $Sp(2m)$. The correlation functions of such point processes can be found explicitly in terms of special functions, and different scaling limits can be obtained.

The question arises whether it is possible to extend the results of Proposition \ref{PROPOSITIONJacobiEnsembles} to \textit{products} of truncated matrices.
The paper by Kieburg, Kuijlaars, and Stivigny \cite{KieburgKuijlaarsStivigny} gives an answer to this question in the case of truncation of matrices taken from $U(m)$.
\subsection{The Kieburg-Kuijlaars-Stivigny theorem, and some of its  consequences}\label{SectionKKSTheoremAndItsConsequences}
In Ref. \cite{KieburgKuijlaarsStivigny} Kieburg, Kuijlaars, and Stivigny proved that the squared singular values of a fixed matrix multiplied
with a truncation of a Haar distributed unitary matrix form a polynomial ensemble (in the sense of Kuijlaars \cite{Kuijlaars}). Namely, the following Theorem holds true
\begin{thm}\label{THEOREMKKS} Assume that $S$ is a Haar distributed unitary matrix of size $m\times m$, and let $T$ be a $(n+\nu)\times l$ truncation of $S$.
In addition, let $X$ be a nonrandom matrix of size $l\times n$ such that the conditions
\begin{equation}\label{ConditionsForIndexes}
1\leq n\leq l\leq m,\;\; m\geq n+\nu+1
\end{equation}
are satisfied, and such that the eigenvalues $\left(x_1,\ldots,x_n\right)$ of $X^*X$ are pairwise distinct and nonzero. Then the vector
$\left(y_1,\ldots,y_n\right)$ of eigenvalues of $\left(TX\right)^*\left(TX\right)$ has density
$$
\const\left(\prod\limits_{j=1}^nx_j^{-m+n}\right)\left(\prod\limits_{j=1}^ny_j^{\nu}\right)\det\left[\left(x_k-y_j\right)_+^{m-n-\nu-1}\right]_{j,k=1}^n
\frac{\triangle(y)}{\triangle(x)},
$$
where $\triangle(x)=\prod_{1\leq i<j\leq n}\left(x_j-x_i\right)$ is the Vandermonde determinant, $\left(x-y\right)_+=\max\left(0,x-y\right)$, and the normalization constant only depends on $n$, $m$, and $\nu$, but is independent on $\left(x_1,\ldots,x_n\right)$.
\end{thm}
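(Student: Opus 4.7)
The plan is to reduce the problem by invariance to the evaluation of a concrete unitary integral, and then to invoke a Harish-Chandra/Dixon-type identity on $U(n)$.

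By left-invariance of the Haar measure on $U(m)$ under $S\mapsto \diag(U_{n+\nu},I_{m-n-\nu})S$ and right-invariance under $S\mapsto S\diag(V_l^*,I_{m-l})$, combined with an SVD $X=P\Sigma_X Q^*$, one may assume
\[
X=\binom{D}{0}\in\Mat_{l\times n}(\C),\qquad D=\diag(\sqrt{x_1},\dots,\sqrt{x_n}).
\]
Then $TX=T''D$, where $T''$ is the $(n+\nu)\times n$ truncation of $S$ formed by its first $n+\nu$ rows and first $n$ columns. The hypothesis $m\ge n+\nu+1$ ensures that $T''$ has a density with respect to complex Lebesgue measure of the classical form
\[
\mathrm{law}(T'')\ \propto\ \det\!\bigl(I_n - T''^{*}T''\bigr)^{m-n-\nu-1}\,\mathbb{1}_{T''^{*}T''\preceq I_n}\,dT'',
\]
a fact that can be read off, e.g., from the same analysis underlying Proposition \ref{PROPOSITIONJacobiEnsembles}.

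Next, change variables to $Y=T''D$ (Jacobian $\prod_j x_j^{-(n+\nu)}$) and factor $I-D^{-1}Y^*YD^{-1}=D^{-1}(D^2-Y^*Y)D^{-1}$, so the law of $Y$ is a constant times $\prod_j x_j^{-(m-1)}\det(D^2-Y^*Y)^{m-n-\nu-1}\mathbb{1}_{D^2\succeq Y^*Y}\,dY$. Perform a rectangular SVD $Y=U\Sigma V^*$ with $U$ on the complex Stiefel manifold $V_n(\C^{n+\nu})$, $V\in U(n)$, and $\Sigma^2=\diag(y_1,\dots,y_n)$ padded by zeros; the standard Jacobian is $dY=c\,\triangle(y)^2\prod_j y_j^{\nu}\,dU\,dV\,dy_1\cdots dy_n$. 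The $U$-integral gives a constant, and what remains is the $V$-integral
\[
I(x,y)\ =\ \int_{U(n)}\det\!\bigl(D^2-V\diag(y)V^*\bigr)_+^{m-n-\nu-1}\,dV.
\]

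The main obstacle --- and the one genuinely nontrivial step --- is the evaluation of $I(x,y)$, which rests on a Harish-Chandra/Dixon-type identity on $U(n)$. For $\alpha=m-n-\nu-1\ge 0$, one has
\[
I(x,y)\ =\ c_{n,\alpha}\ \frac{\det\!\bigl[(x_k-y_j)_+^{\alpha}\bigr]_{j,k=1}^{n}}{\triangle(x)\triangle(y)}\cdot\prod_{j=1}^{n}x_j^{n-1},
\]
which can be obtained via the Weyl integration formula together with a Selberg/Dixon-type computation on the flag variety, and which crucially uses both the compactness of $U(n)$ and the positivity support $D^2\succeq V\diag(y)V^*$ to produce the $(x_k-y_j)_+$ factors. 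Substituting back, the surplus $\prod_j x_j^{n-1}$ combines with $\prod_j x_j^{-(m-1)}$ to give $\prod_j x_j^{-m+n}$, the $\prod_j y_j^{\nu}$ persists, and $\triangle(y)^2/(\triangle(y)\triangle(x))=\triangle(y)/\triangle(x)$, recovering exactly the density claimed in Theorem \ref{THEOREMKKS}, with the leftover multiplicative factor depending only on $n$, $m$, and $\nu$.
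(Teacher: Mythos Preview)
This theorem is not proved in the paper; it is quoted from Kieburg--Kuijlaars--Stivigny \cite{KieburgKuijlaarsStivigny} as background. Your outline follows the architecture of the original KKS argument (invariance reduction, matrix density of the truncation, change of variables, SVD, then a unitary-group integral), but two of your intermediate formulas are wrong, and they are wrong in ways that happen to cancel.

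First, the density of the $(n+\nu)\times n$ truncation of $U(m)$ with respect to Lebesgue measure is proportional to $\det(I_n-T''^{*}T'')^{m-2n-\nu}$, not $\det(I_n-T''^{*}T'')^{m-n-\nu-1}$. You can read this off from Proposition~\ref{PROPOSITIONJacobiEnsembles} at $\theta=1$: the weight on $1-x_j$ there is $m-2n-\nu$, and the Wishart-type Jacobian from $T''$ to $T''^{*}T''$ only contributes the factor $\det(T''^{*}T'')^{\nu}$. In particular this matrix density exists only for $m\ge 2n+\nu$, so the range $n+\nu+1\le m<2n+\nu$ is not covered by your argument and requires a separate continuation step which you do not supply.

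Second, the correct unitary integral --- and this is the genuine content of the theorem --- is
\[
\int_{U(n)}\det\bigl(\diag(x)-V\diag(y)V^{*}\bigr)_{+}^{\gamma}\,dV
\;=\;c_{n,\gamma}\,\frac{\det\bigl[(x_k-y_j)_+^{\gamma+n-1}\bigr]_{j,k=1}^{n}}{\triangle(x)\,\triangle(y)},
\]
with no extra $\prod_j x_j^{n-1}$. Your version fails already for $n=2$, $\gamma=0$: take $x=(1,2)$ and $y\to(0,0)$; the left side equals $1$, while your right side has numerator $\det[\mathbf{1}_{x_k\ge y_j}]=0$ identically for nearby $y$. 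With the correct exponent $\gamma=m-2n-\nu$ one obtains $\gamma+n-1=m-n-\nu-1$ inside the determinant and the factor $\prod_j x_j^{n-m}$ directly from the Jacobians, with nothing left over. That your two errors combine to the right final density suggests the intermediate steps were back-solved from the target rather than verified. Finally, calling the integral evaluation ``Weyl integration together with a Selberg/Dixon-type computation'' is a label, not a proof; in \cite{KieburgKuijlaarsStivigny} this identity is the heart of the argument and is obtained from a Gelfand--Na\u{\i}mark/Harish-Chandra determinantal formula.
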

\begin{rem}The conditions $l\leq m$ and $m\geq n+\nu+1$ in (\ref{ConditionsForIndexes}) are restrictions on the size of matrix $S$, and on the size of matrix $T$. These conditions ensure that $T$ is a proper truncation of $S$.  The condition $l\geq n$ enables to avoid nonrandom singular values.
\end{rem}
Proposition \ref{PROPOSITIONJacobiEnsembles} (with $\theta=1$) and Theorem \ref{THEOREMKKS} enable the study of the \textit{product matrix process} associated with truncated unitary matrices
as a determinantal process, see Borodin, Gorin, and Strahov  \cite{BorodinGorinStrahov}. Let $U_1$, $\ldots$, $U_p$ be independent Haar distributed unitary matrices.
We assume that the size of each matrix $U_j$, $1\leq j\leq p$, is equal to $m_j\times m_j$, and denote by $T_j$ the truncation of $U_j$ of size $\left(n+\nu_j\right)
\times\left(n+\nu_{j-1}\right)$. Let us agree that the positive integers $n$, $\nu_1$, $\ldots$, $\nu_p$ are chosen in such a way that the conditions
\begin{equation}\label{Condition1}
m_1\geq 2n+\nu_1,
\end{equation}
and
\begin{equation}\label{Condition2}
m_j\geq n+\nu_j+1,\;\; 2\leq j\leq p
\end{equation}
are satisfied. Also, we agree that $\nu_0=0$. Denote by $\left(x_1^k,\ldots,x_n^k\right)$ the vector of the squared singular values of the product matrix $T_k\ldots T_1$.
Configurations $\left\{\left(k,x_j^k\right)\biggl\vert k=1,\ldots,p; j=1,\ldots,n\right\}$  form a point process on $\left\{1,\ldots,p\right\}\times R_{>0}$
called the product matrix process associated with the truncated unitary matrices. Proposition \ref{PROPOSITIONJacobiEnsembles} (with $\theta=1$) and Theorem \ref{THEOREMKKS}
lead to representation of the joint probability distribution of $\left(x_1^k,\ldots,x_n^k\right)$ (where $1\leq k\leq p$) in terms of the product of determinants. \begin{thm}\label{THEOREMProductTruncatedProcess}
Consider the product matrix process associated with truncated unitary matrices.
The joint probability distribution of $\left(x_1^k,\ldots,x_n^k\right)$ is given by
\begin{equation}\label{TruncatedProcessJD}
\begin{split}
&\frac{1}{Z_{n,p}}
\triangle\left(x^p\right)\\
&\times\prod\limits_{r=1}^{p-1}\det\left[\left(x_j^{r+1}\right)^{\nu_{r+1}}\left(x_k^{r}-x_j^{r+1}\right)_+^{m_{r+1}-n-\nu_{r+1}-1}\left(x_k^{r}\right)^{n-m_{r+1}}\right]_{k,j=1}^n\\
&\times\det\left[w_k^{(1)}\left(x_j^1\right)\right]_{k,j=1}^n
dx^1\ldots dx^n,
\end{split}
\end{equation}
where $(x-y)_+=\max\left(0,x-y\right)$, the Vandermonde determinant $\triangle\left(x^p\right)$ is defined by
$\triangle\left(x^p\right)=\prod\limits_{1\leq i<j\leq n}\left(x_j^p-x_i^p\right)$,
for $1\leq l\leq p$ we write $dx^l=dx^l_1\ldots dx^l_n$, and
\begin{equation}
w_k^{(1)}\left(x\right)=\left\{
                          \begin{array}{ll}
                            x^{\nu_1+k-1}(1-x)^{m_1-2n-\nu_1}, & \hbox{if}\;\; 0<x<1, \\
                            0, & \hbox{otherwise.}
                          \end{array}
                        \right.
\end{equation}
The normalization constant $Z_{n,p}$ can be written explicitly as
\begin{equation}
Z_{n,p}=\frac{\prod\limits_{j=1}^n\Gamma\left(m_1-2n-\nu_1+j\right)\Gamma(j)
\prod\limits_{k=2}^{p}\left(\Gamma\left(m_k-n-\nu_k\right)\right)^n}{\prod\limits_{k=1}^{p}\prod\limits_{j_k=1}^{m_k-n-\nu_k}
\left(j_k+\nu_k\right)_n}.
\end{equation}
Here $(a)_m=a(a+1)\ldots (a+m-1)$ stands for the Pochhammer symbol.
\end{thm}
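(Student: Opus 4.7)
The plan is to prove Theorem \ref{THEOREMProductTruncatedProcess} by induction on the number of factors $p$, using Proposition \ref{PROPOSITIONJacobiEnsembles} (with $\theta=1$) for the base case and Theorem \ref{THEOREMKKS} for the inductive step. The key structural feature that makes the induction clean is that the Markov transition kernel from Theorem \ref{THEOREMKKS} contains the ratio $\triangle(y)/\triangle(x)$, which telescopes with the Vandermonde appearing at each stage.

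For the base case $p=1$, Proposition \ref{PROPOSITIONJacobiEnsembles} with $\theta=1$ gives the density of $(x_1^1,\ldots,x_n^1)$ as proportional to
\[
\triangle(x^1)^2\prod_{j=1}^n (x_j^1)^{\nu_1}(1-x_j^1)^{m_1-2n-\nu_1}.
\]
I would rewrite one Vandermonde factor as $\det[(x_j^1)^{k-1}]_{k,j=1}^n$ and absorb the product of weights into the columns, converting this expression to the claimed form $\triangle(x^1)\cdot\det[w_k^{(1)}(x_j^1)]_{k,j=1}^n$ (which agrees with \eqref{TruncatedProcessJD} when $p=1$).

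For the inductive step, I assume the formula holds for the joint distribution of $(x^1,\ldots,x^p)$. To pass from $p$ to $p+1$, I condition on $T_p\cdots T_1$ and apply Theorem \ref{THEOREMKKS} with $X = T_p\cdots T_1$, $T = T_{p+1}$, $l=n+\nu_p$, $m = m_{p+1}$, $\nu = \nu_{p+1}$. Because $T_{p+1}$ is independent of $T_p\cdots T_1$ by Haar invariance, the conditional density of $(x_1^{p+1},\ldots,x_n^{p+1})$ given $(x_1^p,\ldots,x_n^p)$ is the Markov transition kernel of Theorem \ref{THEOREMKKS}, namely
\[
\const\cdot\left(\prod_{j=1}^n (x_j^p)^{n-m_{p+1}}\right)\left(\prod_{j=1}^n (x_j^{p+1})^{\nu_{p+1}}\right)\det\left[(x_k^p - x_j^{p+1})_+^{m_{p+1}-n-\nu_{p+1}-1}\right]_{k,j=1}^n\frac{\triangle(x^{p+1})}{\triangle(x^p)}.
\]
Multiplying this kernel by the inductive joint density, the $\triangle(x^p)$ factor at the top of the induction cancels the $\triangle(x^p)$ in the denominator of the kernel, producing a fresh $\triangle(x^{p+1})$ at the new top level, and the transition kernel contributes precisely the $r=p$ term in the product over $r$ in \eqref{TruncatedProcessJD}. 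This yields the claimed form for $p+1$ factors.

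The only nontrivial bookkeeping will be the explicit normalization constant $Z_{n,p}$. Here the base constant $Z_{n,\Jacobi}^{(1)}$ from Proposition \ref{PROPOSITIONJacobiEnsembles} gives the $j=1$ piece, and each application of Theorem \ref{THEOREMKKS} contributes the implicit constant appearing there (depending on $n,m_{r+1},\nu_{r+1}$). The main technical obstacle will be identifying these transition constants explicitly and collecting the resulting Gamma and Pochhammer factors into the stated closed form; this amounts to computing the integral $\int_0^{\infty}\det\bigl[(x_k-y_j)_+^{m-n-\nu-1}y_j^{\nu}\bigr]\,dy$ in terms of a Cauchy-Binet expansion against powers of $x$, which reduces to beta integrals and gives the Pochhammer symbols $(j_k+\nu_k)_n$ in the denominator. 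Once this bookkeeping is done, the displayed formula for $Z_{n,p}$ follows by a direct telescoping.
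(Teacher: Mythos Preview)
Your proposal is correct and follows exactly the approach the paper indicates: the sentence preceding Theorem \ref{THEOREMProductTruncatedProcess} states that the result is obtained by combining Proposition \ref{PROPOSITIONJacobiEnsembles} (with $\theta=1$) and Theorem \ref{THEOREMKKS}, and the paper gives no further details. Your induction, with the base case coming from the Jacobi ensemble density rewritten as $\triangle(x^1)\det[w_k^{(1)}(x_j^1)]$ and the step coming from the Markov kernel of Theorem \ref{THEOREMKKS} via the telescoping of the Vandermonde ratio $\triangle(y)/\triangle(x)$, is precisely the intended argument; the normalization-constant bookkeeping you flag is indeed the only remaining work, and the paper omits it as well.
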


It was shown in Ref. \cite{BorodinGorinStrahov} that the Eynard-Mehta theorem can be applied to  the joint probability distribution of $\left(x_1^k,\ldots,x_n^k\right)$,
and that the correlation functions have the structure of a determinantal point process on $\{1,\ldots,p\}\times [0,1]$. This process
can be understood as a scaling limit of the Schur process, the observation relating the product matrix processes with the theory
of $q$-distributed plane partitions, see Ref. \cite{BorodinGorinStrahov} for details.

It is instructive to consider separately the distribution of the squared singular values of the total product matrix $Y=T_p\ldots T_1$.
Proposition \ref{PROPOSITIONJacobiEnsembles}  and Theorem \ref{THEOREMKKS}
can be used to derive a formula for this distribution. Namely, the following result was first obtained in  Kieburg, Kuijlaars, and Stivigny \cite{KieburgKuijlaarsStivigny}.
\begin{thm}\label{THEOREMKKSPRODUCTMATRIXSINGULARVALEUES}Let $U_1$, $\ldots$, $U_p$ be independent Haar distributed unitary matrices. Assume that the size of each $U_j$, $2\leq j\leq p$, is equal to $m_j\times m_j$, and denote by $T_j$ the truncation of $U_j$ of size $\left(n+\nu_j\right)\times\left(n+\nu_{j-1}\right)$. If conditions (\ref{Condition1}), (\ref{Condition2}) are satisfied, then
the squared singular values $\left(y_1,\ldots,y_n\right)$ of $Y=T_p\ldots T_1$ have the joint probability density
\begin{equation}
\frac{1}{Z_{n,p}}\prod\limits_{1\leq j<k\leq n}\left(y_k-y_j\right)\det\left[w_k^{(p)}\left(y_j\right)\right]_{k,j=1}^n,
\end{equation}
where $Z_{n,p}$ is a normalization constant,
and
\begin{equation}
w_k^{(j)}(y)=\int\limits_0^1x^{\nu_j}(1-x)^{m_j-n-\nu_j-1}w_k^{(j-1)}\left(\frac{y}{x}\right)\frac{dx}{x}
\end{equation}
for $j=2,\ldots,p$.
\end{thm}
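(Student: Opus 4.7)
The plan is to obtain this density by marginalizing the joint law of Theorem \ref{THEOREMProductTruncatedProcess} over the intermediate levels $x^1,\ldots,x^{p-1}$. The tool is the Andr\'eief (Cauchy--Binet) identity
$$
\int\det[f_k(z_j)]_{k,j=1}^n\det[g_k(z_j)]_{k,j=1}^n\,dz_1\cdots dz_n = n!\det\Bigl[\int f_k(z)g_j(z)\,dz\Bigr]_{k,j=1}^n,
$$
applied $p-1$ times in sequence.

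Write $K_{r+1}(x,y):=y^{\nu_{r+1}}(x-y)_+^{m_{r+1}-n-\nu_{r+1}-1}x^{n-m_{r+1}}$, so the density (\ref{TruncatedProcessJD}) is proportional to $\triangle(x^p)\prod_{r=1}^{p-1}\det[K_{r+1}(x_k^r,x_j^{r+1})]_{k,j=1}^n\det[w_k^{(1)}(x_j^1)]_{k,j=1}^n$. The only factors depending on $x^1$ are $\det[w_k^{(1)}(x_j^1)]_{k,j=1}^n$ and $\det[K_2(x_k^1,x_j^2)]_{k,j=1}^n$, so applying Andr\'eief with $f_k(z)=w_k^{(1)}(z)$ and $g_j(z)=K_2(z,x_j^2)$ integrates out $x^1$ and replaces those two determinants by $n!\det[w_k^{(2)}(x_j^2)]_{k,j=1}^n$, where $w_k^{(2)}(y)=\int w_k^{(1)}(x)K_2(x,y)\,dx$. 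Iterating at stages $r=2,\ldots,p-1$, one obtains after $p-1$ applications the marginal
$$
\frac{(n!)^{p-1}}{Z_{n,p}}\triangle(x^p)\det\bigl[w_k^{(p)}(x_j^p)\bigr]_{k,j=1}^n,
$$
which becomes the desired formula after absorbing $(n!)^{p-1}$ into the normalization constant and setting $y_j=x_j^p$.

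What remains is to verify that the recursion $w_k^{(j)}(y)=\int w_k^{(j-1)}(x)K_j(x,y)\,dx$ emerging from this computation agrees with the one in the statement. Unpacking the kernel gives $w_k^{(j)}(y)=\int_y^1 y^{\nu_j}(x-y)^{m_j-n-\nu_j-1}x^{n-m_j}w_k^{(j-1)}(x)\,dx$, and the change of variable $x\mapsto y/x$, together with the fact that $w_k^{(1)}$ vanishes outside $[0,1]$ (which propagates to all $w_k^{(j)}$), rewrites this integral as $\int_0^1 x^{\nu_j}(1-x)^{m_j-n-\nu_j-1}w_k^{(j-1)}(y/x)\,dx/x$, matching the theorem. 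The main obstacle is purely bookkeeping: one must ensure that the support conditions encoded by the truncation $(x-y)_+$ and by the vanishing of $w_k^{(1)}$ outside $[0,1]$ remain consistent through each iteration, and that all iterated integrals are finite. The latter is immediate since Theorem \ref{THEOREMProductTruncatedProcess} already guarantees integrability of the joint density.
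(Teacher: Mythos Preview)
Your proof is correct and follows precisely the route the paper indicates: the paper does not give a detailed proof of this theorem but states that it follows from Proposition~\ref{PROPOSITIONJacobiEnsembles} and Theorem~\ref{THEOREMKKS} (hence from the joint law of Theorem~\ref{THEOREMProductTruncatedProcess}), citing Kieburg--Kuijlaars--Stivigny for the original derivation. Your marginalization via iterated Andr\'eief identities and the change of variable $x\mapsto y/x$ is exactly the standard way to fill in this step, and your bookkeeping of the support conditions and the recursion for $w_k^{(j)}$ is correct.
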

\begin{rem}
The function $w_k^{(l)}(x)$ can be expressed as a Meijer $G$-function,
\begin{equation}\label{TheWeightFunctions}
\begin{split}
&w_k^{(l)}(x)=c_lG_{l,l}^{l,0}\left(\begin{array}{cccc}
                                     m_l-n, & \ldots, & m_2-n, & m_1-2n+k \\
                                     \nu_l, & \ldots, & \nu_2, & \nu_1+k-1
                                   \end{array}
\biggl|x\right)\\
&=\frac{c_l}{2\pi i}\int\limits_C\frac{\Gamma\left(\nu_1+k-1+s\right)\prod_{j=2}^l\Gamma\left(\nu_j+s\right)}{\Gamma\left(m_1-2n+k+s\right)\prod_{j=2}^l
\Gamma\left(m_j-n+s\right)}x^{-s}ds,\;\;\; 0<x<1.
\end{split}
\end{equation}
In this formula $C$ denotes a positively oriented contour in the complex $s$-plane that starts and ends at $-\infty$ and encircles
the negative real axis. The constant $c_l$ in the formula for $w_k^{(l)}(x)$ can be written as
\begin{equation}\label{cl}
c_l=\Gamma\left(m_1-2n-\nu_1+1\right)\prod\limits_{j=2}^l\Gamma\left(m_j-n-\nu_j\right).
\end{equation}
\end{rem}
\subsection{Main results}
In this paper we obtain the extensions of the Kieburg-Kuijlaars-Stivigny theorem (Theorem \ref{THEOREMKKS}), and of  Theorems \ref{THEOREMProductTruncatedProcess} and \ref{THEOREMKKSPRODUCTMATRIXSINGULARVALEUES} to the products with truncations of Haar distributed orthogonal and symplectic matrices. As a consequence, we show that the squared singular values of certain products of two truncated symplectic matrices, where one factor acts as a ``rank $1$'' perturbation (described further below), is a Pfaffian point process and explicitly derive the correlation kernel.

We begin with the formula (which will be proved in Section \ref{SectionProofFirstMainTheorem})  for the distribution of the squared singular values of a fixed matrix multiplied with a truncation of a Haar distributed unitary, orthogonal, or symplectic matrix.
\begin{thm} \label{THEOREMMarkovKernel}
Assume that $S$ is a Haar distributed matrix taken from the unitary group $U(m)$, the orthogonal group $O(m)$, or from the symplectic group $Sp(2m)$.
Let $T$ be a $(n+\nu)\times l$ truncation of $S$, and assume that the conditions
\begin{equation}\label{ConditionsForIndexes1}
1\leq n\leq l\leq m,\;\; m= n+\nu+1
\end{equation}
are satisfied. Let $X$ be  a non-random matrix of size $l\times n$ with the squared singular values $\left(x_1,\ldots,x_n\right)$ such that
$$
0<x_1<\ldots<x_n<1.
$$
Then the joint distribution of the ordered eigenvalues $\left(y_1,\ldots,y_n\right)$ of $(TX)^*(TX)$ is given by the probability measure
\begin{equation}\label{TheJointDistributionTwoMatrices}
\frac{1}{\left(\Gamma(\theta)\right)^n}\frac{\Gamma\left(\theta\left(\nu+n+1\right)\right)}{\Gamma\left(\theta(\nu+1)\right)}
\left(\frac{\triangle(y)}{\triangle(x)}\right)^{\theta}\left(\det\left[\frac{1}{x_i-y_j}\right]_{i,j=1}^n\right)^{1-\theta}
\prod\limits_{i=1}^n\frac{y_i^{\theta(\nu+1)-1}}{x_i^{\theta(\nu+2)-1}}dy_i
\end{equation}
supported in
$$
0\le y_1\le x_1\le \ldots\le y_n\le x_n<1.
$$
Here $\theta=1$ corresponds to the multiplication with a truncation of a Haar distributed unitary matrix, $\theta=\frac{1}{2}$ corresponds to the multiplication with a truncation of a Haar distributed orthogonal matrix, and $\theta=2$ corresponds to the multiplication with a truncation of a Haar distributed symplectic matrix.
\end{thm}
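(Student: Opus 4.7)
The plan is to exploit a rank-one perturbation structure that the condition $m = n+\nu+1$ forces upon $T^{*}T$. By unitary invariance of the problem, I may assume $X$ has the form $(\sqrt{D}\;\;0)^{\top}$ with $D = \diag(x_1,\ldots,x_n)$ stacked above $l-n$ zero rows; by Haar invariance of $S$, I may assume $l = m$ (the case $l<m$ being subsumed after padding $X$ with zero rows). Orthonormality of the rows of $S$ gives $TT^{*} = I_{n+\nu}$, equivalently
\[
T^{*}T = I_m - ww^{*},
\]
where $w$ is a unit vector uniformly distributed on the sphere in $F^{n+\nu+1}$; here $F = \R, \C, \mathbb{H}$ for $\theta = 1/2, 1, 2$ respectively. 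Consequently
\[
(TX)^{*}(TX) = D - zz^{*}, \qquad z_i = \sqrt{x_i}\, w_i,
\]
a rank-one Hermitian perturbation of $D$ (rank-two in the symplectic case, respecting the quaternion structure). Cauchy's interlacing theorem then yields the interlacing support in (\ref{TheJointDistributionTwoMatrices}).

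Next I would derive the change of variables from $(u_1,\ldots,u_n) := (|w_1|^{2},\ldots,|w_n|^{2})$ to the eigenvalues $(y_1,\ldots,y_n)$ of $D - zz^{*}$ via the secular equation $\sum_i x_i u_i/(x_i - y) = 1$. Standard partial-fraction manipulations give
\[
|z_i|^{2} = -\frac{\prod_{j=1}^{n}(x_i - y_j)}{\prod_{j \ne i}(x_j - x_i)}, \qquad 1 - \sum_i \frac{|z_i|^{2}}{x_i} = \frac{\prod_j y_j}{\prod_j x_j},
\]
and the Jacobian of $(|z_i|^{2}) \mapsto (y_i)$ collapses, up to sign, to $\Delta(y)/\Delta(x)$.

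For the probabilistic input, $(u_1,\ldots,u_n)$ is Dirichlet distributed with density
\[
\frac{\Gamma(\theta(n+\nu+1))}{\Gamma(\theta)^{n}\Gamma(\theta(\nu+1))}\prod_i u_i^{\theta-1}\Bigl(1 - \sum_i u_i\Bigr)^{\theta(\nu+1)-1},
\]
valid uniformly in $\theta \in \{1/2,1,2\}$ since $F$ has real dimension $2\theta$; the normalization already matches the prefactor of (\ref{TheJointDistributionTwoMatrices}). Combining the Dirichlet density with the Jacobians $\prod_i 1/x_i$ (for $u \mapsto |z|^{2}$) and $\Delta(y)/\Delta(x)$ (for $|z|^{2} \mapsto y$), and using the Cauchy identity
\[
\det\!\left[\frac{1}{x_i - y_j}\right]_{i,j=1}^{n} = \frac{\Delta(x)\,\Delta(y)}{\prod_{i,j}(x_i - y_j)}
\]
to rewrite $\prod_i |z_i|^{2(\theta-1)}$ as $(\Delta(y)/\Delta(x))^{\theta-1}\det[(x_i-y_j)^{-1}]^{1-\theta}$, I recover exactly (\ref{TheJointDistributionTwoMatrices}); the $y_i$-exponent collects as $\theta(\nu+1)-1$ and the $x_i$-exponent as $\theta+\theta(\nu+1)-1 = \theta(\nu+2)-1$. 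For $\theta = 1$ the Cauchy-determinant factor is trivial and the argument recovers the degenerate $m = n+\nu+1$ case of Theorem \ref{THEOREMKKS}.

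The main obstacle is the symplectic case $\theta=2$: there $w$ is a quaternionic unit vector, and under the standard $2 \times 2$ complex representation of $\mathbb{H}$ the perturbation is formally rank-two over $\C$, coupled by the quaternion structure, with each quaternion eigenvalue corresponding to a double complex eigenvalue. Verifying that this still reduces cleanly to the $n$-variable scalar secular equation, and that the Dirichlet exponent $\theta=2$ emerges correctly from the uniform measure on the quaternionic sphere, requires careful bookkeeping in the $2 \times 2$ representation. A conceptually cleaner alternative, consistent with the paper's framework, is to recognize (\ref{TheJointDistributionTwoMatrices}) as the one-step transition kernel of the Jack-polynomial ($\theta$-deformed) Gelfand--Tsetlin corners process, whose interlacing structure and Selberg-type normalization follow from the coherence relation of a Macdonald process specialized to the appropriate Gelfand pair.
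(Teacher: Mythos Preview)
Your argument is correct and complete (modulo a harmless sign in the formula for $|z_i|^2$), but it is a genuinely different proof from the paper's. The paper never touches the secular equation or the Dirichlet law on the sphere: it realises the transition density as the $\epsilon\to 0$ degeneration of the single-variable Macdonald branching kernel
\[
K_{\Markov}(\lambda,\mu;A,B)=\frac{1}{\Pi(A;B)}\,\frac{Q_\lambda(B)}{Q_\mu(B)}\,P_{\lambda/\mu}(t^{\nu+1}),
\qquad A=(t^{\nu+1}),\ B=(1,t,\dots,t^{n-1}),
\]
first proving (via zonal spherical functions and a Selberg integral) that this kernel has the right Jack-polynomial moments, and then computing the explicit $q,t\to 1$ asymptotics of $Q_\lambda(B)$, $b_\lambda$, and $\psi_{\lambda/\mu}(t^{\nu+1})$ to read off the density. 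Your route is the classical rank-one perturbation calculation (the Dixon--Anderson conditional density in disguise): $T^{*}T=I-ww^{*}$, $(TX)^{*}(TX)=D-zz^{*}$, then the secular equation, partial fractions, and the Dirichlet law for $(|w_1|^2,\dots,|w_n|^2)$.

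What each approach buys: your argument is entirely elementary and self-contained, needs no symmetric function theory, and makes the interlacing and the normalisation transparent; it is, however, intrinsically tied to $\theta\in\{\tfrac12,1,2\}$ because it relies on an honest matrix model. The paper's Macdonald-degeneration proof is heavier but is what actually drives the rest of the article: it immediately gives the extension to arbitrary $\theta>0$ (Section~\ref{SECarbitrarybeta}), feeds the Jack-function density of Theorem~\ref{THEOREMDENSITYJACKREPRESENTATION} through the index--variable duality~\eqref{DUALITY}, and situates the kernel inside the Macdonald-process framework used for Theorem~\ref{THEOREMProductTruncatedProcessTHETA}.

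Your stated obstacle in the symplectic case is not a real gap. Over $\mathbb H$, the matrix $D-zz^{*}$ is quaternion-Hermitian with $D$ real diagonal; writing $z_i=|z_i|\,q_i$ with $|q_i|=1$ and conjugating by the unitary diagonal $Q=\diag(q_i)$ gives $Q^{*}(D-zz^{*})Q=D-\tilde z\tilde z^{\top}$ with $\tilde z_i=|z_i|\in\R$, a real symmetric rank-one perturbation. The $n$ real eigenvalues (each of complex multiplicity two under the $2\times 2$ embedding) therefore satisfy exactly the same scalar secular equation $\sum_i |z_i|^2/(x_i-y)=1$, and the Dirichlet parameters come out to $\theta=2$ because $\dim_{\R}\mathbb H=4$. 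No extra bookkeeping is needed beyond this one conjugation.
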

\begin{rem}For the case of the orthogonal group  many ensembles  were studied in multivariate statistics,
see Muirhead \cite{Muirhead}. However, to the best of our knowledge  products of truncated  orthogonal matrices were not considered previously, and  our Theorem \ref{THEOREMMarkovKernel} is a new result.
\end{rem}
\begin{rem}
The condition $m = n + \nu + 1$ means that the number of rows of $T$ is one less than that of the ambient unitary matrix $U$. As a result, $T$ behaves as a ``rank-one'' perturbation which is reflected by the support satisfying the interlacing condition.
\end{rem}
\begin{rem}
(a) The joint distribution of the eigenvalues $\left(y_1,\ldots,y_n\right)$ can be also given by the probability measure
\begin{equation}\label{TheJointDistributionTwoMatrices1}
\frac{1}{\left(\Gamma(\theta)\right)^n}\frac{\Gamma\left(\theta\left(\nu+n+1\right)\right)}{\Gamma\left(\theta(\nu+1)\right)}\prod\limits_{i,j=1}^n|x_i-y_j|^{\theta-1}
\frac{\triangle(y)}{\left(\triangle(x)\right)^{2\theta-1}}
\det\left[\left(\left(x_k-y_j\right)_+\right)^{0}\right]_{j,k=1}^n
\prod\limits_{i=1}^n\frac{y_i^{\theta(\nu+1)-1}}{x_i^{\theta(\nu+2)-1}}dy_i
\end{equation}
supported in $[0,1]^n$. Indeed, for $0<y_1<\ldots<y_n<1$ and $0<x_1<\ldots<x_n<1$ it is not hard to check that
\begin{equation}
\det\left[\left(\left(x_k-y_j\right)_+\right)^{0}\right]_{j,k=1}^n=\left\{
                                                                     \begin{array}{ll}
                                                                       1, & \hbox{if}\;\;0<y_1<x_1<\ldots<y_n<x_n<1, \\
                                                                       0, & \hbox{otherwise.}
                                                                     \end{array}
                                                                   \right.
\end{equation}
Taking this into account, and applying the formula for the Cauchy determinant, we obtain (\ref{TheJointDistributionTwoMatrices}) from (\ref{TheJointDistributionTwoMatrices1}).\\
(b) We see that if $\theta=1$ (which corresponds to the multiplication with a truncation of a Haar distributed unitary matrix), then
the probability measure (\ref{TheJointDistributionTwoMatrices1}) turns into that given by the Kieburg-Kuijlaars-Stivigny theorem (see Theorem \ref{THEOREMKKS}) with
$m=n+\nu+1$.\\
(c) If $m>n+\nu+1$, then the methods of the present paper cannot be applied, and the extension of the Kieburg-Kuijlaars-Stivigny theorem
to truncations of orthogonal or symplectic matrices is an open question.
\end{rem}
In the same way as in the case of unitary matrices we introduce the product matrix processes associated with truncated orthogonal and truncated symplectic matrices.
Let $S_1$, $\ldots$, $S_p$ be independent Haar distributed symplectic or orthogonal matrices. As in the case of unitary matrices (see Section \ref{SectionKKSTheoremAndItsConsequences}) assume that the size of each matrix $S_j$, $1\leq j\leq p$, is equal to $m_j\times m_j$, and denote by
$T_j$ the $\left(n+\nu_j\right)\times\left(n+\nu_{j-1}\right)$ truncation of $S_j$. Let $\left(x_1^k,\ldots,x_n^k\right)$ be the vector of the squared singular values of
$T_k\ldots T_1$. We will refer to the point process on $\left\{1,\ldots,p\right\}\times\R_{>0}$ formed by configurations
$\left\{\left(k,x_j^k\right)\biggl\vert k=1,\ldots,p; j=1,\ldots,n\right\}$ as to the \textit{product matrix process with truncated symplectic/orthogonal matrices}.
\begin{thm}\label{THEOREMProductTruncatedProcessTHETA}Consider the product matrix process with truncated unitary matrices ($\theta=1$), or with truncated orthogonal matrices $\left(\theta=\frac{1}{2}\right)$, or with
truncated symplectic matrices $\left(\theta=2\right)$, and suppose that
the parameters $n$, $\nu_1$, $\ldots$, $\nu_p$ satisfy the conditions
\begin{equation}\label{c1}
m_1\geq 2n+\nu_1,
\end{equation}
and
\begin{equation}\label{c2}
m_k=n+\nu_k+1,\;\; 2\leq k\leq p.
\end{equation}
Let us agree that $\nu_0=0$.
Then the  joint probability distribution of $\left(x_1^k,\ldots,x_n^k\right)$ (where $1\leq k\leq p$) is given by the probability measure
\begin{equation}\label{ProductTruncatedProcessTHETA}
\begin{split}
&\frac{1}{Z_{n,p,\theta}}\left(\triangle\left(x^{p}\right)\right)^{\theta}
\prod\limits_{k=2}^p\biggl\{\left(\det\left(\frac{1}{x_i^{k-1}-x_j^{k}}\right)_{i,j=1}^n\right)^{1-\theta}\prod\limits_{i=1}^n
\frac{\left(x_i^{k}\right)^{\theta\left(\nu_k+1\right)-1}}{\left(x_i^{k-1}\right)^{\theta\left(\nu_k+2\right)-1}}\biggr\}\\
&\times\left(\triangle\left(x^{1}\right)\right)^{\theta}
\prod\limits_{i=1}^n\left(x_i^{1}\right)^{\theta\left(\nu_1+1\right)-1}\left(1-x_i^{1}\right)^{\theta\left(m_1-2n-\nu_1+1\right)-1}
\prod\limits_{k=1}^pdx^{k}
\end{split}
\end{equation}
supported in point configurations satisfying the condition
\begin{equation}
0\le x_1^k\le x_1^{k-1}\le\ldots\le x_n^{k}\le x_n^{k-1}\le 1
\end{equation}
for $2\leq k\leq p$. Here for $1\leq k\leq p$  we write $dx^k=dx^k_1\ldots dx^k_n$, and $Z_{n,p,\theta}$ denote the normalization
constant given by the formula
\begin{equation}\label{Znptheta}
Z_{n,p,\theta}=\left(\Gamma(\theta)\right)^{np}\prod\limits_{k=2}^p\frac{\Gamma\left(\theta\left(\nu_k+1\right)\right)}{\Gamma\left(\theta\left(\nu_k+n+1\right)\right)}
\prod\limits_{j=\nu_1+1}^{m_1-n}\frac{\Gamma(\theta j)}{\Gamma(\theta(j+n))}
\prod\limits_{j=1}^n\frac{\Gamma\left(\theta\left(m_1-2n-\nu_1+j\right)\right)\Gamma(\theta j)}{\Gamma(\theta)^2}.
\end{equation}
\end{thm}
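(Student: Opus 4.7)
My approach is to exploit the independence of the truncation matrices $T_1,\ldots,T_p$ to establish a Markov chain structure on the sequence of squared singular value vectors $x^1,\ldots,x^p$, and then compose the one-step transition kernel furnished by Theorem \ref{THEOREMMarkovKernel} with the initial Jacobi distribution from Proposition \ref{PROPOSITIONJacobiEnsembles}. More precisely, I would decompose the joint law as
\begin{equation*}
p(x^1,\ldots,x^p) \;=\; p(x^1)\,\prod_{k=2}^{p} p(x^k\mid x^{k-1}),
\end{equation*}
where $p(x^1)$ is the $P_{n,\Jacobi}^{(\theta)}$ density (Proposition \ref{PROPOSITIONJacobiEnsembles}, valid because condition (\ref{c1}) is exactly the hypothesis of that proposition), and $p(x^k\mid x^{k-1})$ for $2\le k\le p$ is the transition kernel from Theorem \ref{THEOREMMarkovKernel}, since condition (\ref{c2}) $m_k=n+\nu_k+1$ is precisely the ``rank $1$-perturbative'' hypothesis $m=n+\nu+1$ with parameters $(\nu,l)=(\nu_k,n+\nu_{k-1})$.

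The main point to justify is the Markov property: the conditional law of $x^k$ given $(x^1,\ldots,x^{k-1})$ should depend only on $x^{k-1}$, and should coincide with the kernel from Theorem \ref{THEOREMMarkovKernel} applied to an arbitrary deterministic matrix with squared singular values $x^{k-1}$. I would argue this using the singular value decomposition $T_{k-1}\cdots T_1 = U\Sigma V^{*}$ (of the appropriate real/complex/quaternionic type for the symmetry class), noting that $T_k(T_{k-1}\cdots T_1)=(T_kU)\Sigma V^{*}$ has the same singular values as $(T_kU)\Sigma$. The truncation $T_k$ of a Haar matrix $S_k$ is right-invariant under multiplication by any $U$ of matching size extended by the identity in the ambient group $U(m_k)$, $O(m_k)$, or $Sp(2m_k)$, and this invariance is preserved after conditioning on the independent data $T_1,\ldots,T_{k-1}$. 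Consequently $T_kU\stackrel{d}{=}T_k$ conditional on the sigma algebra of the past, and the conditional law of $x^k$ given $(x^1,\ldots,x^{k-1})$ is the distribution of the squared singular values of the random product $T_k\Sigma$, which is exactly what Theorem \ref{THEOREMMarkovKernel} computes with deterministic $X=\Sigma$.

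It then remains to assemble the formula. Substituting $(\nu,x,y,dy)\mapsto(\nu_k,x^{k-1},x^k,dx^k)$ in (\ref{TheJointDistributionTwoMatrices}) and multiplying over $k=2,\ldots,p$, the Vandermonde ratios telescope,
\begin{equation*}
\prod_{k=2}^{p}\left(\frac{\triangle(x^k)}{\triangle(x^{k-1})}\right)^{\theta} \;=\; \frac{\triangle(x^p)^{\theta}}{\triangle(x^1)^{\theta}},
\end{equation*}
and after multiplying by the Jacobi density for $x^1$, which contributes $\triangle(x^1)^{2\theta}$, one recovers the prefactor $\triangle(x^p)^{\theta}\triangle(x^1)^{\theta}$ displayed in (\ref{ProductTruncatedProcessTHETA}). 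The Cauchy determinants, power factors, and interlacing supports combine as stated directly, and the normalization constant $Z_{n,p,\theta}$ arises as the product of $Z_{n,\Jacobi}^{(\theta)}$ from (\ref{ZJacobi}) with the step-by-step constants $(\Gamma(\theta))^n\,\Gamma(\theta(\nu_k+1))/\Gamma(\theta(\nu_k+n+1))$ for $k=2,\ldots,p$; simplifying yields exactly (\ref{Znptheta}). The principal difficulty is the Markov property step in the orthogonal and symplectic cases, where one must verify that the relevant SVD is compatible with the real or quaternionic structure so that the right-multiplying factor $U$ lies in the correct classical compact group, thereby legitimizing the right-invariance of $T_k$; once this is pinned down, the remainder of the argument is an essentially mechanical computation using Theorems \ref{THEOREMMarkovKernel} and Proposition \ref{PROPOSITIONJacobiEnsembles} together with telescoping.
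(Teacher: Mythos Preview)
Your proposal is correct and follows essentially the same approach as the paper: the paper's proof simply says to apply Theorem~\ref{THEOREMMarkovKernel} iteratively starting from the Jacobi density of Proposition~\ref{PROPOSITIONJacobiEnsembles}, and you have spelled out this composition (including the telescoping of Vandermonde factors and the assembly of the normalization constant) in greater detail. Your discussion of the Markov property via the SVD and right-invariance of $T_k$ is exactly the content of Lemma~\ref{LEMswitch_product} in the paper, which underpins the applicability of Theorem~\ref{THEOREMMarkovKernel} at each step; so the ``principal difficulty'' you identify is already handled there and need not be reproved.
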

Theorem \ref{THEOREMProductTruncatedProcessTHETA} will be proved in Section \ref{SectionSecondMainTheorem}.
Theorem \ref{THEOREMProductTruncatedProcessTHETA} can be used to derive the distribution of  squared singular values for a product matrix
formed by truncated orthogonal or symplectic matrices. Indeed, in order to obtain the distribution of $\left(x_1^p,\ldots,x_n^p\right)$
in Theorem \ref{THEOREMProductTruncatedProcessTHETA} it is enough to integrate the joint probability distribution (\ref{ProductTruncatedProcessTHETA}) over all configurations
$\left(x_1^p,\ldots,x_n^p\right)$ with $1\leq k\leq p-1$. Such an integration can be performed due to certain identities equivalent to those derived by Dixon \cite{Dixon}.
The result is the following Theorem which will be proved in Section \ref{SECdixon}.
\begin{thm}\label{THEOREMsvproduct}
Let $X=T_p\ldots T_1$, where each $T_j$ is the $\left(n+\nu_j\right)\times\left(n+\nu_{j-1}\right)$ truncation of a Haar distributed unitary, orthogonal, or symplectic matrix $S_j$ of size $m_j\times m_j$, $1\leq j\leq p$. Assume that conditions (\ref{c1}) and (\ref{c2}) are satisfied. In addition, assume that $\nu_{p+1}-\nu_1+1>0$.
Then the squared singular values
$\left(x_1,\ldots,x_n\right)$ of $X$ have the joint density
\begin{equation}\label{svproduct}
\frac{1}{Z_{n,p,\theta}W_{n,p}}\left(\triangle\left(x_1,\ldots,x_n\right)\right)^{2\theta}
I_{m_1;\nu_1,\ldots,\nu_p}^{\theta,n,p}\left(x_1,\ldots,x_n\right)\prod\limits_{i=1}^n\left(1-x_i\right)^{\theta\left(m_1-2n-\nu_1+p\right)-1}x_i^{\theta\left(\nu_1+1\right)-1},
\end{equation}
where  $\theta=1$ in case $T_1$,$\ldots$, $T_p$ are truncated unitary matrices, $\theta=\frac{1}{2}$ in  case $T_1$,$\ldots$, $T_p$ are truncated orthogonal matrices, or $\theta=2$ in case $T_1$,$\ldots$, $T_p$ are truncated symplectic matrices. The constant $Z_{n,p,\theta}$ is defined by equation (\ref{Znptheta}), and the constant
$W_{n,p}$ is given by
\begin{equation}
W_{n,p}=\prod\limits_{r=2}^p\frac{\Gamma\left(\theta\left(m_1-n-\nu_r\right)
\Gamma\left(\theta\left(\nu_r-\nu_1+1\right)\right)\right)}{\Gamma(\theta)^{n-r+2}\Gamma\left(\theta\left(m_1-2n-\nu_1+r-1\right)\right)}.
\end{equation}
The function $I_{m_1;\nu_1,\ldots,\nu_p}^{\theta,n,p}\left(x_1,\ldots,x_n\right)$ in equation (\ref{svproduct}) has the following integral representation
\begin{equation}
\begin{split}
&I_{m_1;\nu_1,\ldots,\nu_p}^{\theta,n,p}\left(x_1,\ldots,x_n\right)\\
&=\int\limits_{\Omega_1}dv_{1,1}\underset{\Omega_2}{\int\int}dv_{2,1}dv_{2,2}
\ldots\underset{\Omega_{p-1}}{\int\ldots\int}dv_{p-1,1}\ldots dv_{p-1,p-1}\triangle\left(v_{p-1,1},\ldots,v_{p-1,p-1}\right)\\
&\times\prod\limits_{r=1}^{p-2}\biggl\{
\left(\triangle\left(v_{r,1},\ldots,v_{r,r}\right)\right)^{2(1-\theta)}
\prod\limits_{i=1}^{r+1}\prod\limits_{j=1}^r
\left|v_{r+1,i}-v_{r,j}\right|^{\theta-1}
\prod\limits_{i=1}^r\left(v_{r,i}\right)^{\theta\left(\nu_r-\nu_{r+1}-1\right)}
\biggr\}\\
&\times\prod\limits_{i=1}^{p-1}\biggl\{\left(1-v_{p-1,i}\right)^{-\theta\left(m_1-2n-\nu_1+p-1\right)}
v_{p-1,i}^{\theta\left(\nu_p-\nu_1+1\right)-1}\prod\limits_{j=1}^n\left|x_j-v_{p-1,i}\right|^{-\theta}\biggr\},
\end{split}
\end{equation}
where the integration region $\Omega_1$ is defined by
\begin{equation}
\Omega_1=\left\{-\infty<v_{1,1}<0\right\},
\end{equation}
and the integration regions $\Omega_r$, $2\leq r\leq p-1$, are defined by
\begin{equation}
\Omega_r=\left\{-\infty<v_{r,1}<v_{r-1,1}<v_{r,2}<v_{r-1,2}<v_{r,3}<\ldots<v_{r,r-1}<v_{r-1,r-1}<v_{r,r}<0\right\}.
\end{equation}
\end{thm}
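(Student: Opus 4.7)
The plan is to derive the marginal density of $(x_1^p,\ldots,x_n^p)$ from the joint distribution in Theorem \ref{THEOREMProductTruncatedProcessTHETA} by integrating out all intermediate configurations $(x_1^k,\ldots,x_n^k)$ for $k=1,\ldots,p-1$. As a preliminary step I would apply the Cauchy determinant identity
$$\det\left[\frac{1}{x_i^{k-1}-x_j^k}\right]_{i,j=1}^n = \pm\frac{\triangle(x^{k-1})\triangle(x^k)}{\prod_{i,j}(x_i^{k-1}-x_j^k)},$$
which is valid on the interlacing support, so that each factor under the product in (\ref{ProductTruncatedProcessTHETA}) becomes an explicit product of Vandermondes, cross-differences $\prod_{i,j}|x_i^{k-1}-x_j^k|^{\theta-1}$, and monomial weights. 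The integrand is then in a form amenable to iterated interlacing integration.

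The central step is an iterative application of a generalized Dixon-type integral identity. At each integration step, I would evaluate an integral of the form
$$\int_{\Omega}\triangle(t)^{2(1-\theta)}\prod_{i,j}|t_i-s_j|^{\theta-1}\prod_i w(t_i)\,dt$$
on an interlacing domain $\Omega$, where $w$ is a monomial weight. The exponent $2(1-\theta)$ on the auxiliary Vandermonde is characteristic of Dixon-type dimension reduction in the $\theta$-deformed setting, and accounts for the factors $\triangle(v_{r,\cdot})^{2(1-\theta)}$ appearing in the integrand of $I_{m_1;\nu_1,\ldots,\nu_p}^{\theta,n,p}$. Each application replaces an $n$-dimensional $x^k$-integration by an integration over new auxiliary variables $v_{r,1},\ldots,v_{r,r}$, whose interlacing with the previous level of auxiliary variables builds up the nested domains $\Omega_1,\ldots,\Omega_{p-1}$ in the statement. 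The accumulated Vandermondes and cross-differences assemble into the $\triangle(x_1,\ldots,x_n)^{2\theta}$ prefactor together with the coupling $\prod_{i,j}|x_j-v_{p-1,i}|^{-\theta}$, while the tracked gamma-function prefactors collect into $W_{n,p}$.

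The main obstacle is establishing the precise Dixon-type identity at each integration step and verifying that its exponents and convergence conditions match the structure inherited from Theorem \ref{THEOREMProductTruncatedProcessTHETA}. The first integration ($k=1$) deserves special attention: the Jacobi-type weight $x^{\theta(\nu_1+1)-1}(1-x)^{\theta(m_1-2n-\nu_1+1)-1}$ integrated against an interlacing-type kernel on $(0,1)^n$ produces an auxiliary variable $v_{1,1}$ on $(-\infty,0)$, reflecting a contour or analytic-continuation reading of Dixon's original identity, and explaining why $\Omega_1=\{v_{1,1}<0\}$ sits outside $(0,1)$. The hypothesis $\nu_{p+1}-\nu_1+1>0$ in the statement should be precisely the integrability condition for the innermost auxiliary variables near their inner boundary, while $m_1\geq 2n+\nu_1$ from (\ref{c1}) ensures convergence near the right endpoint. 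Once the single-step identity is set up correctly, the full result follows by induction on $p$, with the nested $\Omega_r$ structure emerging naturally from composition.
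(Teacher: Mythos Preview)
Your approach is essentially the paper's: induction on $p$ via iterated application of a Dixon-type identity (the paper cites the explicit form from Dixon and its $b_0\to-\infty$ corollary), converting each $n$-fold integral over the intermediate configuration into a lower-dimensional integral over auxiliary variables $v_{r,\cdot}$ living on $(-\infty,0)$.

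One imprecision is worth flagging. In your displayed ``single-step'' integral you write the integration variable with weight $\triangle(t)^{2(1-\theta)}$. That exponent belongs to the \emph{output} of the Dixon step (the $v_{r,\cdot}$ variables in $I^{\theta,n,p}$), not to the \emph{input}. When you apply the Markov kernel of Theorem~\ref{THEOREMMarkovKernel} (rewritten via the Cauchy determinant as $\triangle(y^{(p+1)})/\triangle(y^{(p)})^{2\theta-1}\cdot\prod_{i,j}|y_i^{(p)}-y_j^{(p+1)}|^{\theta-1}\cdots$) to the inductive density \eqref{svproduct}, the $\triangle(y^{(p)})^{2\theta}$ from the latter combines with $\triangle(y^{(p)})^{-(2\theta-1)}$ from the kernel to leave a \emph{single} Vandermonde $\triangle(y^{(p)})$ in the integrand. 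This is exactly what the classical Dixon identity requires: an $n$-fold integral with a plain $\prod_{i<j}|x_j-x_i|$ against products of the form $\prod_j|a_j-x_i|^{\alpha_j-1}\prod_j|b_j-x_i|^{-\beta_j}$, with the $a_j$'s being $y^{(p+1)}_1,\ldots,y^{(p+1)}_n,1$ and the $b_j$'s being the existing $v_{p-1,\cdot}$, $0$, and $-\infty$. The paper's parameter choices $\alpha_0=\cdots=\alpha_{n-1}=\theta$, $\alpha_n=\theta(m_1-2n-\nu_1+p)$, $\beta_p=\theta(\nu_{p+1}-\nu_1+1)$, $\beta_0=\theta(m_1-n-\nu_{p+1})$ make the balancing condition $\sum\alpha_i=\sum\beta_j$ automatic and show that your integrability hypotheses are exactly the positivity of $\beta_0$ and $\beta_p$. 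With this corrected form, your induction goes through as you describe.
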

\begin{rem}
Assume that $p=2$ (the case corresponding to the product of two matrices). Then the function $I_{m_1;\nu_1,\ldots,\nu_p}^{\theta,n,p}\left(x_1,\ldots,x_n\right)$
takes the form
\begin{equation}
I_{m_1;\nu_1,\nu_2}^{\theta,n,p=2}\left(x_1,\ldots,x_n\right)
=\int\limits_{-\infty}^0dv(1-v)^{-\theta\left(m_1-2n-\nu_1+1\right)}
v^{\theta\left(\nu_2-\nu_1+1\right)-1}\prod\limits_{j=1}^n\left|x_j-v\right|^{-\theta},
\end{equation}
and the density of the distribution of the squared singular values $\left(x_1,\ldots,x_n\right)$ of $X=T_2T_1$ is proportional to
\begin{equation}
\begin{split}
&\left(\triangle\left(x_1,\ldots,x_n\right)\right)^{2\theta}
\prod\limits_{i=1}^n\left(1-x_i\right)^{\theta\left(m_1-2n-\nu_1+2\right)-1}x_i^{\theta\left(\nu_1+1\right)}\\
&\times\int\limits_{0}^{+\infty}dv(1+v)^{-\theta\left(m_1-2n-\nu_1+1\right)}
v^{\theta\left(\nu_2-\nu_1+1\right)-1}\prod\limits_{i=1}^n\left|x_i+v\right|^{-\theta}.
\end{split}
\end{equation}
\end{rem}
We have found an alternative representation for the distribution of the squared singular values of the total product matrix $X=T_p\ldots T_1$.
Namely, under certain
additional restrictions the joint singular value density of a product of truncated unitary, symplectic, or  orthogonal matrices can be expressed using the Jack symmetric functions.
This will be done in Section \ref{SECjack}. The result is the following
\begin{thm}\label{THEOREMDENSITYJACKREPRESENTATION}Let $X=T_p\ldots T_1$, where each $T_j$ is the $\left(n+\nu_j\right)\times\left(n+\nu_{j-1}\right)$ truncation of a Haar distributed unitary, orthogonal, or symplectic matrix $S_j$ of size $m_j\times m_j$, $1\leq j\leq p$. Assume that conditions (\ref{c1}) and (\ref{c2}) are satisfied.
Set
$$
M=\sum\limits_{i=1}^p\left(m_i-n-\nu_i\right),
$$
and assume there exists a partition $\mu$ with $l(\mu)\leq M$ such that the numbers
$$
\theta\left(\nu_1+1\right),\ldots,\theta\left(m_1-n\right),\ldots, \theta\left(\nu_p+1\right),\ldots,\theta\left(m_p-n\right)
$$
is a rearrangement of
$$
\mu_1+\theta(M-1)>\mu_2+\theta(M-2)>\ldots>\mu_M.
$$
Then the squared singular values $\left(x_1<\ldots<x_n\right)$ of $X$ have the joint density
\begin{equation} \label{JACKDensity}
\begin{split}
&\frac{1}{\hat{Z}_{n,p,\theta}}\frac{J_{\mu}\left(x_1,\ldots,x_n,1^{M-n};\theta\right)}{
J_{\mu}\left(1^M;\theta\right)}\\
&\times\prod\limits_{1\leq i<j\leq n}\left(x_j-x_i\right)^{2\theta}\prod\limits_{i=1}^n\left(1-x_i\right)^{\theta\left(M-n\right)
+\theta-1}\frac{dx_i}{x_i},
\end{split}
\end{equation}
where
\begin{itemize}
  \item $\theta=1$ in case $T_1$,$\ldots$, $T_p$ are truncated unitary matrices, $\theta=\frac{1}{2}$ in  case $T_1$,$\ldots$, $T_p$ are truncated orthogonal matrices, or $\theta=2$ in case $T_1$,$\ldots$, $T_p$ are truncated symplectic matrices;
  \item  $J_{\mu}\left(\ldots;\theta\right)$ stands for the Jack symmetric function with the Jack parameter $\theta$ parameterized by $\mu$;
  \item the constant $\hat{Z}_{n,p,\theta}$ is defined by
  \begin{equation}
  \hat{Z}_{n,p,\theta}=\prod\limits_{i=1}^p\prod\limits_{j=\nu_i+1}^{m_i-n}\frac{\Gamma(\theta j)}{\Gamma(\theta(j+n))}
  \prod\limits_{i=1}^n\frac{\Gamma\left(\theta\left(\sum\limits_{i=1}^p\left(m_i-n-\nu_i\right)-n+i\right)\Gamma(\theta i)\right)}{\Gamma(\theta)}.
  \end{equation}
\end{itemize}
\end{thm}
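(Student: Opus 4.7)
The plan is to recognize the joint density in Theorem \ref{THEOREMProductTruncatedProcessTHETA} as a Jack-polynomial-level degeneration of a Macdonald process, and then to marginalize over the intermediate levels using symmetric-function identities so that the chain collapses into a single Jack function evaluation. The key structural input --- that the Jack symmetric functions with parameter $\theta \in \{1/2, 1, 2\}$ arise as zonal spherical functions of the Gelfand pairs $GL_M(\R)/O(M)$, $GL_M(\C)/U(M)$, $GL_M(\Hq)/\Sp(2M)$ --- was developed in \cite{Ahn} and \cite{BorodinGorinStrahov}.

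First I would interpret each factor of (\ref{ProductTruncatedProcessTHETA}) in the Jack polynomial language: the Jacobi weight for $T_1$ corresponds to a Jack function at $m_1 - n - \nu_1$ variables equal to $1$, while each rank-$1$-perturbative factor for $k = 2, \ldots, p$ adjoins one additional such trivial specialization. The Cauchy-type determinant $\left(\det\left(1/(x_i^{k-1} - x_j^k)\right)\right)^{1-\theta}$ combined with the boundary weights $\prod_i (x_i^k)^{\theta(\nu_k+1)-1}/(x_i^{k-1})^{\theta(\nu_k+2)-1}$ and the interlacing support $x^k \preceq x^{k-1}$ is precisely the continuous analog of the Jack branching coefficient for adjoining a single new specialization variable. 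Iteratively integrating out $x^1, \ldots, x^{p-1}$ then corresponds to $p-1$ applications of the Jack Cauchy identity, collapsing the chain to a single Jack function evaluated at $(x_1, \ldots, x_n, 1^{M-n})$ where $M = \sum_j(m_j - n - \nu_j)$.

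Second, the hypothesis that $\{\theta(\nu_j+1), \ldots, \theta(m_j - n)\}_{j=1}^p$ rearranges to $\{\mu_i + \theta(M-i)\}_{i=1}^M$ is precisely the condition identifying the collapsed Jack function as $J_\mu(x_1, \ldots, x_n, 1^{M-n}; \theta)$: these are the spectral shifts that characterize $\mu$ in Jack polynomial theory, and this identification is an instance of the ``variable-index symmetry of Macdonald symmetric functions'' referenced in the introduction. Dividing by $J_\mu(1^M; \theta)$ normalizes the expression to a spherical function value, and the remaining factors combine to produce the Vandermonde $\prod_{i<j}(x_j-x_i)^{2\theta}$ and the Jacobi-type weight $\prod_i(1-x_i)^{\theta(M-n)+\theta-1}/x_i$ of the statement. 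The constant $\hat Z_{n,p,\theta}$ then follows from $Z_{n,p,\theta}$ of Theorem \ref{THEOREMProductTruncatedProcessTHETA} combined with the standard product formula for $J_\mu(1^M; \theta)$.

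The main obstacle is the precise identification of the matrix parameters with the partition $\mu$: one must verify that each step of the iterative integration produces exactly the spectral shifts $\mu_i + \theta(M-i)$ rather than a more general Jack parameter set, and the bookkeeping of Pochhammer and Gamma factors in the normalization is delicate though routine once this identification is in place. The $p = 1$ case serves as a useful sanity check: the hypothesis there forces the rectangular partition $\mu = (\theta(\nu_1+1))^M$, for which $J_\mu(x_1, \ldots, x_n, 1^{M-n}; \theta) = \prod_i x_i^{\theta(\nu_1+1)}$ and $J_\mu(1^M; \theta) = 1$, so the Jack ratio contributes exactly the factor $\prod_i x_i^{\theta(\nu_1+1)}$ needed to recover the Jacobi ensemble of Proposition \ref{PROPOSITIONJacobiEnsembles}.
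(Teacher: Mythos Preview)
Your plan and the paper's proof share the same decisive ingredient---the variable-index duality \eqref{DUALITY}---but the surrounding architecture differs. The paper never touches the multi-level density \eqref{ProductTruncatedProcessTHETA} at all. Instead it first establishes (Proposition~\ref{PROPOSITIONJointLaw}, proved by matching Jack moments) that the squared singular values of $T_p\cdots T_1$ are the $q\to 1$ limit of the \emph{single-level} Macdonald measure $M_{\Macdonald}(\lambda;\widetilde{A},B)$ with $\widetilde{A}=(t^{\nu_1+1},\ldots,t^{m_p-n})$. The hypothesis on $\mu$ says exactly that $\widetilde{A}$ rearranges to $(q^{\mu_1}t^{M-1},\ldots,q^{\mu_M})$, so duality converts $P_\lambda(\widetilde{A})$ into $\dfrac{P_\mu(q^{\lambda_1}t^{M-1},\ldots,q^{\lambda_n}t^{M-n},t^{M-n-1},\ldots,1)}{P_\mu(t^{M-1},\ldots,1)}P_\lambda(t^{M-1},\ldots,1)$, and one takes the limit using the known asymptotics \eqref{MACDONALDFormula}--\eqref{eq:Q}. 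No integration over intermediate levels ever occurs: at the Macdonald level the ``marginalization'' is the trivial branching identity $\sum_\mu P_{\lambda/\mu}(A_1)P_\mu(A_2)=P_\lambda(A_1,A_2)$, already absorbed into the product form of $M_{\Macdonald}$.

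Your first paragraph proposes instead to integrate out $x^1,\ldots,x^{p-1}$ from \eqref{ProductTruncatedProcessTHETA} at the continuous level. This is not the Jack Cauchy identity (a sum over partitions, not an integral over interlacing arrays), and carrying it out would mean proving continuous branching integrals for Jack functions---essentially a harder rederivation of Proposition~\ref{PROPOSITIONJointLaw}. Even then you would arrive at the continuous limit of $P_\lambda(\widetilde{A})Q_\lambda(B)/\Pi$, which is not yet a Jack function in the $x$-variables; you still need the duality swap you describe in your second paragraph. So your route is viable but strictly longer: it front-loads an integration step that the paper sidesteps by staying at the Macdonald level and taking a single limit at the end. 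Your $p=1$ sanity check and the identification of $\mu$ via the spectral shifts are correct and match the paper's remark following the theorem.
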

\begin{rem}Assume that $\theta=p=1$. In this case $M=m_1-n-\nu_1$, $\mu$ is a rectangular Young diagram with $m_1-n-\nu_1$ rows, and with $\nu_1+1$ boxes in each row.  Theorem \ref{THEOREMDENSITYJACKREPRESENTATION} gives the following expression for the distribution of the squared singular values of the truncated unitary matrix
\begin{equation}
\const s_{\mu}\left(x_1,\ldots,x_n;1^{m_1-2n-\nu_1}\right)\prod\limits_{1\leq i<j\leq n}\left(x_j-x_i\right)^2\prod\limits_{i=1}^n
\left(1-x_i\right)^{m_1-2n-\nu_1}\frac{dx_i}{x_i}.
\end{equation}
The combinatorial formula for the Schur functions can be applied, and we find
$$
s_{\mu}\left(x_1,\ldots,x_n;1^{m_1-2n-\nu_1}\right)=\prod\limits_{j=1}^nx_j^{\nu_1+1}.
$$
Thus we obtain formula (\ref{DISTRIBUTIONONEMATRIX}) (with $\theta=1$) from Theorem \ref{THEOREMDENSITYJACKREPRESENTATION}.
\end{rem}
Comparing the results of Theorem \ref{THEOREMDENSITYJACKREPRESENTATION} with those of Theorem \ref{THEOREMKKSPRODUCTMATRIXSINGULARVALEUES}  we obtain a representation of the Schur symmetric function (with a suitable choice of variables) in terms of Meijer G-functions. Namely, the following Corollary holds true
\begin{cor}Suppose $\mu$ is a partition such that $l(\mu)\leq p-1\leq M$ and $M-p+1\geq n$. Then
\begin{equation}\label{smain}
\frac{s_{\mu}\left(x_1,\ldots,x_n,1^{M-n}\right)}{s_{\mu}\left(1^M\right)}=\frac{1}{\prod_{i=1}^n\left(1-x_i\right)^{M-n}}
\frac{1}{\prod\limits_{1\leq i<j\leq n}\left(x_j-x_i\right)}\det\left[w_j\left(x_i\right)\right]_{i,j=1}^n,
\end{equation}
where
\begin{equation}
\begin{split}
&w_j(x)=\frac{\Gamma(M-n-p+2)\Gamma(M-n+j)}{\Gamma(M-n-p+1+j)}\\
&\times G_{p,p}^{p,0}\left(\begin{array}{cccc}
                                                                                  \mu_1+M, & \ldots, & \mu_{p-1}+M-p+2, & M-p-n+j+1 \\
                                                                                  \mu_1+M-1, &\ldots, & \mu_{p-1}+M-p+1, & j-1
                                                                                \end{array}
\biggl| x\right).
\end{split}
\end{equation}
\end{cor}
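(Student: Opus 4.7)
The plan is to derive the identity by equating the two density formulas available for the squared singular values of $X = T_p \cdots T_1$ in the unitary case $\theta = 1$. Theorem \ref{THEOREMKKSPRODUCTMATRIXSINGULARVALEUES} expresses this density as a Vandermonde times a determinant of Meijer $G$-function weights $w_k^{(p)}$ given by (\ref{TheWeightFunctions}), while Theorem \ref{THEOREMDENSITYJACKREPRESENTATION} (with $\theta = 1$, so that $J_\mu(\cdot;1) = s_\mu$) expresses the same density via a Schur function. Setting the two expressions equal and solving for the Schur function ratio produces the claimed formula.

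The key preparatory step is to select parameters $(m_i, \nu_i)$ satisfying the rearrangement condition of Theorem \ref{THEOREMDENSITYJACKREPRESENTATION}. Given $\mu$ with $l(\mu) \leq p - 1 \leq M$ and $M - p + 1 \geq n$, I would set $\nu_1 = 0$, $m_1 = n + M - p + 1$, and for $2 \leq j \leq p$ set $\nu_j = \mu_{j-1} + M - j + 1$ and $m_j = n + \nu_j + 1$; these satisfy conditions (\ref{c1})--(\ref{c2}). With these choices the multiset $\{\nu_1 + 1, \ldots, m_1 - n\} \cup \{\nu_2 + 1, \ldots, \nu_p + 1\}$ equals $\{1, 2, \ldots, M - p + 1\} \cup \{\mu_i + M - i + 1 : 1 \leq i \leq p - 1\}$, which, sorted in decreasing order, is precisely $\{\widetilde{\mu}_i + M - i : 1 \leq i \leq M\}$ for the augmented partition $\widetilde{\mu} = \mu + (1^M)$. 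Thus Theorem \ref{THEOREMDENSITYJACKREPRESENTATION} applies with partition $\widetilde{\mu}$, not $\mu$ itself.

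To relate $\widetilde{\mu}$ back to $\mu$, apply the Schur column-stripping identity $s_{\lambda + (1^N)}(y_1, \ldots, y_N) = (y_1 \cdots y_N)\, s_\lambda(y_1, \ldots, y_N)$ with $N = M$ variables and the specialization $(x_1, \ldots, x_n, 1^{M-n})$, which gives
\[
\frac{s_{\widetilde{\mu}}(x_1, \ldots, x_n, 1^{M-n})}{s_{\widetilde{\mu}}(1^M)} = (x_1 \cdots x_n)\, \frac{s_\mu(x_1, \ldots, x_n, 1^{M-n})}{s_\mu(1^M)}.
\]
The factor $(x_1 \cdots x_n)$ cancels the $\prod_i x_i^{-1}$ appearing in the density (\ref{JACKDensity}). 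Equating the resulting density to the one from Theorem \ref{THEOREMKKSPRODUCTMATRIXSINGULARVALEUES} and cancelling one Vandermonde yields the claimed determinantal shape. By construction, the Meijer $G$-function arguments of $w_k^{(p)}$ match those of the $w_j$ in the corollary under the identification $k = j$ (both the top $\mu_i + M - i + 1$ and bottom $\mu_i + M - i$ parameters align, as do the $j$-dependent entries $M - p - n + j + 1$ and $j - 1$); only the prefactor differs, yielding $w_j(x) = \frac{\Gamma(M-n+j)}{\Gamma(M-n-p+1+j)} w_j^{(p)}(x)$. Pulling this scalar out of the $j$-th column of the determinant produces exactly $\prod_{j=1}^n \Gamma(M-n+j)/\Gamma(M-n-p+1+j)$, which must be shown to equal the constant ratio $\hat{Z}_{n,p,1}/Z_{n,p}$. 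The main obstacle is this bookkeeping: using the explicit formulas (\ref{Znptheta}) and the $Z_{n,p}$ of Theorem \ref{THEOREMProductTruncatedProcess}, the $\nu_k$-dependent Pochhammer terms and the $j_1 \in \{1,\ldots,M-p+1\}$ products telescope and cancel cleanly, leaving only the expected $\Gamma$-ratio. Since the resulting identity is polynomial on both sides, it extends from the probabilistic domain of distinct ordered points to the full statement of the corollary.
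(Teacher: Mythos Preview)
Your proposal is correct and follows exactly the approach the paper indicates: the corollary is stated immediately after the sentence ``Comparing the results of Theorem \ref{THEOREMDENSITYJACKREPRESENTATION} with those of Theorem \ref{THEOREMKKSPRODUCTMATRIXSINGULARVALEUES} we obtain a representation of the Schur symmetric function \ldots'', and no further proof is given. Your parameter choice $\nu_1=0$, $m_1=n+M-p+1$, $\nu_j=\mu_{j-1}+M-j+1$, $m_j=n+\nu_j+1$ for $j\ge 2$, the passage from $\widetilde\mu=\mu+(1^M)$ to $\mu$ via the column-stripping identity, and the verification $\hat Z_{n,p,1}/Z_{n,p}=\prod_{j=1}^n \Gamma(M-n+j)/\Gamma(M-n-p+1+j)$ supply precisely the details the paper omits.
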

\begin{rem}Consider the case where $p=N-k+1$. In this case $l(\lambda)\leq N-k$, and  the functions $w_j(x)$ can be written as contour integrals,
\begin{equation}
w_j(x)=\frac{(-1)^{N-k}\Gamma(N-k+j)}{2\pi i\Gamma(j)}
\oint\limits_{\Sigma}\frac{x^sds}{\left[s-\left(\lambda_1+N-1\right)\right]\ldots\left[s-\left(\lambda_{N-k}+N-(N-k)\right)\right][s-j+1]},
\nonumber
\end{equation}
where $1\leq j\leq k$, and the contour $\Sigma$ encloses all the singularities of the integrand. Taking this into account we see
that equation (\ref{smain}) gives
\begin{equation}
\begin{split}
&\frac{s_{\lambda}\left(x_1,\ldots,x_k,1^{N-k}\right)}{s_{\lambda}\left(1^N\right)}=\frac{(-1)^{N-k}}{\prod_{i=1}^k\left(1-x_i\right)^{N-k}
\prod\limits_{1\leq i<j\leq k}\left(x_j-x_i\right)}\frac{\prod_{j=1}^k\Gamma(N-k+j)}{\prod_{j=1}^k\Gamma(j)}\\
&\times\frac{1}{(2\pi i)^k}\oint\limits_{\Sigma_1}\ldots\oint\limits_{\Sigma_k}
\left|\begin{array}{cccc}
        \frac{1}{s_1} & \frac{1}{s_1-1} & \ldots & \frac{1}{s_1-k+1} \\
        \frac{1}{s_2} & \frac{1}{s_2-1} & \ldots & \frac{1}{s_2-k+1} \\
        \vdots &  &  &  \\
        \frac{1}{s_k} & \frac{1}{s_k-1} & \ldots & \frac{1}{s_k-k+1}
      \end{array}
\right|
\frac{x_1^{s_1}\ldots x_k^{s_k}ds_1\ldots ds_k}{\prod\limits_{j=1}^k\left[s_j-\left(\lambda_1+N-1\right)\right]\ldots\left[s_j
-\left(\lambda_{N-k}+N-(N-k)\right)\right]}.
\end{split}
\nonumber
\end{equation}
The determinant in the integrand can be computed explicitly using the formula for the Cauchy determinant,
\begin{equation}
\left|\begin{array}{cccc}
        \frac{1}{s_1} & \frac{1}{s_1-1} & \ldots & \frac{1}{s_1-k+1} \\
        \frac{1}{s_2} & \frac{1}{s_2-1} & \ldots & \frac{1}{s_2-k+1} \\
        \vdots &  &  &  \\
        \frac{1}{s_k} & \frac{1}{s_k-1} & \ldots & \frac{1}{s_k-k+1}
      \end{array}
\right|=\frac{\prod\limits_{1\leq i<j\leq k}\left(s_j-s_i\right)\prod\limits_{1\leq i<j\leq k}(j-i)}{\prod\limits_{i,j=1}^k\left(s_i-j+1\right)}.
\end{equation}
Since $\prod\limits_{1\leq i<j\leq k}(j-i)=\prod\limits_{j=1}^k\Gamma(j)$, and $\prod_{j=1}^k\Gamma(N-k+j)=\prod_{j=1}^k(N-j)!$,  we obtain
\begin{equation}
\begin{split}
&\frac{s_{\lambda}\left(x_1,\ldots,x_k,1^{N-k}\right)}{s_{\lambda}\left(1^N\right)}=\frac{\prod_{j=1}^k(N-j)!}{\prod_{i=1}^k\left(x_i-1\right)^{N-k}
\prod\limits_{1\leq i<j\leq k}\left(x_j-x_i\right)}\\
&\times\frac{1}{(2\pi i)^k}\oint\limits_{\Sigma_1}\ldots\oint\limits_{\Sigma_k}
\frac{\prod\limits_{1\leq i<j\leq k}\left(s_j-s_i\right)x_1^{s_1}\ldots x_k^{s_k}ds_1\ldots ds_k}{\prod\limits_{j=1}^k\left[s_j-\left(\lambda_1+N-1\right)\right]
\ldots\left[s_j-\left(\lambda_{N}+N-N\right)\right]},
\end{split}
\end{equation}
where $\lambda_{N-k+1}=\ldots=\lambda_N=0$. This formula is equivalent to the result of Gorin and Panova \cite{GorinPanova}, as it can be seen from equations (1.5) and (1.6)
in Gorin and Panova \cite{GorinPanova}.
\end{rem}
\subsection{The correlation functions for singular values of the product of two truncated symplectic matrices}\label{SectionCorrelationFunctionsStatements}
Let $X=T_2T_1$, where $T_1$ is the $\left(n+\nu_1\right)\times n$ truncation of a Haar distributed symplectic matrix $S_1$ of size $m_1\times m_1$, and $T_2$ is the $\left(n+\nu_2\right)\times\left(n+\nu_1\right)$ truncation of a Haar distributed symplectic matrix $S_2$ of size $m_2\times m_2$.
In this particular case we find that the density of squared singular values $\left(x_1,\ldots,x_n\right)$ of $X$ can be written as a determinant.
\begin{prop}\label{PROPOSITIONDENSITYDETERMINANT}
The density $P_{n,\Product}^{(2)}\left(x_1,\ldots,x_n\right)$ of squared singular values $\left(x_1,\ldots,x_n\right)$ of $X=T_2T_1$ is equal to
\begin{equation}\label{DensityProduct2}
P_{n,\Product}^{(2)}\left(x_1,\ldots,x_n\right)=\frac{1}{Z_{n,\Product}^{(2)}}
\det\left[\begin{array}{ccc}
            1 & \ldots & x_1^{2n-1} \\
            \vdots &  & \vdots \\
            1 & \ldots & x_n^{2n-1} \\
            W_1^{(2)}\left(x_1\right) & \ldots  & W_{2n}^{(2)}\left(x_1\right) \\
            \vdots &  & \vdots \\
            W_1^{(2)}\left(x_n\right) & \ldots & W_{2n}^{(2)}\left(x_n\right)
          \end{array}
\right],
\end{equation}
where $Z_{n,\Product}^{(2)}$ is the normalization constant,
\begin{equation}
Z_{n,\Product}^{(2)}=\frac{\Gamma\left(2\left(\nu_2+1\right)\right)}{\Gamma\left(2\left(\nu_2+n+1\right)\right)}
\prod\limits_{j=\nu_1+1}^{m_1-n}\frac{\Gamma(2j)}{\Gamma(2(j+n))}\prod\limits_{j=1}^m\Gamma\left(2\left(m_1-2n-\nu_1+j\right)\right)\Gamma(2j),
\end{equation}
 and
\begin{equation}
W_j^{(2)}(x)=G_{2,2}^{2,0}\left(\begin{array}{cc}
                                                                                                 2\nu_2+2 & 2\left(m_1-2n+1\right)+j-1 \\
                                                                                                 2\nu_2+1 & 2\nu_1+j-1
                                                                                               \end{array}
\biggr|x\right).
\end{equation}
This density is over $0<x_1<\ldots<x_n<1$.
\end{prop}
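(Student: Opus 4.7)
The plan is to derive the density of $(x_1,\ldots,x_n)$ by conditioning on the first factor $T_1$, and then convert the resulting marginal into the claimed $2n\times 2n$ determinant. First I would write $(u_1,\ldots,u_n)$ for the squared singular values of $T_1$, which by Proposition~\ref{PROPOSITIONJacobiEnsembles} with $\theta=2$ have density proportional to $\triangle(u)^4\prod_i u_i^{2\nu_1+1}(1-u_i)^{2m_1-4n-2\nu_1+1}$. Since condition~(\ref{c2}) gives $m_2=n+\nu_2+1$, the rank-one hypothesis of Theorem~\ref{THEOREMMarkovKernel} is met for $T_2$ (with $\nu=\nu_2$), so the conditional density of $(x_1,\ldots,x_n)$ given $u$ is proportional to $(\triangle(x)/\triangle(u))^2\det[1/(u_i-x_j)]^{-1}\prod_i x_i^{2\nu_2+1}u_i^{-2\nu_2-3}$ on the interlacing set $x_i\le u_i\le x_{i+1}$ (with $x_{n+1}=1$).

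Multiplying and using the Cauchy evaluation $\det[1/(u_i-x_j)]=\pm\triangle(u)\triangle(x)/\prod_{i,j}(u_i-x_j)$ together with the factorization $\triangle(u)\triangle(x)\prod_{i,j}(u_i-x_j)=\pm\triangle(u,x)$ for the Vandermonde of the concatenated $2n$-tuple $(u_1,\ldots,u_n,x_1,\ldots,x_n)$, the joint density collapses to
\begin{equation*}
P(u,x)\propto\triangle(u,x)\prod_i f(u_i)\prod_i x_i^{2\nu_2+1},
\end{equation*}
where $f(u)=u^{2\nu_1-2\nu_2-2}(1-u)^{2m_1-4n-2\nu_1+1}$. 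Writing $\triangle(u,x)=\det[z_k^{j-1}]_{k,j=1}^{2n}$ with $z=(u_1,\ldots,u_n,x_1,\ldots,x_n)$ and integrating out $u$ by multilinearity of the determinant in its first $n$ rows yields a $2n\times 2n$ determinant whose top block is $[\int_{x_i}^{x_{i+1}}u^{j-1}f(u)\,du]$ and bottom block is $[x_i^{j-1}]$. Telescoping row operations (successively adding row $i+1$ to row $i$ from bottom to top) convert the top rows to $\Psi_j(x_i):=\int_{x_i}^1 u^{j-1}f(u)\,du$; swapping the two blocks and absorbing $\prod_i x_i^{2\nu_2+1}$ into the $\Psi$-block produces the matrix in the Proposition with second-block entries $x_i^{2\nu_2+1}\Psi_j(x_i)$.

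The main obstacle is the identification of $x^{2\nu_2+1}\Psi_j(x)$ with $W_j^{(2)}(x)$ up to a $j$-independent constant. Using the Beta identity $\Gamma(a)/\Gamma(a+b)=\Gamma(b)^{-1}\int_0^1 t^{a-1}(1-t)^{b-1}\,dt$ to recast the incomplete integral as a Mellin--Barnes contour and matching against the definition of $G^{2,0}_{2,2}$, the prefactor $x^{2\nu_2+1}$ combined with the integrand exponents supplies the lower Gamma parameters $2\nu_2+1$, $2\nu_1+j-1$, while the Beta shift contributes the upper parameters $2\nu_2+2$, $2m_1-4n+j+1$, matching the statement of the Proposition exactly. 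Tracking the scalar prefactors through Proposition~\ref{PROPOSITIONJacobiEnsembles}, the Markov kernel normalization in Theorem~\ref{THEOREMMarkovKernel}, the Cauchy determinant, and the Beta factor from the Meijer $G$ identification then yields the claimed $Z_{n,\Product}^{(2)}$.
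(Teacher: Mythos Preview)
Your proposal is correct and follows the same overall strategy as the paper: obtain the joint $(u,x)$ density from Proposition~\ref{PROPOSITIONJacobiEnsembles} and the $\theta=2$ kernel of Theorem~\ref{THEOREMMarkovKernel}, collapse the Cauchy and Vandermonde factors into the single $2n$-variable Vandermonde $\triangle(u,x)$, integrate out $u$ to a $2n\times 2n$ determinant, and identify the resulting incomplete-beta type entries with the stated $G_{2,2}^{2,0}$.

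The execution differs in two places. For the integration, the paper absorbs the interlacing indicator into a second $n\times n$ determinant $\det[\psi_i(x_j)]$ (with $\psi_i$ carrying $\mathbf{1}[y_i\le x\le 1]$) and applies an Andr\'eief/Cauchy--Binet identity over unordered $u$, which yields the entries $\int_{x_i}^1(\cdots)\,du$ directly without any row operations; your route via the product domain $\prod_i[x_i,x_{i+1}]$ followed by telescoping row operations is equally valid and arguably more elementary. For the Meijer $G$ identification, the paper is more concrete than your sketch: it substitutes $t=x_i/\tau$, recognises $(x_i/\tau)^{2\nu_1+j-1}(1-x_i/\tau)^{2(m_1-2n-\nu_1)+1}$ as $\Gamma(2(m_1-2n-\nu_1+1))\,G_{1,1}^{1,0}$, and then applies the Mellin-convolution formula
\[
\int_0^1 x^\beta(1-x)^{\alpha-\beta-1}G_{1,1}^{1,0}\!\left(\begin{array}{c}a\\ b\end{array}\Big|\tfrac{y}{x}\right)\frac{dx}{x}=\Gamma(\alpha-\beta)\,G_{2,2}^{2,0}\!\left(\begin{array}{cc}\alpha & a\\ \beta & b\end{array}\Big|y\right)
\]
from Luke. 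Your phrase ``Beta identity \dots\ to recast the incomplete integral as a Mellin--Barnes contour'' does not, as written, turn an \emph{incomplete} integral $\int_x^1$ into a contour integral (the Beta identity involves the complete integral over $[0,1]$), so that step should be replaced by an explicit argument such as the paper's convolution route or a direct Mellin--Barnes representation of the incomplete beta function.
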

The proof of Proposition \ref{PROPOSITIONDENSITYDETERMINANT} will be given in Section \ref{FirstPropositionOnCorrelationFunctions}.
Assume that we have a probability measure on $\R^n$ which can be written as
\begin{equation}\label{SymplecticTypeEnsemble}
P_n\left(x_1,\ldots,x_n\right)dx_1\ldots dx_n=\frac{1}{Z_n}\det\left(\phi_j\left(x_k\right)\;\psi_j\left(x_k\right)\right)_{1\leq k\leq n,\;\;0\leq j\leq 2n-1}dx_1\ldots dx_n,
\end{equation}
where $\phi_j$, $\psi_j$ are certain functions, and $Z_n$ is the normalizing constant. We will refer to (\ref{SymplecticTypeEnsemble}) as to
a \textit{symplectic-type ensemble}.  Define the correlation kernel, $\mathbb{K}_n(x,y)$, of the symplectic-type ensemble (\ref{SymplecticTypeEnsemble})
as a $2\times 2$ matrix valued kernel of the operator $\mathbb{K}_n$ for which the following condition is satisfied
\begin{equation}\label{CorrelationKernelSymplecticTypeEnsemble}
\left(\int\ldots\int P_n\left(x_1,\ldots,x_n\right)\prod\limits_{i=1}^n\left(1+f\left(x_i\right)\right)dx_1\ldots dx_n\right)^2=\det\left(I+\mathbb{K}_nf\right).
\end{equation}
Here $\mathbb{K}_n$ denotes the operator on $L^2$ and $f$ is the operator of multiplication by that function.
Equation (\ref{DensityProduct2}) implies that the squared singular values of the product of two truncated symplectic matrices form a symplectic type ensemble.
The standard methods of Random Matrix Theory enable us to obtain in Section \ref{SectionSecondPropositionOnCorrelationFunctions} the following
\begin{prop}\label{PROPOSITIONK}
The correlation kernel for the density $P_{n,\Product}^{(2)}\left(x_1,\ldots,x_n\right)$ defined by equation (\ref{DensityProduct2})
can be written as
\begin{equation}
\mathbb{K}_{n,\Product}(x,y)=\left(\begin{array}{cc}
                 K_{n,\Product}^{(1,1)}(x,y) &  K_{n,\Product}^{(1,2)}(x,y) \\
                 K_{n,\Product}^{(2,1)}(x,y) &  K_{n,\Product}^{(2,2)}(x,y)
               \end{array}
\right),
\end{equation}
where
\begin{equation}\label{K11}
K_{n,\Product}^{(1,1)}(x,y)=\sum\limits_{k,l=0}^{2n-1}G_{2,2}^{2,0}\left(\begin{array}{cc}
                                                                                                 2\nu_2+2 & 2\left(m_1-2n+1\right)+k \\
                                                                                                 2\nu_2+1 & 2\nu_1+k
                                                                                               \end{array}
\biggr|x\right)q_{k,l}^{\Product}y^l,
\end{equation}
\begin{equation}\label{K12}
\begin{split}
&K_{n,\Product}^{(1,2)}(x,y)\\
&=-\sum\limits_{k,l=0}^{2n-1}
G_{2,2}^{2,0}\left(\begin{array}{cc}
                                                                                                 2\nu_2+2 & 2\left(m_1-2n+1\right)+k \\
                                                                                                 2\nu_2+1 & 2\nu_1+k
                                                                                               \end{array}
\biggr|x\right)q_{k,l}^{\Product}G_{2,2}^{2,0}\left(\begin{array}{cc}
                                                                                                 2\nu_2+2 & 2\left(m_1-2n+1\right)+l \\
                                                                                                 2\nu_2+1 & 2\nu_1+l
                                                                                               \end{array}
\biggr|y\right),
\end{split}
\end{equation}
\begin{equation}\label{K21}
\begin{split}
&K_{n,\Product}^{(2,1)}(x,y)=\sum\limits_{k,l=0}^{2n-1}x^kq_{k,l}^{\Product}y^l,
\end{split}
\end{equation}
and
\begin{equation}\label{K22}
K_{n,\Product}^{(2,2)}(x,y)=-\sum\limits_{k,l=0}^{2n-1}x^kq_{k,l}^{\Product}G_{2,2}^{2,0}\left(\begin{array}{cc}
                                                                                                 2\nu_2+2 & 2\left(m_1-2n+1\right)+l \\
                                                                                                 2\nu_2+1 & 2\nu_1+l
                                                                                               \end{array}
\biggr|y\right).
\end{equation}
Here $Q^{\Product}=\left(q_{i,j}^{\Product}\right)_{i,j=0}^{2n-1}$ is the inverse of $C^{\Product}=\left(c_{i,j}^{\Product}\right)_{i,j=0}^{2n-1}$ defined by
\begin{equation}\label{cproduct}
c_{i,j}^{\Product}=\frac{\left(j-i\right)}{\left(2\nu_2+i+2\right)\left(2\nu_2+j+2\right)}\frac{\Gamma\left(i+j+2\nu_1+1\right)}{\Gamma\left(
2\left(m_1-2n+1\right)+i+j+1\right)}.
\end{equation}
\end{prop}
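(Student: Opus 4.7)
The plan is to apply the standard theory of Pfaffian correlation functions for symplectic-type ensembles to Proposition \ref{PROPOSITIONDENSITYDETERMINANT}. The density $P_{n,\Product}^{(2)}$ fits the template (\ref{SymplecticTypeEnsemble}) with the identification
\begin{equation*}
\phi_j(x) = x^j, \qquad \psi_j(x) = W_{j+1}^{(2)}(x), \qquad j = 0, 1, \ldots, 2n-1,
\end{equation*}
so the Mehta--Dyson--Tracy--Widom machinery for such ensembles applies directly.

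First, I would invoke the general kernel formula for symplectic-type ensembles. Applying de Bruijn's Pfaffian integration formula to the generating functional on the left side of (\ref{CorrelationKernelSymplecticTypeEnsemble}) gives $Z_{n,\Product}^{(2)} = n!\,\Pf(C^{\Product})$, where $C^{\Product} = (c_{i,j}^{\Product})_{i,j=0}^{2n-1}$ is the antisymmetric Gram matrix
\begin{equation*}
c_{i,j}^{\Product} = \int_0^{\infty} \bigl(\phi_i(x)\psi_j(x) - \phi_j(x)\psi_i(x)\bigr)\,dx.
\end{equation*}
Expanding the left side of (\ref{CorrelationKernelSymplecticTypeEnsemble}) by the same Pfaffian formula and differentiating in $f$ identifies the $2 \times 2$ matrix kernel $\mathbb{K}_{n,\Product}$ with the expressions (\ref{K11})--(\ref{K22}) in terms of $Q^{\Product} = (C^{\Product})^{-1}$ and the basis functions $\phi_j, \psi_j$. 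This is standard; it can equivalently be phrased via skew-orthogonal polynomials for the bilinear form induced by $C^{\Product}$.

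Second, the substantive calculation is to verify (\ref{cproduct}). Using the classical Mellin transform of the Meijer $G$-function,
\begin{equation*}
\int_0^{\infty} x^{s-1}\, G_{2,2}^{2,0}\!\left(\begin{array}{cc} a_1 & a_2 \\ b_1 & b_2 \end{array}\bigg|\, x\right) dx = \frac{\Gamma(b_1+s)\,\Gamma(b_2+s)}{\Gamma(a_1+s)\,\Gamma(a_2+s)},
\end{equation*}
applied to $W_{j+1}^{(2)}$ with parameters $a_1 = 2\nu_2+2$, $a_2 = 2(m_1-2n+1)+j$, $b_1 = 2\nu_2+1$, $b_2 = 2\nu_1+j$, and setting $s = i+1$, one finds
\begin{equation*}
\int_0^{\infty} x^i\, W_{j+1}^{(2)}(x)\,dx = \frac{1}{2\nu_2+i+2}\cdot\frac{\Gamma(2\nu_1+i+j+1)}{\Gamma(2(m_1-2n+1)+i+j+1)}.
\end{equation*}
Antisymmetrizing in $(i,j)$ and combining fractions yields (\ref{cproduct}) directly.

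The main obstacle will be the bookkeeping in the first step: reconciling sign conventions and row orderings between the de Bruijn--Tracy--Widom formalism and the specific determinant in (\ref{DensityProduct2}), and verifying that the normalization $Z_{n,\Product}^{(2)}$ quoted in Proposition \ref{PROPOSITIONDENSITYDETERMINANT} is indeed consistent with $n!\,\Pf(C^{\Product})$ so that $Q^{\Product} = (C^{\Product})^{-1}$ enters the kernel with the correct normalization. Beyond this, the analytic content reduces to the Meijer $G$-function Mellin integral above, which is entirely classical.
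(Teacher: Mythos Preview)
Your proposal is correct and follows essentially the same route as the paper: the paper identifies $\phi_j(x)=x^j$, $\psi_j(x)=G_{2,2}^{2,0}(\cdots|x)$, cites the Tracy--Widom result for the general symplectic-type kernel (stated as Proposition~\ref{GeneralPropositionSymplecticTypeEnsemble}), and then remarks that $C^{\Product}$ ``can be computed explicitly'' to give \eqref{cproduct}. Your Mellin-transform computation of $\int_0^\infty x^i\,W_{j+1}^{(2)}(x)\,dx$ is exactly the calculation the paper leaves implicit, and the bookkeeping concerns you flag are absorbed into the Tracy--Widom citation.
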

We see that in order to obtain explicit formulae for the matrix entries of the kernel $\mathbb{K}_{n,\Product}(x,y)$ we need to find the inverse of the matrix
$C^{\Product}=\left(c_{i,j}^{\Product}\right)_{i,j=0}^{2n-1}$ defined by equation (\ref{cproduct}). This can be done using the following result that will be proved in Section \ref{SectionProofInverse}.
\begin{prop}\label{PROPOSITIONInverse}
The inverse of the matrix
$$
C=\left((j-i)\frac{\Gamma(a+i+j)}{\Gamma(a+b+i+j+1)}\right)_{i,j=0}^{2n-1}
$$
is the matrix $Q=\left(Q_{i,j}\right)_{i,j=0}^{2n-1}$, where
\begin{equation}\label{InverseJacobiType}
\begin{split}
Q_{i,j}=\frac{(-1)^{i+j}\Gamma(b+1)}{\Gamma(a+i)\Gamma(a+j)}&\sum\limits_{k=0}^{n-1}\sum\limits_{l=0}^k\frac{2^{4k}}{2^{4l}}
\frac{(a+b+4l-1)(a+b+4k+1)\Gamma(a+2l)\Gamma(a+1+2k)}{\Gamma(l+1)\Gamma\left(\frac{a}{2}+\frac{b}{2}+l\right)
\Gamma\left(\frac{a+1}{2}+l\right)\Gamma\left(\frac{b+1}{2}+l\right)}\\
&\times\Theta\left(k+1\right)\Theta\left(\frac{a+1}{2}+k\right)\Theta\left(\frac{b+1}{2}+k\right)
\Theta\left(\frac{a}{2}+\frac{b}{2}+k\right)\\
&\times\biggl(\Gamma(a+b+2l+i-1)\Gamma(a+b+2k+j)\left(\begin{array}{c}
                                                                                          2l \\
                                                                                          i
                                                                                        \end{array}
\right)
\left(\begin{array}{c}
                                                                                          2k+1 \\
                                                                                          j
                                                                                        \end{array}
\right)\\
&-\Gamma(a+b+2l+j-1)\Gamma(a+b+2k+i)\left(\begin{array}{c}
                                                                                          2l \\
                                                                                          j
                                                                                        \end{array}
\biggr)
\left(\begin{array}{c}
                                                                                          2k+1 \\
                                                                                          i
                                                                                        \end{array}
\right)\right),
\end{split}
\end{equation}
and where $\Theta(x)=\Gamma(x)/\Gamma(2x)$.
\end{prop}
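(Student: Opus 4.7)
My first step is to uncover a skew-symmetric bilinear pairing behind $C$. Using the Beta identity $\Gamma(a+i+j)/\Gamma(a+b+i+j+1) = \Gamma(b+1)^{-1}\int_0^1 x^{a+i+j-1}(1-x)^b\,dx$ together with the elementary identity $(j-i)x^{i+j-1} = x^i(x^j)' - (x^i)'x^j$, I would rewrite
\begin{equation*}
c_{ij} = \frac{1}{\Gamma(b+1)}\int_0^1 \bigl[x^i(x^j)' - (x^i)'x^j\bigr]\,x^a(1-x)^b\,dx,
\end{equation*}
which realizes $C$ as the Gram matrix of the monomial basis $\{x^i\}_{i=0}^{2n-1}$ with respect to the antisymmetric pairing $\langle f,g\rangle = \Gamma(b+1)^{-1}\int_0^1 (fg'-f'g)\,x^a(1-x)^b\,dx$. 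This is precisely the skew-symmetric inner product associated to the $\beta=4$ Jacobi weight $x^a(1-x)^b$ on $[0,1]$.

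Inverting $C$ then reduces to the construction of a family of skew-orthogonal polynomials $\{p_0,\ldots,p_{2n-1}\}$ adapted to $\langle\cdot,\cdot\rangle$: polynomials with $\deg p_r = r$ such that $\langle p_{2l},p_{2m+1}\rangle = s_l\delta_{lm}$ while $\langle p_{2l},p_{2m}\rangle = \langle p_{2l+1},p_{2m+1}\rangle = 0$. Writing $p_r(x) = \sum_i A_{r,i}\,x^i$ and assembling the $A_{r,i}$ into a change-of-basis matrix diagonalizing $C$ into skew-diagonal form, one obtains
\begin{equation*}
Q_{ij} = \sum_{m=0}^{n-1}\frac{1}{s_m}\bigl(A_{2m+1,i}\,A_{2m,j} - A_{2m,i}\,A_{2m+1,j}\bigr).
\end{equation*}

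The substantive part of the argument is to write down these skew-orthogonal polynomials and their normalizations explicitly. Skew-orthogonal polynomials for classical $\beta=4$ weights are well studied, and for the Jacobi weight $x^a(1-x)^b$ they can be built from Jacobi polynomials combined with an auxiliary integration/derivation ``raising'' procedure. Guided by the structure of the target formula, the natural ansatz takes $p_{2l}(x)$ to be proportional to $(1-x)^{2l}$ adjusted by a triangular correction of lower-degree even polynomials, while $p_{2k+1}(x)$ is determined by the skew-orthogonality conditions and ends up expanded as a sum over $l\le k$ of contributions attached to the even partners $p_{2l}$. Expanding $(1-x)^{2l}=\sum_i(-1)^i\binom{2l}{i}x^i$ and $(1-x)^{2k+1}=\sum_i(-1)^i\binom{2k+1}{i}x^i$ then accounts for the prefactor $(-1)^{i+j}$ and the binomial coefficients in the statement, while evaluating the inner products as Beta integrals produces the $\Gamma(a+b+2l+i-1)$ and $\Gamma(a+b+2k+j)$ factors. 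The constraint $l\le k$ in the double sum reflects precisely this triangular representation of $p_{2k+1}$ in terms of the even skew-orthogonal family.

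The final step is to verify the combinatorial match between this representation of $Q_{ij}$ and the explicit expression in the statement. The outer factor $\Gamma(b+1)/[\Gamma(a+i)\Gamma(a+j)]$ arises from Beta-integral evaluations on monomials. The $k$-dependent $\Theta$-factors together with the $l$-dependent denominator $\Gamma(l+1)\Gamma(\tfrac{a}{2}+\tfrac{b}{2}+l)\Gamma(\tfrac{a+1}{2}+l)\Gamma(\tfrac{b+1}{2}+l)$ emerge when the normalizations $s_m$ are rewritten via the Gamma duplication formula $\Gamma(2x)=\pi^{-1/2}2^{2x-1}\Gamma(x)\Gamma(x+\tfrac{1}{2})$; the powers $2^{4k-4l}$ arise naturally from this same reduction. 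The hard part is not conceptual but bookkeeping: tracking signs, half-integer Gamma identities, and the coupled $l$--$k$ structure through a sequence of Pochhammer manipulations. This combinatorial verification is the main obstacle of the proof.
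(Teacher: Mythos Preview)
Your overall strategy---recognize $C$ as the Gram matrix of monomials under the $\beta=4$ Jacobi skew inner product and invert it via skew-orthogonal polynomials---is exactly the paper's approach. The paper formalizes this as a general statement about skew-symmetric Hankel-type matrices (its Proposition~\ref{PropGeneralInverse}) and then specializes to the Jacobi weight.

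However, your description of the skew-orthogonal polynomials contains a structural error that would block the computation. In the standard construction (due to Adler, Forrester, Nagao, and van Moerbeke, which the paper invokes directly), the \emph{odd}-indexed polynomial is a single rescaled Jacobi polynomial, $q_{2k+1}(x)=p_{2k+1}(x)\propto P_{2k+1}^{(a',b')}(x)$, while the \emph{even}-indexed polynomial is the triangular sum $q_{2k}(x)=\sum_{l=0}^{k}c_{k,l}\,p_{2l}(x)$. You have the roles reversed. The double sum $\sum_{k=0}^{n-1}\sum_{l=0}^{k}$ in the target formula arises because the outer $k$-sum runs over skew-orthogonal pairs and the inner $l$-sum comes from expanding $q_{2k}$, not $q_{2k+1}$.

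Your ansatz that $p_{2l}$ is ``proportional to $(1-x)^{2l}$'' is also off the mark. The building blocks are genuine Jacobi polynomials, and the binomial coefficients $\binom{2l}{i}$, $\binom{2k+1}{j}$ together with the factor $(-1)^{i+j}$ do not come from expanding powers of $(1-x)$. Rather, the paper works with the classical Jacobi weight $(1-x)^{a'+1}(1+x)^{b'+1}$ on $[-1,1]$ and expands in powers of $(x-1)$ using the derivative identity
\[
\frac{1}{i!}\frac{d^i}{dx^i}P_k^{(a',b')}(x)\Big|_{x=1}
=\frac{\Gamma(a'+1+k)\,\Gamma(a'+b'+1+k+i)}{2^{i}\,i!\,(k-i)!\,\Gamma(a'+i+1)\,\Gamma(a'+b'+1+k)},
\]
which is what produces the $\binom{k}{i}$ and the accompanying Gamma factors. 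A diagonal rescaling then relates the resulting matrix $C^{\Jacobi}$ to the original $C$ (with the parameter shift $(a',b')=(a-1,b-1)$) and accounts for the $(-1)^{i+j}$ and the powers of $2$. Without the explicit Jacobi skew-orthogonal family and its normalizations $r_k$ from that reference, the ``bookkeeping'' you allude to cannot actually be carried out.
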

Proposition (\ref{PROPOSITIONK}) together with Proposition (\ref{PROPOSITIONInverse}) enable us to give
explicit formulae for the matrix entries of the kernel $\mathbb{K}_{n,\Product}(x,y)$.
\begin{thm}\label{THEOREMCORRELATIONKERNELFINAL}Set $a_1=2\nu_1+1$, $a_2=2\nu_2+1$, $b_1=2\left(m_1-2n-\nu_1\right)+1$, and define
\begin{equation}
P_m^{\Product}(x)=\sum\limits_{i=0}^m(-1)^i\left(\begin{array}{c}
                                        m \\
                                        i
                                      \end{array}
\right)\frac{(a_2+i+1)\Gamma(a_1+b_1+m+i-1)}{\Gamma(a_1+i)}x^i,
\end{equation}
and
\begin{equation}
Q_p^{\Product}(y)=\sum\limits_{j=0}^p(-1)^j\left(\begin{array}{c}
                                        p \\
                                        j
                                      \end{array}
\right)\frac{\left(a_2+j+1\right)\Gamma(a_1+b_1+p+j-1)}{\Gamma(a_1+j)}
G_{2,2}^{2,0}\left(\begin{array}{cc}
                                                                                a_2+1 & b_1+j+a_1 \\
                                                                                 a_2  &    a_1+j-1
                                                                              \end{array}
\biggl|y\right).
\end{equation}
With these notation the matrix entries of the kernel $\mathbb{K}_{n,\Product}(x,y)$ can be written as
\begin{equation}\label{KProductSum11}
\begin{split}
K_{n,\Product}^{(1,1)}(x,y)=&\Gamma(b_1+1)\sum\limits_{k=0}^{n-1}\sum\limits_{l=0}^k\frac{2^{4k}}{2^{4l}}
\frac{(a_1+b_1+4l-1)(a_1+b_1+4k+1)\Gamma(a_1+2l)\Gamma(a_1+1+2k)}{\Gamma(l+1)\Gamma\left(\frac{a_1}{2}+\frac{b_1}{2}+l\right)
\Gamma\left(\frac{a_1+1}{2}+l\right)\Gamma\left(\frac{b_1+1}{2}+l\right)}\\
&\times\Theta\left(k+1\right)\Theta\left(\frac{a_1+1}{2}+k\right)\Theta\left(\frac{b_1+1}{2}+k\right)
\Theta\left(\frac{a_1}{2}+\frac{b_1}{2}+k\right)\\
&\times\left(Q_{2l}^{\Product}(x)P_{2k+1}^{\Product}(y)-Q_{2k+1}^{\Product}P_{2l}(y)^{\Product}\right),
\end{split}
\end{equation}
\begin{equation}\label{KProductSum12}
\begin{split}
K_{n,\Product}^{(1,2)}(x,y)=&-\Gamma(b_1+1)\sum\limits_{k=0}^{n-1}\sum\limits_{l=0}^k\frac{2^{4k}}{2^{4l}}
\frac{(a_1+b_1+4l-1)(a_1+b_1+4k+1)\Gamma(a_1+2l)\Gamma(a_1+1+2k)}{\Gamma(l+1)\Gamma\left(\frac{a_1}{2}+\frac{b_1}{2}+l\right)
\Gamma\left(\frac{a_1+1}{2}+l\right)\Gamma\left(\frac{b_1+1}{2}+l\right)}\\
&\times\Theta\left(k+1\right)\Theta\left(\frac{a_1+1}{2}+k\right)\Theta\left(\frac{b_1+1}{2}+k\right)
\Theta\left(\frac{a_1}{2}+\frac{b_1}{2}+k\right)\\
&\times\left(Q_{2l}^{\Product}(x)Q_{2k+1}^{\Product}(y)-Q_{2k+1}^{\Product}Q_{2l}(y)^{\Product}\right),
\end{split}
\end{equation}
\begin{equation}\label{KProductSum21}
\begin{split}
K_{n,\Product}^{(2,1)}(x,y)=&\Gamma(b_1+1)\sum\limits_{k=0}^{n-1}\sum\limits_{l=0}^k\frac{2^{4k}}{2^{4l}}
\frac{(a_1+b_1+4l-1)(a_1+b_1+4k+1)\Gamma(a_1+2l)\Gamma(a_1+1+2k)}{\Gamma(l+1)\Gamma\left(\frac{a_1}{2}+\frac{b_1}{2}+l\right)
\Gamma\left(\frac{a_1+1}{2}+l\right)\Gamma\left(\frac{b_1+1}{2}+l\right)}\\
&\times\Theta\left(k+1\right)\Theta\left(\frac{a_1+1}{2}+k\right)\Theta\left(\frac{b_1+1}{2}+k\right)
\Theta\left(\frac{a_1}{2}+\frac{b_1}{2}+k\right)\\
&\times\left(P_{2l}^{\Product}(x)P_{2k+1}^{\Product}(y)-P_{2k+1}^{\Product}P_{2l}(y)^{\Product}\right),
\end{split}
\end{equation}
and
\begin{equation}\label{KProductSum22}
\begin{split}
K_{n,\Product}^{(2,2)}(x,y)=&-\Gamma(b_1+1)\sum\limits_{k=0}^{n-1}\sum\limits_{l=0}^k\frac{2^{4k}}{2^{4l}}
\frac{(a_1+b_1+4l-1)(a_1+b_1+4k+1)\Gamma(a_1+2l)\Gamma(a_1+1+2k)}{\Gamma(l+1)\Gamma\left(\frac{a_1}{2}+\frac{b_1}{2}+l\right)
\Gamma\left(\frac{a_1+1}{2}+l\right)\Gamma\left(\frac{b_1+1}{2}+l\right)}\\
&\times\Theta\left(k+1\right)\Theta\left(\frac{a_1+1}{2}+k\right)\Theta\left(\frac{b_1+1}{2}+k\right)
\Theta\left(\frac{a_1}{2}+\frac{b_1}{2}+k\right)\\
&\times\left(P_{2l}^{\Product}(x)Q_{2k+1}^{\Product}(y)-P_{2k+1}^{\Product}Q_{2l}(y)^{\Product}\right).
\end{split}
\end{equation}
\end{thm}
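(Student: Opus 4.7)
The plan is to combine the two preceding propositions mechanically: Proposition \ref{PROPOSITIONK} already expresses each kernel entry as a double sum involving the unknown inverse $Q^{\Product} = (C^{\Product})^{-1}$ together with monomials and Meijer $G$-functions, while Proposition \ref{PROPOSITIONInverse} supplies the inverse of a matrix of precisely the shape needed. The theorem will fall out once the parameters are matched and the order of summation is interchanged.

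The first step is to recognize that $C^{\Product}$ is a diagonal conjugate of the matrix in Proposition \ref{PROPOSITIONInverse}. Indeed, setting $a_1 = 2\nu_1+1$, $a_2 = 2\nu_2+1$, $b_1 = 2(m_1-2n-\nu_1)+1$, one checks $a_1+b_1 = 2(m_1-2n+1)$, and (\ref{cproduct}) rewrites as
\[
c_{i,j}^{\Product} = \frac{1}{a_2+i+1}\cdot (j-i)\frac{\Gamma(a_1+i+j)}{\Gamma(a_1+b_1+i+j+1)}\cdot\frac{1}{a_2+j+1}.
\]
Thus $C^{\Product} = D\,C\,D$ where $D = \diag\bigl((a_2+i+1)^{-1}\bigr)_{i=0}^{2n-1}$ and $C$ is the matrix of Proposition \ref{PROPOSITIONInverse} with $(a,b)=(a_1,b_1)$. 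Inverting gives $q_{i,j}^{\Product} = (a_2+i+1)(a_2+j+1)\,Q_{i,j}$, with $Q_{i,j}$ supplied by (\ref{InverseJacobiType}).

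Next, I substitute this expression for $q_{i,j}^{\Product}$ into each of the kernel formulas (\ref{K11})--(\ref{K22}) and interchange the order of summation, placing the two sums $\sum_{k=0}^{n-1}\sum_{l=0}^k$ from (\ref{InverseJacobiType}) on the outside and pushing the matrix-index sums (which I rename from $(k,l)$ in Proposition \ref{PROPOSITIONK} to $(i,j)$) inside. The crucial observation is that the inner sums then factorize as a product of an $x$-sum and a $y$-sum, with the binomials $\binom{2l}{i}$, $\binom{2k+1}{j}$ truncating the range to $0\le i\le 2l$ and $0\le j\le 2k+1$ automatically. For $K_{n,\Product}^{(2,1)}$, for instance, the first term of $Q_{i,j}$ produces the inner sum
\[
\sum_{i=0}^{2l}(-1)^i\binom{2l}{i}\frac{(a_2+i+1)\Gamma(a_1+b_1+2l+i-1)}{\Gamma(a_1+i)}\,x^i\cdot \sum_{j=0}^{2k+1}(-1)^j\binom{2k+1}{j}\frac{(a_2+j+1)\Gamma(a_1+b_1+2k+j)}{\Gamma(a_1+j)}\,y^j,
\]
which is precisely $P_{2l}^{\Product}(x)\,P_{2k+1}^{\Product}(y)$ (after noting $a_1+b_1+(2k+1)+j-1 = a_1+b_1+2k+j$), while the antisymmetric term in $Q_{i,j}$ yields $P_{2k+1}^{\Product}(x)\,P_{2l}^{\Product}(y)$. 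For the other three kernel entries, the variable carrying a Meijer $G$-function in Proposition \ref{PROPOSITIONK} replaces the corresponding $P$-polynomial by $Q_{2l}^{\Product}$ or $Q_{2k+1}^{\Product}$, yielding (\ref{KProductSum11}), (\ref{KProductSum12}), (\ref{KProductSum22}).

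The main obstacle is strictly bookkeeping: one must carefully match parameters, absorb the two $(a_2+\cdot+1)$ factors coming from $Q^{\Product} = D^{-1}QD^{-1}$ into the polynomials $P_m^{\Product}$ and functions $Q_p^{\Product}$, track the signs $(-1)^{i+j}/\Gamma(a_1+i)\Gamma(a_1+j)$ from (\ref{InverseJacobiType}) correctly, and pull the prefactor $\Gamma(b_1+1)$ and the $(k,l)$-dependent coefficient out of the inner sums. No further analytic input is needed; once the factorization of $C^{\Product}$ is observed, the remainder is direct algebraic manipulation.
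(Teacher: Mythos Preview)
Your proposal is correct and follows precisely the route the paper takes: combine Proposition \ref{PROPOSITIONK} with Proposition \ref{PROPOSITIONInverse}, identify $C^{\Product}$ as a diagonal conjugate of the matrix in the latter so that $q_{i,j}^{\Product}=(a_2+i+1)(a_2+j+1)Q_{i,j}$, and then reorganize the four double sums so that the inner sums factor into the functions $P_m^{\Product}$ and $Q_p^{\Product}$. The paper's own proof in Section \ref{SectionFinalTheoremProof} is only a terse pointer to these two propositions; your write-up actually spells out the diagonal-conjugation step and the factorization that the paper leaves implicit.
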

The proof of this Theorem will be given in Section \ref{SectionFinalTheoremProof}.
\section{Proofs of Theorem \ref{THEOREMMarkovKernel} and Theorem \ref{THEOREMProductTruncatedProcessTHETA}} \label{SECkernel}
\subsection{The Markov kernel associated with the Macdonald measure on Young diagrams}
In this section we use the notation of Macdonald \cite{Macdonald}.  Let $\Lambda$ be the algebra of symmetric functions over the field of complex numbers $\C$.
Let $P_{\lambda}(x;q,t)$, $Q_{\lambda}(x;q,t)$ where $q,t\in(0,1)$ denote the ordinary and dual $(q,t)$-Macdonald symmetric functions respectively indexed by Young diagrams $\lambda$.
We will generally suppress the $q$ and $t$ and write $P_{\lambda}(x)$, $Q_{\lambda}(x)$ or $P_{\lambda}$, $Q_{\lambda}$ unless the presence of the $q $ and $t$ is pertinent.

The Macdonald symmetric functions form a basis for $\Lambda$. Suppose we have two sets of variables $A=\left(a_1,a_2,\ldots\right)$ and $B=\left(b_1,b_2,\ldots\right)$. From the Cauchy identity for Macdonald symmetric functions we obtain that
\begin{equation}
\sum\limits_{\lambda\in\Y}M_{\Macdonald}\left(\lambda;A,B\right)=1,
\end{equation}
where $M_{\Macdonald}\left(\lambda;A,B\right)$ is defined by
\begin{equation}
M_{\Macdonald}\left(\lambda;A,B\right)=\frac{1}{\Pi\left(A;B\right)}P_{\lambda}\left(A\right)Q_{\lambda}\left(B\right) = \frac{1}{\Pi\left(A;B\right)}Q_{\lambda}\left(A\right)P_{\lambda}\left(B\right),
\end{equation}
and
\begin{equation}
\Pi\left(A;B\right)=\prod\limits_{i,j}\frac{\left(ta_ib_j;q\right)_{\infty}}{\left(a_ib_j;q\right)_{\infty}},\;\; \left(u;q\right)_{\infty}=\prod\limits_{i=1}^{\infty}\left(1-q^{i-1}u\right).
\end{equation}
If both sets of variables $A=\left(a_1,a_2,\ldots\right)$ and $B=\left(b_1,b_2,\ldots\right)$ give positive specializations of the algebra $\Lambda$ of symmetric functions then
$M_{\Macdonald}\left(\lambda;A,B\right)$ can be understood as a probability measure on the set of all Young diagrams $\Y$. We emphasize that $M_{\Macdonald}\left(\lambda;A,B\right)$
depends on the Macdonald parameters $q$ and $t$. In what follows we will refer to $M_{\Macdonald}\left(\lambda;A,B\right)$ as to \textit{the Macdonald measure} on Young diagrams.

The skew Macdonald symmetric functions $P_{\lambda/\mu}$, $Q_{\lambda/\mu}$ are defined by
\begin{equation}\label{SkewMacdonald}
P_{\lambda}\left(A,B\right)=\sum\limits_{\mu\in\Y}P_{\lambda/\mu}\left(A\right)P_{\mu}\left(B\right),\;\;
Q_{\lambda}\left(A,B\right)=\sum\limits_{\mu\in\Y}Q_{\lambda/\mu}\left(A\right)Q_{\mu}\left(B\right),
\end{equation}
as  identities on $\Lambda \otimes \Lambda$. Equations (\ref{SkewMacdonald}) suggest to introduce the \textit{Markov kernel}
$K_{\Markov}\left(\lambda,\mu;A,B\right)$ for the Macdonald measure  $M_{\Macdonald}\left(\lambda;A,B\right)$ by the formula
\begin{equation}
K_{\Markov}\left(\lambda,\mu;A,B\right)=\frac{1}{\Pi\left(A;B\right)}\frac{P_{\lambda}\left(B\right)}{P_{\mu}\left(B\right)}Q_{\lambda/\mu}\left(A\right) =\frac{1}{\Pi\left(A;B\right)}\frac{Q_{\lambda}\left(B\right)}{Q_{\mu}\left(B\right)}P_{\lambda/\mu}\left(A\right),
\end{equation}
where in the last equality we have exploited the relationship
$$
Q_{\lambda/\mu}=\frac{\left\langle P_{\mu},P_{\mu}\right\rangle}{\left\langle P_{\lambda},P_{\lambda}\right\rangle}P_{\lambda/\mu}.
$$
Here $\left\langle .,.\right\rangle$ is the scalar product in the algebra of symmetric functions defined as in Macdonald \cite{Macdonald}, Chapter VI, equation (1.5).
Using equations (\ref{SkewMacdonald}) we obtain
\begin{equation}
\sum\limits_{\mu\in\Y}K_{\Markov}\left(\lambda,\mu;C,B\right)M_{\Macdonald}\left(\mu;A,B\right)=M_{\Macdonald}\left(\lambda;C\sqcup A,B\right),
\end{equation}
where $C=\left(c_1,c_2,\ldots\right)$ is an additional sequence of variables, and $C\sqcup A$ denotes the union of two collections $C$ and $A$ of independent variables.
In addition,
\begin{equation}
\sum\limits_{\lambda\in\Y}K_{\Markov}\left(\lambda,\mu;A,B\right)=1,
\end{equation}
as it follows from the summation formula
\begin{equation}\label{SkewMacdonaldProperty}
\sum\limits_{\lambda\in\Y}P_{\lambda/\mu}\left(B\right)Q_{\lambda/\nu}\left(A\right)=\Pi\left(A;B\right)\sum\limits_{\tau\in\Y}Q_{\mu/\tau}\left(A\right)P_{\nu/\tau}\left(B\right),
\end{equation}
see Macdonald \cite{Macdonald}, VI. 7. If $A$ and $B$ give positive specializations of the algebra $\Lambda$ of symmetric functions then
$K_{\Markov}\left(\lambda,\mu;A,B\right)$ can be understood as a probability measure on the set of all Young diagrams $\Y$ parameterized by $A$, $B$, and $\mu$.

We note a remarkable identity which relates the variables and indices of Macdonald symmetric functions which will be used in the sequel. For any partitions $\lambda,\nu \in \Y$ of length $\le n$, we have
\begin{align} \label{DUALITY}
\frac{P_\lambda(q^{\nu_1} t^{n-1},q^{\nu_2} t^{n-2},\ldots,q^{\nu_n})}{P_\lambda(q^{n-1},q^{n-2},\ldots,1)} = \frac{P_\nu(q^{\lambda_1} t^{n-1},q^{\lambda_2} t^{n-2},\ldots,q^{\lambda_n})}{P_\nu(q^{n-1},q^{n-2},\ldots,1)},
\end{align}
see \cite[Chapter VI, (6.6)]{Macdonald}.

\subsection{The convergence of the Markov kernel to the distribution of the squared singular values of $TX$}
Assume that $S$ is a Haar distributed matrix taken from the unitary group $U(m)$ (in this case we say that the Jack parameter $\theta$ is equal to 1), from the orthogonal group $\left(\theta=\frac{1}{2}\right)$, or from the symplectic group $\left(\theta=2\right)$. Let $T$ be a $(n+\nu)\times l$ truncation of $S$, and assume that the conditions
$m\geq 2n+\nu$ and $m\geq n+\nu+1$ are satisfied. Let $X$ be a non-random matrix of size $l\times n$ with squared singular values $\left(x_1,\ldots,x_n\right)$ such that
$$
0<x_1<\ldots<x_n<1.
$$
In addition, assume that the sets of variables $A$ and $B$ are given by
\begin{equation}\label{AB}
A=\left(t^{\nu+1},t^{\nu+2},\ldots,t^{m-n}\right),\;\; B=\left(1,t,\ldots,t^{n-1}\right).
\end{equation}
If $A$ is given by equation (\ref{AB}), then $K_{\Markov}\left(\lambda,\mu;A,B\right)$ is concentrated on the Young diagrams with $n$ rows or less, so we can assume that $\lambda$ has $n$ rows at most. The next Proposition says that the distribution of the eigenvalues of $(TX)^*(TX)$ can be obtained by a limiting procedure from the Markov kernel
$K_{\Markov}\left(\lambda,\mu;A,B\right)$.
\begin{prop} \label{PROPOSITIONMarkovKernel}
Let $\lambda$ be a random Young diagram whose distribution is defined by $K_{\Markov}\left(\lambda,\mu;A,B\right)$. Assume that the Macdonald parameters $q$ and $t$ depend on $\epsilon>0$, and are given by
\begin{equation}
q=q(\epsilon)=e^{-\epsilon},\;\; t=t(\epsilon)=\left(q(\epsilon)\right)^{\theta}=e^{-\theta\epsilon}.
\end{equation}
Let $\mu(\epsilon)=\left(\mu_1(\epsilon),\ldots,\mu_n(\epsilon)\right)$ be a Young diagram with $n$ rows, and suppose that the length of each row of $\mu$
depends on $\epsilon$ in such a way that the limits
\begin{equation}
\underset{\epsilon\rightarrow 0+}{\lim}\left(\left(q(\epsilon)\right)^{\mu_1(\epsilon)}\right)=x_1,\ldots,\underset{\epsilon\rightarrow 0+}{\lim}\left(\left(q(\epsilon)\right)^{\mu_n(\epsilon)}\right)=x_n
\end{equation}
exist. In addition, let $y(\epsilon)=\left(y_1(\epsilon),\ldots,y_n(\epsilon)\right)$ be a random configuration associated with the Young diagram $\lambda$ as
$$
y_1(\epsilon)=e^{-\epsilon\lambda_1},\ldots,y_n(\epsilon)=e^{-\epsilon\lambda_n}.
$$
As $\epsilon\rightarrow 0+$, the distribution of $y_1(\epsilon)$, $\ldots$, $y_n(\epsilon)$ will coincide with that of squared singular values  of $TX$.
\end{prop}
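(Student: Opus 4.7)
The plan is to identify both sides of the distributional identity through a common Jack-function representation with parameter $\theta$. Under the Jack degeneration $t = q^\theta$, $q \to 1$, the Macdonald symmetric functions $P_\lambda(\cdot; q, t)$ degenerate to Jack symmetric functions $J_\lambda(\cdot; \theta)$, so the Markov kernel $K_{\Markov}(\lambda, \mu; A, B)$ degenerates, after rescaling, to a continuous density expressible through Jack functions. On the other hand, the density of squared singular values of $TX$ admits a Jack-function representation arising from the Gelfand pair structure on $(GL_n(\mathbb{F}), K)$ with $\mathbb{F} \in \{\mathbb{R}, \mathbb{C}, \mathbb{H}\}$ and $K \in \{O(n), U(n), \Sp(2n)\}$, whose zonal spherical functions are precisely the Jack functions with $\theta \in \{1/2, 1, 2\}$. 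The claim is that these two Jack-function expressions coincide.

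First I would compute the $\epsilon \to 0^+$ limit of $K_{\Markov}(\lambda, \mu; A, B)$ explicitly. Since $B = (1, t, \ldots, t^{n-1})$ has $n$ variables, $P_\lambda(B)$ vanishes unless $l(\lambda) \le n$, so the measure is supported on such $\lambda$. The principal specialization formula of \cite{Macdonald} gives explicit $q,t$-product expressions for $P_\lambda(B)$ and $P_\mu(B)$, and with $q^{\lambda_i} \to y_i$, $q^{\mu_i} \to x_i$ these converge to closed-form expressions in $y_i, x_i$ via the standard $q$-Pochhammer-to-gamma limits. The Cauchy factor $\Pi(A; B)$ is itself a finite product of $q$-Pochhammer symbols whose limit contributes only a normalization. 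The skew factor $Q_{\lambda/\mu}(A)$ with $A = (t^{\nu+1}, \ldots, t^{m-n})$ may be handled either by iterating single-variable branchings (each forcing a horizontal-strip condition) or by applying the duality identity (\ref{DUALITY}) to convert the ratio $P_\lambda(B)/P_\mu(B)$ into a form that recombines cleanly with the skew factor. Multiplying by the Jacobian $\prod_i (\epsilon y_i)^{-1}$ arising from $\lambda_i \to -\epsilon^{-1} \log y_i$ then yields a candidate continuous density on $(0,1)^n$.

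Finally, I would identify this limiting density with the density of squared singular values of $TX$. The matching is most cleanly performed by expanding the singular value density in zonal spherical functions for the relevant Gelfand pair and comparing term-by-term with the Jack-function expansion of the limiting Macdonald kernel; alternatively, one may invoke \cite{Ahn}, which already established analogous Jack degenerations in the orthogonal and symplectic symmetry classes, together with \cite{BorodinGorinStrahov} in the unitary case. The main obstacle is the careful bookkeeping of the several families of $q,t$-Pochhammer symbols through the Jack limit, so that the combinatorial factors combine into the Vandermonde and Cauchy-determinant structures characteristic of the singular value density; a secondary concern is establishing weak convergence of the rescaled discrete measures to the continuous density, which is routine because the positivity support of $Q_{\lambda/\mu}(A)$ confines $\lambda$ to a region whose rescaling is compact once $\mu$ is fixed.
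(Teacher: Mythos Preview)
Your outline is workable, but it takes a different route from the paper and introduces an unnecessary detour. The paper does \emph{not} compute the limiting density of $K_{\Markov}(\lambda,\mu;A,B)$ and then identify it with a singular value density. Instead it argues by matching Jack moments. On the random-matrix side, Lemma~\ref{LEMswitch_product} reduces to square factors, Lemma~\ref{LEMzonal} (the Gelfand-pair identity) yields
\[
\E J_\kappa(y;\theta)=\frac{J_\kappa(x;\theta)}{J_\kappa(1^n;\theta)}\,\E J_\kappa(u;\theta),
\]
and the Selberg integral (Lemma~\ref{LEMselberg}) evaluates the Jacobi expectation on the right as an explicit product of Gamma ratios. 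On the Macdonald side, the expectation $\E P_\kappa(q^{\lambda_1}t^{n-1},\ldots,q^{\lambda_n})$ is computed in closed form by applying the index--variable duality \eqref{DUALITY} twice together with the skew Cauchy identity \eqref{SkewMacdonaldProperty}, giving $P_\kappa(C_\mu)\,\Pi(A;C_\kappa)/\Pi(A;B)$. These two closed forms are seen to agree as $\epsilon\to 0$, and Stone--Weierstrass converts moment equality into distributional equality.

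Compared with this, your plan first computes the limiting density of the Markov kernel via principal specialization and $q$-Pochhammer asymptotics, and only afterwards performs the identification. Two comments. First, for general $A=(t^{\nu+1},\ldots,t^{m-n})$ with $m-n-\nu>1$, the skew factor $Q_{\lambda/\mu}(A)$ is a sum over chains of interlacing partitions, so the limiting density is an $(m-n-\nu-1)$-fold integral rather than a closed form; your ``iterating single-variable branchings'' handles this, but it is considerably more bookkeeping than the paper's one-line moment computation via duality. Second, and more importantly, your identification step (``expand the singular value density in zonal spherical functions and compare term by term'') is essentially the moment argument in disguise: you would still need the Gelfand-pair identity and the Selberg integral to compute the Jack moments of the singular values of $TX$, since that density is not known explicitly in advance for general $m$. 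So the density computation in your second paragraph is extra work that the paper's proof bypasses entirely; the paper's route is shorter precisely because the duality \eqref{DUALITY} lets one evaluate the Macdonald-side moments without ever writing down the kernel's limiting density.
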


Proposition \ref{PROPOSITIONMarkovKernel} follows from Proposition 3.9 in the previous work of the first author \cite{Ahn}. Since this is our crucial link with symmetric function theory, we provide an alternative proof here. We recall the Jack symmetric functions which are the workhorse behind this proposition. In the limit $q,t \to 1$ such that $t = q^\theta$, we define
\[ J_\lambda(x_1,\ldots,x_n;\theta) = \lim_{q\to 1} P_\lambda(x_1,\ldots,x_n) \]
for $\lambda \in \Y$, see \cite[Chapter VI, Section 10]{Macdonald}. The Jack functions for $\ell(\lambda) \le n$ form a basis for the space of symmetric polynomials in $n$-variables.

There are three key ingredients. The first is a connection between the Jack functions and products of matrices.

\begin{lem} \label{LEMzonal}
Suppose $V$ is a Haar distributed $O(n), U(n), \Sp(2n)$ for $\theta = 1/2,1,2$ respectively. Let $W$ and $X$ be deterministic $n\times n$ matrices.
Then
\begin{equation}\label{AverageZonal}
\E J_{\kappa}\left(X^*V^*W^*WVX;\theta\right)
=\frac{J_{\kappa}\left(X^*X;\theta\right)J_{\kappa}\left(W^*W;\theta\right)}{J_{\kappa}\left(1^n;\theta\right)},
\end{equation}
where $\kappa\in\Y$, $\ell(\kappa) \le n$, and $J_{\kappa}\left(A;\theta\right)$ denotes the value of the Jack polynomial on the eigenvalues of $A$.
\end{lem}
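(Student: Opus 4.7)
The plan is to deduce this as a direct consequence of the interpretation of Jack polynomials at parameters $\theta = 1/2, 1, 2$ as (normalized) zonal spherical functions for the Gelfand pairs $(GL_n(\mathbb{R}), O(n))$, $(GL_n(\mathbb{C}), U(n))$, and $(GL_n(\mathbb{H}), \Sp(2n))$ respectively. This identification is classical and can be found in Macdonald, Chapter VII; the paper's introduction already foregrounds this connection, so citing it is appropriate rather than reproving it.

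First I would define, for each symmetry class, the function
\begin{equation*}
\phi_\kappa(g) := \frac{J_\kappa(g^* g; \theta)}{J_\kappa(1^n; \theta)}, \qquad g \in GL_n(F),
\end{equation*}
where $F = \mathbb{R}, \mathbb{C}, \mathbb{H}$ for $\theta = 1/2, 1, 2$. The function $\phi_\kappa$ is $K$-biinvariant: if $k_1, k_2 \in K$ (with $K$ the compact form above) then $(k_1 g k_2)^*(k_1 g k_2) = k_2^* (g^* g) k_2$ has the same eigenvalues as $g^* g$, and since $J_\kappa$ depends only on eigenvalues, $\phi_\kappa(k_1 g k_2) = \phi_\kappa(g)$. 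The content of the Gelfand pair / zonal spherical function theory is that $\phi_\kappa$ is, up to normalization, an eigenfunction for the algebra of $K$-biinvariant functions on $G$ under convolution, and in particular satisfies the functional equation
\begin{equation*}
\int_K \phi_\kappa(g_1 k g_2) \, dk = \phi_\kappa(g_1)\, \phi_\kappa(g_2)
\end{equation*}
for all $g_1, g_2 \in G$.

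Applying this identity with $g_1 = W$, $g_2 = X$, and the Haar variable $k = V$ gives
\begin{equation*}
\E \, \frac{J_\kappa(X^* V^* W^* W V X; \theta)}{J_\kappa(1^n; \theta)} = \frac{J_\kappa(W^* W; \theta)}{J_\kappa(1^n; \theta)} \cdot \frac{J_\kappa(X^* X; \theta)}{J_\kappa(1^n; \theta)},
\end{equation*}
which after clearing one factor of $J_\kappa(1^n; \theta)$ is exactly \eqref{AverageZonal}.

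The main conceptual step is the invocation of the functional equation; once the identification of $\phi_\kappa$ with the zonal spherical function is accepted, the rest is a single substitution. The only genuine delicacy is the $\theta = 2$ case, where ``$X^*$'' must be read in the quaternionic sense and $\Sp(2n)$ acts by the $2 \times 2$ Pauli representation on $\mathbb{C}^{2n}$; the eigenvalue statement goes through because the eigenvalues of a quaternionic Hermitian matrix appear with even multiplicity in the $2n \times 2n$ complex representation, and the Jack polynomial at $\theta = 2$ evaluated on the $n$ quaternionic eigenvalues is the natural object paired with $\Sp(2n)$ in this Gelfand pair. For $\theta = 1$ the statement reduces to the classical fact that $\E s_\kappa(X^* V^* W^* W V X) = s_\kappa(W^* W) s_\kappa(X^* X) / s_\kappa(1^n)$, which follows from Schur-Weyl duality.
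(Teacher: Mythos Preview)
Your proposal is correct and follows essentially the same route as the paper: both identify $J_\kappa(g^*g;\theta)/J_\kappa(1^n;\theta)$ with the zonal spherical function for the Gelfand pair $(GL_n(F),K)$ (citing Macdonald, Chapter VII) and then read off the statement directly from the functional equation $\int_K \phi_\kappa(g_1 k g_2)\,dk = \phi_\kappa(g_1)\phi_\kappa(g_2)$ with $g_1=W$, $g_2=X$. Your additional remarks on biinvariance and on the quaternionic bookkeeping are helpful clarifications but do not alter the argument.
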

\begin{proof}
Equation (\ref{AverageZonal}) follows immediately from the functional relation for zonal spherical functions, and from the fact that the zonal spherical functions for the relevant Gelfand pairs can be written in terms of symmetric functions.
Indeed, consider first the case corresponding to $\theta=1$. Let $G=GL\left(n;\C\right)$, and let $K$ be the unitary group $U(n)$. It is known that $\left(G,K\right)$ is a Gelfand pair,
see \cite[Chapter VII (\S 5)]{Macdonald}. Denote by $\Omega_{\lambda}(x)$, $x\in G$, the zonal spherical function associated with $(G,K)$. Then we have
\begin{equation}\label{SF}
\Omega_{\lambda}(x)=\frac{s_{\lambda}\left(x^*x\right)}{s_{\lambda}\left(1^n\right)},\;\; x\in G,
\end{equation}
where by $s_{\lambda}\left(x^*x\right)$ we mean the Schur polynomial $s_{\lambda}$ evaluated on eigenvalues of $x^*x$. Since $\Omega_{\lambda}(x)$ is the zonal spherical function it satisfies the functional relation
\begin{equation}
\int\limits_{U(n)}\Omega_{\lambda}(xky)dk=\Omega_{\lambda}(x)\Omega_{\lambda}(y),
\end{equation}
for all $x,y\in GL(n,\C)$. This relation can be rewritten more explicitly as
\begin{equation}\label{us}
\int\limits_{U(n)}s_{\lambda}\left((xky)^*(xky)\right)dk=\frac{s_{\lambda}\left(x^*x\right)s_{\lambda}\left(y^*y\right)}{s_{\lambda}\left(1^n\right)},
\end{equation}
which is equivalent to the statement of the Lemma  for $\theta=1$. If $G=GL\left(n,\R\right)$, $K=O(n)$ (the case corresponding to $\theta=\frac{1}{2}$),
or $G=GL(n,\mathbb{H})$, $K=Sp(2n)$ (the case corresponding to $\theta=2$), then the zonal spherical functions for the Gelfand pairs $(GL(n,\R),O(n))$, $(GL(n,\mathbb{H}),Sp(2n))$ are given by the same equation (\ref{SF}) with the Schur polynomials replaced by the Jack polynomials with $\theta=1/2$ or $\theta=2$
respectively, see \cite[Chapter VII (3.24)]{Macdonald} and \cite[Chapter VII (6.20)]{Macdonald}. As a result, one gets $\theta=1/2$ and $\theta=2$
analogues of equation (\ref{us}) which can be interpreted as in the statement of Lemma \ref{LEMzonal}.
\end{proof}
The second key ingredient is a set of integral formulas known as the Selberg integral and its generalizations. These identities can be found in \cite[(12.3),(12.46),(12.143)]{Forrester}.

\begin{lem} \label{LEMselberg}
For any $a,b,\theta > 0$, we have
\begin{align*}
S_n(a,b,\theta) & := \int_0^1 \cdots \int_0^1 \prod_{1 \le j \le k \le n} |u_j - u_k|^{2\theta} \prod_{j=1}^n (u_j)^a (1 - u_j)^b dx_j \\
& = \prod_{j=0}^{n-1} \frac{\Gamma(a + 1 + j\lambda) \Gamma(b + 1 + j\lambda) \Gamma(1 + (j+1)\theta)}{\Gamma(a + b + 2 + (N + j - 1)\theta) \Gamma(1 + \theta)}.
\end{align*}
Furthermore if $\kappa \in \Y$, then
\begin{align*}
& \frac{1}{S_n(a,b,\theta)} \int_0^1 du_1 \cdots du_N \prod_{j=1}^n (u_j)^a (1 - u_j)^b J_\kappa(u_1,\ldots,u_n;\theta) \prod_{1 \le j < k \le n} |u_j - u_k|^{2\theta} \\
& \quad \quad = J_\kappa((1)^n;\theta) \prod_{j=1}^n \frac{\Gamma(a + \theta(n - j) + 1 + \kappa_j)}{\Gamma(a + b + \theta(2n - j - 1) + 2 + \kappa_j)} \frac{\Gamma(a + b + \theta(2n - j - 1) + 2)}{\Gamma(a + \theta(n - j) + 1)}.
\end{align*}
\end{lem}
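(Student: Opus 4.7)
The lemma combines the classical Selberg integral with its Jack-polynomial generalization due to Kadell. My plan is to prove them in sequence: first the Selberg integral itself via Aomoto's recursion, then bootstrap to the Jack identity using the orthogonality of Jack polynomials with respect to the Selberg measure and the Jack--Pieri rule.

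For $S_n(a,b,\theta)$, I would follow Aomoto's method. The key observation is that $\int \sum_j \partial_j(u_j \cdot \Phi(u)) \, du = 0$ whenever $\Phi$ vanishes on the boundary of $[0,1]^n$, which is ensured by the factors $u_j^a$ and $(1-u_j)^b$ for suitable positive parameters. Applying this with $\Phi$ equal to the Selberg integrand, expanding the derivative, and symmetrizing in the $u_j$ yields a two-term recursion in $a$ of the form
\[ \frac{S_n(a+1,b,\theta)}{S_n(a,b,\theta)} = \prod_{j=0}^{n-1}\frac{a + 1 + j\theta}{a + b + 2 + (n + j - 1)\theta}. \]
Together with the base case $n=1$ (the Euler beta integral) and meromorphic continuation in $a,b$, this recursion uniquely identifies $S_n(a,b,\theta)$ as the product given in the statement.

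For the Jack generalization, the strategy is to exploit the fact that $\{J_\kappa(\cdot;\theta)\}_{\ell(\kappa)\le n}$ forms an orthogonal basis for symmetric polynomials in $n$ variables with respect to the Selberg-type inner product
\[ \langle f, g \rangle := \int_{[0,1]^n} f(u) g(u) \prod_{j=1}^n u_j^a (1 - u_j)^b \prod_{1 \le j < k \le n}|u_j - u_k|^{2\theta} \, du. \]
The integral on the left side of the Kadell identity equals $\langle J_\kappa, 1 \rangle / S_n(a,b,\theta)$, so the claim amounts to computing the coefficient of $J_{(0)} = 1$ in the Selberg-orthogonal expansion of $J_\kappa$. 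I would verify the formula by induction on $|\kappa|$: at $\kappa = 0$ it collapses to the Selberg integral of part one, and the induction step matches the Pieri-type action of multiplication by a suitable symmetric function (for instance an elementary or single-row Jack function) against the Gamma-function shifts $\kappa_j \mapsto \kappa_j + 1$ on the right.

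The main obstacle I anticipate is bookkeeping under the Jack--Pieri rule: each added box generates specific arm-colength and leg-colength factors that must precisely match the shifts $\Gamma(a + \theta(n - j) + 1 + \kappa_j)/\Gamma(a + b + \theta(2n - j - 1) + 2 + \kappa_j)$ on the right-hand side. A useful reduction is to factor out $J_\kappa(1^n;\theta)$ using its principal specialization formula, a product of Gamma-ratios over the cells of $\kappa$, which absorbs the $\kappa$-dependence of the numerator shifts. The denominator shifts, which encode the effect of the $(1-u_j)^b$ weight, can then be verified separately, either by an analogous Pieri-rule calculation or by the involution $u_j \mapsto 1 - u_j$ combined with the $a \leftrightarrow b$ symmetry of the integrand.
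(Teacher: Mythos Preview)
The paper does not prove this lemma; it cites \cite[(12.3),(12.46),(12.143)]{Forrester} for both identities and treats them as known (the Selberg integral and Kadell's Jack-polynomial generalization).

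Your Aomoto-style argument for the first identity is standard and fine. Your approach to the second identity, however, has a genuine error. You assert that the Jack polynomials $\{J_\kappa\}$ are orthogonal with respect to the Selberg inner product $\langle f,g\rangle = \int_{[0,1]^n} fg\prod_j u_j^a(1-u_j)^b\prod_{j<k}|u_j-u_k|^{2\theta}\,du$, and then frame the problem as extracting the coefficient of $J_{(0)}=1$ in $J_\kappa$. But this orthogonality is false: if it held, $\langle J_\kappa, J_{(0)}\rangle$ would vanish for every nonempty $\kappa$, forcing the left-hand side of the identity to be zero --- in visible contradiction with the nonzero Gamma product on the right. The polynomials orthogonal for this measure are the multivariate Jacobi (Heckman--Opdam) polynomials, not the Jack polynomials themselves. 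The Pieri-rule induction you sketch afterward does not obviously survive dropping this claim: without orthogonality, multiplying by a one-row Jack and integrating no longer isolates a single term, so the recursion does not close. Standard proofs of Kadell's formula (via Anderson-type integration, Okounkov's interpolation polynomials, or shift operators) proceed quite differently.
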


The final ingredient relates the squared singular values of products of square matrices with that of rectangular matrices, assuming distributional invariance with respect to $O(m),U(m)$, or $\Sp(2m)$.

\begin{lem} \label{LEMswitch_product}
Let $n,n_1,n_2,m \in \mathbb{Z}_+$ such that $n \le n_i \le m$, $i = 1,2$. Suppose $T$ and $\wt{T}$ are respectively $n_2 \times n_1$ and $n_2\times n$ truncations of a random Haar $O(m), U(m)$, or $\Sp(2m)$ matrix if $\theta = 1/2,1,2$ respectively. Let $X$ be a fixed $n_1 \times n$ matrix and $\wt{X} = (X^*X)^{1/2}$. If $\sigma(A) \in \R^N$ denotes the singular values of a matrix $A$, then $\sigma(TX) \overset{d}{=} \sigma(\wt{T} \wt{X})$.
\end{lem}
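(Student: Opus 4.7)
The plan is to combine the singular value decomposition of the fixed matrix $X$ with the bi-invariance of the Haar measure on $G(m) \in \{O(m), U(m), \Sp(2m)\}$ under block-diagonal multipliers. Write $X = U D V^*$ via SVD, with $U \in G(n_1)$, $V \in G(n)$, and $D = \begin{pmatrix} D_0 \\ 0 \end{pmatrix}$, where $D_0$ is the $n \times n$ diagonal matrix of singular values of $X$; then $\wt{X} = (X^*X)^{1/2} = V D_0 V^*$.

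The first step absorbs the left SVD factor into $T$. Embed $U$ into $G(m)$ as $\bar{U} := \diag(U, I_{m-n_1})$. By right-invariance of the Haar measure, $S \overset{d}{=} S \bar{U}$; reading off the top-left $n_2 \times n_1$ block yields $T \overset{d}{=} T U$. Hence $TX = (TU) D V^* \overset{d}{=} T D V^*$, and since right-multiplication by $V^* \in G(n)$ preserves singular values,
\[ \sigma(TX) \overset{d}{=} \sigma(T D) = \sigma(\wt{T} D_0), \]
where the last equality uses that $TD$ equals the product of the first $n$ columns of $T$ with $D_0$, and by the hypothesis these first $n$ columns of $T$ are, as entries of $S$, precisely the truncation $\wt{T}$.

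The second step is symmetric. With $\bar{V} := \diag(V, I_{m-n}) \in G(m)$, Haar invariance applied to $S$ on the right gives $\wt{T} \overset{d}{=} \wt{T} V$. Consequently
\[ \sigma(\wt{T}\, \wt{X}) = \sigma(\wt{T} V D_0 V^*) = \sigma(\wt{T} V D_0) \overset{d}{=} \sigma(\wt{T} D_0), \]
and chaining the two displays yields the claim $\sigma(TX) \overset{d}{=} \sigma(\wt{T}\, \wt{X})$.

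No serious computation is required; the entire content is two applications of Haar invariance at the two ends of the product $U D V^*$. The only point that warrants care is the symplectic case, where the matrices should be read in the quaternion sense (equivalently, as complex matrices commuting with the symplectic structure) and one must appeal to the quaternion SVD with $U, V$ symplectic and $D$ real diagonal; this is classical and does not alter the structure of the argument.
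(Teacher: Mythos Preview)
Your argument is correct and follows essentially the same route as the paper: take the SVD $X = U D V^*$, use right-invariance of the Haar measure on $G(m)$ (via the block-diagonal embeddings $\bar U$ and $\bar V$) to absorb $U$ into $T$ and $V$ into $\wt T$, and observe that the remaining middle factor reduces the product to $\wt T D_0$. The paper phrases the computation at the level of $(TX)^*(TX)$ rather than directly in terms of singular values, but the content is identical; your explicit mention of the quaternion SVD in the symplectic case is a welcome clarification.
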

\begin{rem}The result of Lemma \ref{LEMswitch_product} was proved in a more general setting in Ipsen and Kieburg \cite{IpsenKieburg}. We give an independent proof below for the reader's convenience.
\end{rem}
\begin{proof}
Let $P_{a\times b}$ denote the $a\times b$ matrix with the $\min(a,b)\times \min(a,b)$ identity matrix in the upper left corner and $0$ elsewhere. The singular value decomposition of $X$ gives
\[ X = U P_{n_1\times n} \Sigma V^* \]
where $\Sigma = \mathrm{diag}(\sigma(X))$, $U,V$ are Haar distributed orthogonal, unitary, or symplectic matrices depending on whether $\theta = 1/2,1,$ or $2$ with the appropriate dimensions. Then
\[ (TX)^*(TX) = V \Sigma P_{n\times n_1} U^* T^* T U P_{n_1\times n} \Sigma V^*. \]
Observe that
\[ P_{n\times n_1} U^* T^* T U P_{n_1\times n} \overset{d}{=} \wt{T}^* \wt{T} \overset{d}{=} V^* \wt{T}^* \wt{T} V \]
using the fact that the distributions of $T$ and $\wt{T}$ are invariant under right translation by orthogonal, unitary, symplectic matrices for $\theta = 1/2,1,2$ respectively. Thus
\[ (TX)^*(TX) \overset{d}{=} V \Sigma V^* \wt{T}^* \wt{T} V \Sigma V^* = (\wt{T} \wt{X})^* \wt{T} \wt{X} \]
which proves the lemma.
\end{proof}

We are now ready to prove Proposition \ref{PROPOSITIONMarkovKernel}.

\begin{proof}[Proof of Proposition \ref{PROPOSITIONMarkovKernel}.]
We first recast the statement of Proposition \ref{PROPOSITIONMarkovKernel} in terms of square matrices. By Lemma \ref{LEMswitch_product}, the distribution for the squared singular values of $TX$ is the same as that of $\wt{T} \wt{X}$ where
\[ \wt{X} = (X^*X)^{1/2} \]
and $\wt{T}$ is a $(n+\nu)\times n$ truncation of $S$. The squared singular values of $\wt{X}$ are $(x_1,\ldots,x_n)$ and the squared singular values $(u_1,\ldots,u_n)$ of $\wt{T}$ are distributed as the Jacobi ensemble \eqref{DISTRIBUTIONONEMATRIX}. By right invariance,
$$
\wt{T} \overset{d}{=} \wt{T} V,
$$
where $V$ is a Haar distributed orthogonal, unitary, or symplectic matrix depending on $\theta=1/2$, $\theta=1$, or $\theta=2$ respectively. By Lemma \ref{LEMzonal}, the squared singular values $(y_1,\ldots,y_n)$ of $\wt{T} \wt{X}$ satisfy
\[ \E J_\kappa(y_1,\ldots,y_n;\theta) = \frac{J_\kappa(x_1,\ldots,x_n;\theta)}{J_\kappa(1^n;\theta)} \E J_\kappa(u_1,\ldots,u_n;\theta), \]
where the expectation in the right-hand side is over the Jacobi ensemble.
By Lemma \ref{LEMselberg}, the latter expectation is
\begin{align*}
& \int_0^1 \! \cdots \! \int_0^1 \prod_{1 \le j < k \le n} |u_j - u_k|^{2\theta} \frac{J_\kappa(u_1,\ldots,u_n;\theta) \prod_{j=1}^n (u_j)^{\theta(\nu + 1) - 1} (1 - u_j)^{\theta(m - 2n - \nu + 1) - 1} du_i}{S_n(\theta(\nu+1)-1,\theta(m - 2n - \nu + 1) - 1,\theta)}
\end{align*}
and can be evaluated so that
\begin{align} \label{MOMENTS}
\E J_\kappa(y_1,\ldots,y_n;\theta) = J_\kappa(x_1,\ldots,x_n;\theta) \prod_{j=1}^n \frac{\Gamma(\theta(\nu + n - j + 1) + \kappa_j)}{\Gamma(\theta(m - j + 1) + \kappa_j)} \frac{\Gamma(\theta(m - j + 1))}{\Gamma(\theta(\nu + n - j + 1))}.
\end{align}
We may view the distribution of $(y_1,\ldots,y_n)$ as being on the set of ordered real numbers $\{y_1 \ge \cdots \ge y_n\}$. Since the distribution of $(y_1,\ldots,y_n)$ is compactly supported, the Jack functions for $\ell(\kappa) \le n$ form a basis for symmetric polynomials in $n$-variables, and symmetric polynomials separate points in the set $\{y_1 \ge \cdots \ge y_n\}$, the Stone-Weierstrass theorem implies that the expectations \eqref{MOMENTS} determine the distribution of $(y_1,\ldots,y_n)$. Thus, to complete the proof of our proposition, it suffices to show that
\[ \lim_{\epsilon \to 0} \E J_\kappa(y_1(\epsilon),\ldots,y_n(\epsilon);\theta) = \E J_\kappa(y_1,\ldots,y_n;\theta), \]
where $y_1(\epsilon),\ldots,y_n(\epsilon)$ in the left-hand side is the random configuration introduced in the statement of Proposition \ref{PROPOSITIONMarkovKernel}.
Using the fact that
\[ \lim_{\epsilon \to 0} P_\kappa(z_1 t^{n-1}(\epsilon),z_2 t^{n-2}(\epsilon),\ldots,z_n;q(\epsilon),t(\epsilon)) = J_\kappa(z_1,\ldots,z_n;\theta) \]
uniformly over compact sets, we see that it is enough to show
\begin{align} \label{MACDONALDCONVERGENCE}
\lim_{\epsilon \to 0} \E P_\kappa(y_1(\epsilon)t^{n-1},\ldots,y_n(\epsilon);q,t) = \E J_\kappa(y_1,\ldots,y_n;\theta).
\end{align}
where $\E J_\kappa(y_1,\ldots,y_n;\theta)$ is given explicitly by equation (\ref{MOMENTS}). Let $C_{\kappa} = (q^{\kappa_1} t^{n-1},\ldots,q^{\kappa_n})$ for any Young diagram $\kappa$ with $n$ rows, and compute
\begin{align*}
\E P_\kappa(y_1(\epsilon)t^{n-1},\ldots,y_n(\epsilon)) &= \frac{1}{\Pi(A;B)} \sum_{\lambda \in \Y} P_\kappa(C_{\lambda}) \frac{P_\lambda(B)}{P_\mu(B)} Q_{\lambda/\mu}(A) \\
&= \frac{1}{\Pi(A;B)} \frac{P_\kappa(B)}{P_\mu(B)} \sum_{\lambda \in \Y} P_\lambda(C_{\kappa}) Q_{\lambda/\mu}(A) \\
&= \frac{\Pi(A;C_{\kappa})}{\Pi(A;B)} \frac{P_\kappa(B)}{P_\mu(B)} P_\mu(C_{\kappa}) = P_\kappa(C_{\mu}) \frac{\Pi(A;C_{\kappa})}{\Pi(A;B)}
\end{align*}
where $A$, $B$ are defined by equation (\ref{AB}), the second equality uses \eqref{DUALITY}, the third uses the specialization of \eqref{SkewMacdonaldProperty} to the case $\nu=\emptyset$, and the final uses \eqref{DUALITY} again. We now observe that the right hand side converges to the right hand side of \eqref{MOMENTS}. Thus we have shown \eqref{MACDONALDCONVERGENCE}, completing our proof.
\end{proof}

\subsection{Proof of Theorem \ref{THEOREMMarkovKernel}}\label{SectionProofFirstMainTheorem}
If $m-n=\nu+1$ (as we have assumed in the statement of Theorem \ref{THEOREMMarkovKernel}), then $A$ turns into the list containing $t^{\nu+1}$ only.
By Proposition \ref{PROPOSITIONMarkovKernel}, we establish the Markov kernel formula by showing that
\begin{align} \label{eq:prekernel}
\frac{(t^{\nu+1};q)_\infty}{(t^{\nu+n+1};q)_\infty} \cdot \frac{Q_\lambda(1,t,\ldots,t^{n-1})}{Q_\mu(1,t,\ldots,t^{n-1})} P_{\lambda/\mu}(t^{\nu+1})
\end{align}
converges to \eqref{TheJointDistributionTwoMatrices} as $\eps \to 0$, where $\lambda_i = -\eps^{-1} \log y_i$, $\mu_i = -\eps^{-1} \log y_i'$, $q = e^{-\eps}$, and $t = q^\theta$. A similar computation can be found in \cite{BorodinGorin} for the $\beta$-Jacobi corners process.

We use the following facts repeatedly: as $q \to 1$ we have
\begin{align} \label{eq:pochhammer}
(i)~\frac{(q^au;q)_\infty}{(q^bu;q)_\infty} \to (1 - u)^{b-a}, \quad (ii)~ \frac{(q^a;q)_\infty}{(q^b;q)_\infty} \to \frac{\Gamma(b)}{\Gamma(a)} \eps^{b-a}
\end{align}
where the former holds uniformly over compact subsets of $0 < u < 1$, see Lemma 2.4 in Borodin and Gorin \cite{BorodinGorin}.

For $\ell(\lambda) = n$, and $M\geq n$ we have
\begin{align} \label{MACDONALDFormula}
\begin{split}
P_\lambda(1,\ldots,t^{M-1}) &= t^{\sum_{i=1}^n (i-1)\lambda_i} \prod_{1 \le i < j \le M} \frac{(q^{\lambda_i - \lambda_j} t^{j-i};q)_\infty}{(q^{\lambda_i - \lambda_j} t^{j-i+1};q)_\infty} \frac{(t^{j-i+1};q)_\infty}{(t^{j-i};q)_\infty} \\
&= t^{\sum_{i=1}^n (i-1)\lambda_i} \prod_{1 \le i < j \le n} \frac{(q^{\lambda_i - \lambda_j} t^{j-i};q)_\infty}{(q^{\lambda_i-\lambda_j} t^{j-i+1};q)_\infty} \prod_{i=1}^n \prod_{j=n+1}^M \frac{(q^{\lambda_i} t^{j-i};q)_\infty}{(q^{\lambda_i} t^{j-i+1};q)_\infty} \prod_{\substack{i < j \\ 1 \le i \le n \\ 1 \le j \le M}} \frac{(t^{j-i+1};q)_\infty}{(t^{j-i};q)_\infty},
\end{split}
\end{align}
see  the proof of Theorem 2.8 in Borodin and Gorin \cite{BorodinGorin},  and Macdonald \cite{Macdonald}, Chapter VI, equation (6.11).
If $q = e^{-\eps}$, $t = q^\theta$ and $\lambda_i = - \eps^{-1} \log y_i$, then as $\eps \to 0$ we have
\begin{align} \label{ASYMPTOTICSt}
\begin{gathered}
t^{\sum_{i=1}^n \lambda_i(i-1)} \to \prod_{i=1}^n y_i^{\theta(i-1)}, \quad \frac{(q^{\lambda_i - \lambda_j} t^{j-i};q)_\infty}{(q^{\lambda_i - \lambda_j} t^{j-i+1};q)_\infty} \to (1 - y_i/y_j)^\theta, \\
\frac{(q^{\lambda_i} t^{j-i};q)_\infty}{(q^{\lambda_i} t^{j-i+1};q)_\infty} \to (1 - y_i)^\theta, \quad \frac{(t^{j-i+1};q)_\infty}{(t^{j-i};q)_\infty} \sim \frac{\Gamma(\theta(j-i))}{\Gamma(\theta(j-i+1))} \eps^{-\theta}.
\end{gathered}
\end{align}
From \cite{Macdonald}, Chapter VI, equation (6.19) we have
\[ \frac{Q_\lambda}{P_\lambda} = b_\lambda = \prod_{1 \le i \le j \le \ell(\lambda)} \frac{f(q^{\lambda_i - \lambda_j} t^{j-i})}{f(q^{\lambda_i - \lambda_{j+1}} t^{j-i})}, \quad f(u) = \frac{(tu;q)_\infty}{(qu;q)_\infty}. \]
We obtain
\[ b_\lambda \sim \prod_{i=1}^n \frac{f(1)}{(1 - y_i/y_{i+1})^{1-\theta}} \frac{(1 - y_i/y_{i+1})^{1-\theta}}{(1 - y_i/y_{i+2})^{1-\theta}} \cdots \frac{(1 - y_i/y_n)^{1-\theta}}{(1 - y_i)^{1-\theta}} = \prod_{i=1}^n \frac{f(1)}{(1 - y_i)^{1-\theta}} \sim \frac{\eps^{n(1-\theta)}}{\Gamma(\theta)^n} \prod_{i=1}^n (1 - y_i)^{\theta-1}. \]
These asymptotics imply
\begin{align} \label{eq:Q}
Q_\lambda(1,\ldots,t^{n-1}) \sim \frac{\eps^{-\theta n(n-1)/2 + n(1-\theta)}}{\Gamma(\theta)^n} \prod_{1 \le i < j \le n} \frac{\Gamma(\theta(j-i))}{\Gamma(\theta(j-i+1))} (y_j - y_i)^\theta \prod_{i=1}^n (1 - y_i)^{\theta-1}.
\end{align}
Given $\lambda, \mu \in \Y$ such that $\ell(\lambda),\ell(\mu) \le n$, let $\mu \prec \lambda$ denote the interlacing relation
\begin{equation}\label{interlacingXXX}
\lambda_1 \ge \mu_1 \ge \cdots \ge \lambda_n \ge \mu_n.
\end{equation}
In order to find the asymptotics of $P_{\lambda/\mu}\left(t^{\nu+1}\right)$ we use the combinatorial formula for the skew Macdonald symmetric functions
$P_{\lambda/\mu}$ (see \cite[Chapter VI, (6.19)]{Macdonald}) representing these functions as sums over all column-strict (skew) tableaux of shape $\lambda-\mu$. Restricting $P_{\lambda/\mu}$ to the Macdonald polynomial in a single variable $x$, we obtain
\[ P_{\lambda/\mu}(x;q,t) = \psi_{\lambda/\mu}(x) \delta_{\mu \prec \lambda} \]
where $\delta_{\mu \prec \lambda}$ is equal to $1$ in case condition (\ref{interlacingXXX}) is satisfied, and is equal to zero otherwise.  Here
\[ \psi_{\lambda/\mu}(x) = x^{|\lambda| - |\mu|} \prod_{1 \le i \le j \le \ell(\mu)} \frac{f(q^{\mu_i - \mu_j} t^{j-i}) f(q^{\lambda_i - \lambda_{j+1}} t^{j-i})}{f(q^{\lambda_i - \mu_j} t^{j-i}) f(q^{\mu_i - \lambda_{j+1}} t^{j-i})}, \]
where $|\lambda|$ denotes the number of boxes in the Young diagram $\lambda$.
Taking $\ell(\lambda) = \ell(\mu) = n$, we may write
\begin{align*}
\psi_{\lambda/\mu}(x) = x^{|\lambda| - |\mu|} f(1)^n \prod_{i=1}^n \frac{f(q^{\lambda_i} t^{n-i})}{f(q^{\mu_i} t^{n-i})}
\frac{1}{f(q^{\lambda_i - \mu_i})} \prod_{1 \le i < j \le n} \frac{f(q^{\mu_i - \mu_j} t^{j-i}) f(q^{\lambda_i - \lambda_j} t^{j-i-1})}{f(q^{\lambda_i - \mu_j} t^{j-i}) f(q^{\mu_i - \lambda_j} t^{j-i-1})}.
\end{align*}
If $\lambda_i = -\eps^{-1} \log(y_i)$ and $\mu_i = -\eps^{-1} \log(y_i')$, then
\begin{align} \label{eq:skew}
\begin{split}
P_{\lambda/\mu}(t^\alpha;q,t) & \sim \frac{\eps^{n(1-\theta)}}{\Gamma(\theta)^n} \prod_{i=1}^n \left( \frac{y_i}{y_i'} \right)^{\theta \alpha} \left( \frac{1 - y_i}{1 - y_i'} \right)^{1-\theta} \frac{1}{(1 - y_i/y_i')^{1-\theta}} \prod_{1 \le i < j \le n} \left[ \frac{(1 - y_i'/y_j')(1 - y_i/y_j)}{(1 - y_i/y_j')(1 - y_i'/y_j)} \right]^{1-\theta} \\
&= \frac{\eps^{n(1-\theta)}}{\Gamma(\theta)^n} \prod_{i=1}^n \frac{y_i^{\theta\alpha}}{(y_i')^{\theta(\alpha+1)-1}} \left( \frac{1 - y_i}{1 - y_i'} \right)^{1-\theta} \cdot \det \left[ \frac{1}{y_i' - y_j} \right]^{1 - \theta}
\end{split}
\end{align}
where the second line follows from the Cauchy determinant formula.

Combining \eqref{eq:pochhammer}, \eqref{eq:Q}, \eqref{eq:skew}, and $d\lambda_i \sim \eps^{-1} y_i^{-1} dy_i$ implies the convergence of \eqref{eq:prekernel} to
\[ \frac{1}{\Gamma(\theta)^n} \frac{\Gamma(\theta(\nu+n+1))}{\Gamma(\theta(\nu+1))} \cdot \left(\frac{\Delta(y)}{\Delta(x)}\right)^{\theta} \det \left[ \frac{1}{x_i - y_j} \right]^{1 - \theta} \prod_{i=1}^n \frac{y_i^{\theta(\nu+1)-1}}{x_i^{\theta(\nu+2)-1}} dy_i \]
which is exactly \eqref{TheJointDistributionTwoMatrices}.
\qed
\subsection{Proof of Theorem \ref{THEOREMProductTruncatedProcessTHETA}}\label{SectionSecondMainTheorem}
Set $X=T_1$ in the statement of Theorem  \ref{THEOREMMarkovKernel}, and apply Theorem \ref{THEOREMMarkovKernel}  to find the joint density
of squared singular values of $T_1$, $T_2T_1$, $T_3T_2T_1$, $\ldots$, $T_p\ldots T_1$. Taking into account that the distribution of the squared singular values of $T_1$ is given by equation
(\ref{DISTRIBUTIONONEMATRIX}) (with the normalization constant $Z_{n,\Jacobi}^{(\theta)}$ given by equation (\ref{ZJacobi})), we obtain formula (\ref{ProductTruncatedProcessTHETA}).
\qed
\section{Proof of Theorem \ref{THEOREMsvproduct}} \label{SECdixon}
We use an integration identity equivalent to one derived by Dixon \cite{Dixon}. A proof is provided in \cite[Exercise 4.2.2]{Forrester} based on unpublished work by Eric Rains.
\begin{prop} \label{prop:dixon}
Let $\alpha_0,\ldots,\alpha_n$, $\beta_0,\ldots,\beta_m$ have positive real parts, and suppose $\sum_{i=0}^n \alpha_i = \sum_{j=0}^m \beta_j$. Then
\begin{align*}
& \int_{R_n} \! dx_1 \cdots dx_n \, \prod_{1 \le i < j \le n} |x_j - x_i| \prod_{i=1}^n \left( \prod_{j=0}^n |a_j - x_i|^{\alpha_j - 1} \right) \left( \prod_{j=0}^m |b_j - x_i|^{-\beta_j} \right) \\
& \quad = \prod_{0 \le i < j \le n} (a_j - a_i)^{\alpha_i + \alpha_j - 1} \prod_{0 \le i < j \le m} (b_j - b_i)^{1 - \beta_i - \beta_j} \prod_{i=0}^n \prod_{j=0}^m |b_j - a_i|^{\alpha_i - \beta_j} \\
& \quad \quad \times \frac{\prod_{j=0}^n \Gamma(\alpha_j)}{\prod_{i=0}^m \Gamma(\beta_i)} \int_{R_m'} \! dx_1 \cdots dx_m \prod_{1 \le i < j \le m} |x_j - x_i| \prod_{i=1}^m \left( \prod_{j=0}^n |a_j - x_i|^{-\alpha_j} \right) \left( \prod_{j=0}^m |b_j - x_i|^{\beta_j - 1} \right)
\end{align*}
where $R_n$ and $R_m'$ denote the regions
\[ a_0 \le x_1 \le a_1 \le \cdots \le a_{n-1} \le x_n \le a_n, \quad b_0 \le x_1 \le b_1 \le \cdots \le b_{m-1} \le x_m \le b_m, \]
respectively.
\end{prop}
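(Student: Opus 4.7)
The plan is to prove the identity by induction on $n$, reducing to a Selberg-type base case via integration of one variable at a time. First I would verify that both sides are analytic in the parameters $(\alpha_i, \beta_j)$ on the domain of convergence and respect the evident duality that swaps the roles of $(a,\alpha)$ with $(b,\beta)$; this allows one to argue by analytic continuation from a convenient dense subset of parameters if needed. The base case $n=0$ reduces to an explicit evaluation of the interlacing integral on the right. Under the change of variables $x_i = b_{i-1} + t_i(b_i - b_{i-1})$ which maps each interval $[b_{i-1}, b_i]$ to $[0,1]$, this evaluation follows from the generalized Selberg integrals recorded in Lemma \ref{LEMselberg} together with the constraint $\sum\alpha_i=\sum\beta_j$ that forces the homogeneity degrees to match.

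For the inductive step, I would integrate out $x_n$ on the left-hand side. Provided no $b_j$ lies in $(a_{n-1},a_n)$, the integrand factors on that interval as a Beta-type weight times a function analytic across the interval; after the substitution $x_n = a_{n-1} + u(a_n - a_{n-1})$, expansion of the smooth factor in a suitable basis (powers of $u$, or equivalently in the variable $x_n - a_{n-1}$) and term-by-term integration via the Euler Beta integral should yield an $(n-1)$-dimensional integral whose parameters are shifted so that $\alpha_{n-1}$ and $\alpha_n$ combine into a single effective parameter $\alpha_{n-1}+\alpha_n$. Matching this shifted expression against the inductive hypothesis, and tracking the Gamma and power prefactors carefully, should yield the claimed equality.

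The main obstacle is the presence of $b_j$'s lying in the interior of some interval $(a_{i-1},a_i)$, where the factor $|b_j - x_i|^{-\beta_j}$ introduces a branch-point singularity inside the integration contour and obstructs the naive Taylor-expansion argument. I expect the cleanest route around this issue --- and likely the Rains argument cited via Forrester's Exercise 4.2.2 --- is to recast both sides as multidimensional Mellin-Barnes contour integrals by writing each power $|a_j - x_i|^{\alpha_j-1}$ and $|b_j - x_i|^{-\beta_j}$ through its Mellin transform. The identity then becomes a statement about contour deformation: the constraint $\sum_{i=0}^n \alpha_i = \sum_{j=0}^m \beta_j$ emerges naturally as the balance condition ensuring that the arcs at infinity contribute nothing when contours are closed, so that deforming contours through the critical strip exchanges an $n$-dimensional sum of residues for an $m$-dimensional one and thereby transforms the LHS directly into the RHS.
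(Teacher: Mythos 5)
The paper does not actually prove Proposition \ref{prop:dixon}; it cites it directly from \cite[Exercise 4.2.2]{Forrester}, attributed to unpublished work of Rains, so there is no in-paper argument to compare against. Judged on its own, your proposal has two genuine gaps that prevent it from being a proof. First, the base case $n=0$ does \emph{not} reduce to Lemma \ref{LEMselberg}. After the substitution $x_i = b_{i-1} + t_i(b_i - b_{i-1})$ the ordering constraints do decouple to $t_i \in [0,1]$, but the factor $\prod_{i<j}|x_j - x_i|$ does not become $\prod_{i<j}|t_j - t_i|^{2\theta}$, and the weight $|b_j - x_i|^{\beta_j-1}$ becomes a single-variable Beta weight in $t_i$ only for $j \in \{i-1,i\}$; for the other $j$ it stays a non-factorizing function of $t_i$ involving all the $b$'s. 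So the transformed integral has none of the Selberg structure. In fact the $n=0$ case, using $\alpha_0 = \sum_j\beta_j$ and a M\"obius change of variables sending $a_0$ to infinity, is precisely the one-sided Dixon--Anderson integral
\[
\int_{b_0\le x_1\le\cdots\le x_m\le b_m}\prod_{1\le i<j\le m}(x_j-x_i)\prod_{i=1}^m\prod_{j=0}^m|b_j-x_i|^{\beta_j-1}\,dx = \frac{\prod_{j=0}^m\Gamma(\beta_j)}{\Gamma(\sum_j\beta_j)}\prod_{0\le i<j\le m}(b_j-b_i)^{\beta_i+\beta_j-1},
\]
which is a separate nontrivial identity (with exponent $1$ on the differences and interlacing, rather than exponent $2\theta$ on a cube) and must be proved or assumed independently; it is not contained in Lemma \ref{LEMselberg}.

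Second, the inductive step is not carried out and, as sketched, would not close. Integrating $x_n$ over $[a_{n-1},a_n]$ against $|a_{n-1}-x_n|^{\alpha_{n-1}-1}|a_n-x_n|^{\alpha_n-1}$ after Taylor-expanding the remaining factors $\prod_{i<n}(x_n-x_i)\prod_{j\ne n-1,n}|a_j-x_n|^{\alpha_j-1}\prod_j|b_j-x_n|^{-\beta_j}$ yields an infinite sum of Euler Beta integrals whose coefficients still depend on $x_1,\ldots,x_{n-1}$; there is no reason this sum collapses to a single $(n-1)$-dimensional integrand of the claimed form with $\alpha_{n-1}+\alpha_n$ merged, absent a hypergeometric summation theorem or a second application of the Dixon--Anderson identity itself --- at which point the induction is parasitic on that identity rather than on Lemma \ref{LEMselberg}. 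Finally, the Mellin--Barnes fallback you propose is speculative and is not the route in Forrester's Exercise 4.2.2: the Rains argument there builds the two-parameter-family identity by introducing an auxiliary set of interlacing variables and iterating the one-sided Dixon--Anderson integral, evaluating a joint integral in two different orders, rather than by contour deformation. To turn your sketch into a proof you would need the Dixon--Anderson integral as the actual engine of both the base case and the step.
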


By sending $b_0 \to -\infty$, we obtain

\begin{cor} \label{cor:dixon}
Let $\alpha_0,\ldots,\alpha_n$, $\beta_0,\ldots,\beta_m$ have positive real parts, and suppose $\sum_{i=0}^n \alpha_i = \sum_{j=0}^m \beta_j$. Then
\begin{align*}
& \int_{R_n} \! dx_1 \cdots dx_n \, \prod_{1 \le i < j \le n} |x_j - x_i| \prod_{i=1}^n \left( \prod_{j=0}^n |a_j - x_i|^{\alpha_j - 1} \right) \left( \prod_{j=1}^m |b_j - x_i|^{-\beta_j} \right) \\
& \quad = \prod_{0 \le i < j \le n} (a_j - a_i)^{\alpha_i + \alpha_j - 1} \prod_{1 \le i < j \le m} (b_j - b_i)^{1 - \beta_i - \beta_j} \prod_{i=0}^n \prod_{j=1}^m |b_j - a_i|^{\alpha_i - \beta_j} \\
& \quad \quad \times \frac{\prod_{j=0}^n \Gamma(\alpha_j)}{\prod_{i=0}^m \Gamma(\beta_i)} \int_{R_m'} \! dx_1 \cdots dx_m \prod_{1 \le i < j \le m} |x_j - x_i| \prod_{i=1}^m \left( \prod_{j=0}^n |a_j - x_i|^{-\alpha_j} \right) \left( \prod_{j=1}^m |b_j - x_i|^{\beta_j - 1} \right)
\end{align*}
where $R_n$ and $R_m'$ denote the regions
\[ a_0 \le x_1 \le a_1 \le \cdots \le a_{n-1} \le x_n \le a_n, \quad -\infty < x_1 \le b_1 \le \cdots \le b_{m-1} \le x_m \le b_m, \]
respectively.
\end{cor}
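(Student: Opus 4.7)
The plan is to derive Corollary \ref{cor:dixon} from Proposition \ref{prop:dixon} by taking the limit $b_0 \to -\infty$ while holding the remaining parameters $\alpha_0,\ldots,\alpha_n$, $\beta_0,\ldots,\beta_m$, $a_0,\ldots,a_n$, $b_1,\ldots,b_m$ fixed. The balance condition $\sum_{i=0}^n \alpha_i = \sum_{j=0}^m \beta_j$ is preserved in the limit and is exactly what ensures that the $|b_0|$-powers cancel between the two sides, producing a finite nontrivial identity that one reads off as the claim of the corollary.

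The bulk of the work is a careful $|b_0|$-power count. On the LHS of the proposition the only $b_0$-dependent factor in the integrand is $\prod_{i=1}^n |b_0 - x_i|^{-\beta_0}$, which tends to $|b_0|^{-n\beta_0}$ uniformly for $x_i$ in the bounded region $R_n$. On the RHS, the prefactor contributions $\prod_{j=1}^m(b_j - b_0)^{1-\beta_0-\beta_j}$ and $\prod_{i=0}^n |b_0 - a_i|^{\alpha_i - \beta_0}$ combine, via the balance condition, to scale as $|b_0|^{m - (m+n)\beta_0}$. Inside the $R_m'$ integral the factors $\prod_{i=2}^m |b_0 - x_i|^{\beta_0 - 1}$ contribute $|b_0|^{(m-1)(\beta_0 - 1)}$ uniformly in $x_2,\ldots,x_m$, while the leading contribution to the $x_1$ integration comes from $x_1$ in compact subsets of $(-\infty,b_1]$, yielding an additional $|b_0|^{\beta_0-1}$. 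The total RHS scaling is $|b_0|^{m - (m+n)\beta_0 + m(\beta_0-1)} = |b_0|^{-n\beta_0}$, matching the LHS.

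After dividing both sides by $|b_0|^{-n\beta_0}$, the $|b_0|$-corrections on the LHS tend to $1$, and the rescaled RHS prefactor converges to exactly the prefactor displayed in Corollary \ref{cor:dixon}. For the integral I would split the $x_1$ range at $x_1 = -M$ for a large constant $M$. On the compact piece $x_1 \in [-M, b_1]$ the integrand, once the explicit $|b_0|$-power is extracted, converges pointwise to the corollary's integrand; uniform boundedness (away from the poles at $a_i$ and $b_j$, which are locally integrable) then gives convergence to the integral of the corollary restricted to $[-M,b_1]$. Sending $M \to \infty$ after $b_0 \to -\infty$ recovers the full integration over $(-\infty, b_1]$.

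The main obstacle is the bulk region $x_1 \in [b_0, -M]$, where a naive estimate shows contributions of the same leading order $|b_0|^{m(\beta_0-1)}$ as the main term. To separate the two, I would substitute $x_1 = b_0 u$ with $u \in [M/|b_0|, 1]$ and use the balance condition $\sum \alpha_i = \sum \beta_j$ to cancel the leading powers of $|b_0|$ arising from $\prod_{j=0}^n |a_j - x_1|^{-\alpha_j}$, $\prod_{j=1}^m |b_j - x_1|^{\beta_j - 1}$, and $\prod_{i=2}^m |x_i - x_1|$; what remains is a one-dimensional Beta-like integral whose cutoff at $u = M/|b_0|$ produces a factor $M^{-\beta_0}$ that vanishes as $M \to \infty$, uniformly in $b_0$. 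Combining this uniform smallness with the compact-region convergence above yields the full identity of Corollary \ref{cor:dixon}.
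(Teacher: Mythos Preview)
Your approach is exactly the paper's: the paper's entire proof is the single phrase ``By sending $b_0 \to -\infty$, we obtain''. You have supplied the justification the paper omits, and your $|b_0|$-power count on both sides is correct (the balance condition $\sum \alpha_i = \sum \beta_j$ is precisely what makes the powers match at $|b_0|^{-n\beta_0}$), as is your tail analysis showing the $x_1\in[b_0,-M]$ piece contributes $O(M^{-\mathrm{Re}\,\beta_0})$ relative to the main term.
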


To prove Theorem \ref{THEOREMsvproduct}, observe that the $p = 1$ case is clear. For our induction step, we apply the Markov kernel
\[ \frac{1}{\Gamma(\theta)^n} \frac{\Gamma(\theta(\nu_{p+1} + n + 1))}{\Gamma(\theta(\nu_{p+1} + 1))} \frac{\Delta\left( y^{(p+1)} \right)}{\Delta\left( y^{(p)} \right)^{2 \theta - 1}} \cdot \prod_{i,j=1}^n \left| y_i^{(p)} - y_j^{(p+1)} \right|^{\theta - 1} \prod_{i=1}^n \frac{\left(y_i^{(p+1)}\right)^{\theta(\nu_{p+1} + 1) - 1}}{\left( y_i^{(p)} \right)^{\theta(\nu_{p+1} + 2) - 1}} \]
to the density given by equation (\ref{svproduct}), integrating out $y^{(p)}$. We then apply Corollary \ref{cor:dixon} with
\[ \begin{array}{cccc}
a_0 = y_1^{(p+1)} & \cdots & a_{n-1} = y_n^{(p+1)} & a_n = 1  \\
\alpha_0 = \theta & \cdots & \alpha_{n-1} = \theta & \alpha_n = \theta(m_1 - 2n - \nu_1 + p)
\end{array},\]
$m=p$, and
\[ \begin{array}{ccccc}
b_0 = -\infty & b_1 = v_{p-1,1} & \cdots & b_{p-1} = v_{p-1,p-1} & b_p = 0 \\
\beta_0 = \theta(m_1 - n - \nu_{p+1}) & \beta_1 = \theta & \cdots & \beta_{p-1} = \theta & \beta_p = \theta(\nu_{p+1} - \nu_1 + 1)
\end{array} \]
where we require the condition $m_1 - n > \nu_{p+1}$ so that $\beta_0 > 0$. By our assumption, $\beta_p > 0$. Thus we obtain $I^{\theta,n,p+1}_{m_1;\nu_1,\ldots,\nu_{p+1}}$ as desired. \qed
\section{Proof of Theorem \ref{THEOREMDENSITYJACKREPRESENTATION}} \label{SECjack}
Assume
\[ \widetilde{A} = (t^{\nu+1},\ldots,t^{m_1 - n}, \ldots, t^{\nu_p+1},\ldots,t^{m_p - n}), \]
and suppose $T_1,T_2,\ldots$ be independent matrices where $T_i$ is a $(n+\nu_i)\times(n+\nu_{i-1})$ truncation of a Haar distributed matrices taken from $O(m_i)$, $U(m_i)$, or $\Sp(2m_i)$ for $\theta = 1/2,1,2$ respectively. Further assume that $n,m_i,\nu_i$ satisfy \eqref{Condition1} and \eqref{Condition2}.
\begin{prop} \label{PROPOSITIONJointLaw}
Let $\lambda$ be a random Young diagram whose distribution is defined by
$M_{\Macdonald}(\lambda;\widetilde{A},B)$. Assume that the Macdonald parameters $q$ and $t$ depend on $\epsilon > 0$, and are given by
\begin{equation}
q=q(\epsilon)=e^{-\epsilon},\;\; t=t(\epsilon)=\left(q(\epsilon)\right)^{\theta}=e^{-\theta\epsilon}.
\end{equation}
Let $x(\epsilon)=\left(x_1(\epsilon),\ldots,x_n(\epsilon)\right)$ be a random configuration associated with the Young diagram $\lambda$ as
$$
x_1(\epsilon)=e^{-\epsilon\lambda_1},\ldots,x_n(\epsilon)=e^{-\epsilon\lambda_n}.
$$
As $\epsilon\rightarrow 0+$, the distribution of $x_1(\epsilon)$, $\ldots$, $x_n(\epsilon)$ will coincide with that of squared singular values  of $T_p\cdots T_1$.
\end{prop}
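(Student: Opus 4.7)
The plan is to decompose the Macdonald measure as an iterated Markov chain and apply Proposition \ref{PROPOSITIONMarkovKernel} at each step. Write $\widetilde{A} = A_1 \sqcup A_2 \sqcup \cdots \sqcup A_p$ with $A_k = (t^{\nu_k+1}, \ldots, t^{m_k-n})$. Condition \eqref{c2} forces $A_k = (t^{\nu_k+1})$ (a single variable) for $k \geq 2$, while $A_1$ is a progression of $m_1 - n - \nu_1$ variables. Iterating the identity $\sum_{\mu} K_{\Markov}(\lambda, \mu; C, B) M_{\Macdonald}(\mu; A, B) = M_{\Macdonald}(\lambda; A \sqcup C, B)$ yields
\begin{equation*}
M_{\Macdonald}(\lambda; \widetilde{A}, B) = \sum_{\mu^{(1)}, \ldots, \mu^{(p-1)} \in \Y} M_{\Macdonald}(\mu^{(1)}; A_1, B) \prod_{k=2}^{p} K_{\Markov}(\mu^{(k)}, \mu^{(k-1)}; A_k, B),
\end{equation*}
with $\mu^{(p)} := \lambda$, exhibiting $M_{\Macdonald}(\cdot; \widetilde{A}, B)$ as the law of a Markov chain $\mu^{(1)} \to \cdots \to \mu^{(p)}$.

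Under the scaling $\mu^{(k)}_i = -\epsilon^{-1} \log x^{(k)}_i$ with $q = e^{-\epsilon}$, $t = q^\theta$, I would first verify the base case: $M_{\Macdonald}(\mu^{(1)}; A_1, B)$ converges to the Jacobi density \eqref{DISTRIBUTIONONEMATRIX} with $\nu = \nu_1$ and $m = m_1$. The calculation closely parallels the proof of Theorem \ref{THEOREMMarkovKernel}: use the homogeneity of $P_{\mu^{(1)}}$ to peel off the prefactor $t^{(\nu_1+1)|\mu^{(1)}|}$, apply the principal specialization formula \eqref{MACDONALDFormula} together with \eqref{eq:Q} for $Q_{\mu^{(1)}}(B)$, evaluate the asymptotics of $\Pi(A_1; B)$ via \eqref{eq:pochhammer}, and simplify using $\prod_{i<j}(y_j - y_i)^\theta = \prod_j y_j^{\theta(j-1)} \prod_{i<j}(1 - y_i/y_j)^\theta$. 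By Proposition \ref{PROPOSITIONJacobiEnsembles}, the limit is the distribution of the squared singular values of $T_1$. For the inductive step, Proposition \ref{PROPOSITIONMarkovKernel} directly gives that each kernel $K_{\Markov}(\mu^{(k)}, \mu^{(k-1)}; A_k, B)$ with single-variable $A_k$ converges to the density \eqref{TheJointDistributionTwoMatrices} with $\nu = \nu_k$, which by Theorem \ref{THEOREMMarkovKernel} is exactly the Markov transition from the squared singular values of $T_{k-1} \cdots T_1$ to those of $T_k \cdots T_1$.

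Combining the base case with the kernel convergence, the joint distribution of $(x^{(1)}(\epsilon), \ldots, x^{(p)}(\epsilon))$ converges to the joint density in Theorem \ref{THEOREMProductTruncatedProcessTHETA}, namely the joint law of the squared singular values of $T_1, T_2 T_1, \ldots, T_p \cdots T_1$. Marginalizing over the first $p-1$ factors yields that $\lambda = \mu^{(p)}$ rescales to the squared singular values of $T_p \cdots T_1$, which is the claim.

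The main technical obstacle is to justify the passage from pointwise convergence of the rescaled discrete densities to weak convergence of the marginal on $\lambda$, since the decomposition involves a sum over intermediate partitions that must be controlled uniformly. A clean route is to work at the level of moments: compute $\E P_\kappa(C_\lambda)$ under the Macdonald measure by combining the duality \eqref{DUALITY} with $Q_\lambda = b_\lambda P_\lambda$ and the Cauchy identity to obtain the closed form $P_\kappa(B)\, \Pi(\widetilde{A}; C_\kappa) / \Pi(\widetilde{A}; B)$, take the $\epsilon \to 0$ limit to extract a Jack function expression, and match it against $\E J_\kappa$ of the squared singular values of $T_p \cdots T_1$, evaluated by iterating Lemma \ref{LEMzonal} factor by factor (after reducing to square factors via Lemma \ref{LEMswitch_product}) and applying the Selberg integral Lemma \ref{LEMselberg} at each stage. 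Stone--Weierstrass on the compact support $[0,1]^n$, as in the closing argument of Proposition \ref{PROPOSITIONMarkovKernel}, then upgrades moment convergence to weak convergence.
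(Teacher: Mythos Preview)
Your ``clean route'' via moments is exactly the paper's proof: compute $\E[P_\kappa(C_\lambda)/P_\kappa(B)] = \Pi(\widetilde{A};C_\kappa)/\Pi(\widetilde{A};B)$ using the duality \eqref{DUALITY} and the Cauchy identity, evaluate $\E[J_\kappa/J_\kappa(1^n)]$ on the matrix side by iterating Lemmas \ref{LEMzonal}, \ref{LEMswitch_product}, \ref{LEMselberg}, match the two in the $\epsilon\to 0$ limit, and conclude by Stone--Weierstrass on $[0,1]^n$.

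Your preliminary Markov-chain decomposition is a different, also viable, strategy --- it is essentially how the paper establishes Theorem \ref{THEOREMProductTruncatedProcessTHETA} --- but note a slip: Proposition \ref{PROPOSITIONJointLaw} is stated under \eqref{Condition1}--\eqref{Condition2} (that is, $m_k \ge n+\nu_k+1$), not under the equality \eqref{c2}, so the blocks $A_k$ for $k\ge 2$ need not be single variables. This is not fatal, since Proposition \ref{PROPOSITIONMarkovKernel} itself handles multi-variable $A$; and in any case you correctly identify the uniformity obstacle in passing the limit through the sum over intermediate partitions and pivot to the moment argument, which sidesteps it entirely. The paper simply goes straight to moments and never introduces the chain for this proposition.
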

\begin{proof}
By iterative applications of Lemma \ref{LEMzonal} and Lemma \ref{LEMswitch_product}, we have for any Young diagram $\kappa$ with length $\le n$
\[ \E \frac{J_\kappa(T_1^* \cdots T_p^* T_p \cdots T_1;\theta)}{J_\kappa(1^n;\theta)} = \prod_{i=1}^p \E \frac{J_\kappa(T_i^* T_i;\theta)}{J_\kappa(1^n;\theta)} \]
where we used the fact that the distributions of $T_i$ are invariant under right translation by Haar unitary matrices. Since the squared singular values of $T_i$ are Jacobi distributed, Lemma \ref{LEMselberg} implies
\begin{equation} \label{JACOBICauchy}
\E \frac{J_\kappa(T_1^* \cdots T_p^* T_p \cdots T_1;\theta)}{J_\kappa(1^n;\theta)} = \prod_{i=1}^p \prod_{j=1}^n \frac{\Gamma(\theta(\nu_i + n - j + 1) + \kappa_j)}{\Gamma(\theta(m_i - j + 1) + \kappa_j)} \frac{\Gamma(\theta(m_i - j + 1))}{\Gamma(\theta(\nu_i + n - j + 1))}.
\end{equation}
Since
\[ P_\kappa(u_1,\ldots,u_n) \to J_\kappa(u_1,\ldots,u_n) \]
uniformly over $u_1,\ldots,u_n \in [0,1]$ as $q,t \to 1$ with $t = q^\theta$, it suffices to show that
\begin{align*}
\E \frac{P_\kappa(q^{\lambda_1} t^{n-1},q^{\lambda_2} t^{n-2},\ldots,q^{\lambda_n})}{P_\kappa(B)} \to \E \frac{J_\kappa(T_1^* \cdots T_p^* T_p \cdots T_1;\theta)}{J_\kappa(1^n;\theta)}
\end{align*}
for every $\kappa$ with length $\le n$ where the expectation is with respect to $M_{\Macdonald}(\lambda; \widetilde{A},B)$ --- as in the proof of Proposition \ref{PROPOSITIONMarkovKernel}, this follows from the Stone-Weierstrass theorem. Indeed, we have
\begin{align*}
\E \frac{P_\kappa(q^{\lambda_1} t^{n-1},q^{\lambda_2} t^{n-2},\ldots,q^{\lambda_n})}{P_\kappa(B)} &= \E \frac{P_\lambda(q^{\kappa_1} t^{n-1},q^{\kappa_2} t^{n-2},\ldots,q^{\kappa_n})}{P_\lambda(B)} \\
&= \frac{1}{\Pi(\widetilde{A};B)} \sum_{\lambda \in \Y} Q_\lambda(\widetilde{A}) P_\lambda(q^{\kappa_1} t^{n-1},q^{\kappa_2} t^{n-2},\ldots,q^{\kappa_n}) \\
&= \frac{\Pi(\widetilde{A}; q^{\kappa_1} t^{n-1},q^{\kappa_2} t^{n-2},\ldots, q^{\kappa_n})}{\Pi(\widetilde{A};B)}
\end{align*}
where the first equality follows from \eqref{DUALITY} and the last equality from the Cauchy identity for Macdonald symmetric functions. The right hand side of the above converges as $\epsilon \to 0$ to the right hand side of \eqref{JACOBICauchy}.
\end{proof}
To complete the proof of Theorem \ref{THEOREMDENSITYJACKREPRESENTATION} let us
define $\lambda$ and $\widetilde{A}$ as in Proposition \ref{PROPOSITIONJointLaw} which states that
\[ x_1(\epsilon) = e^{-\epsilon \lambda_1}, \ldots, x_n(\epsilon) = e^{-\epsilon \lambda_n} \]
converges in distribution to the squared singular values of $T_p \cdots T_1$. Thus we establish the theorem by showing that
\[ \frac{1}{\Pi(\widetilde{A};B)} P_\lambda(\widetilde{A}) Q_\lambda(B) \]
converges to \eqref{JACKDensity} as $\epsilon \to 0$, where $\lambda_i = \epsilon^{-1} \log x_i(\epsilon)$, $q = e^{-\epsilon}$, and $t = q^\theta$. By our assumption
\[ \widetilde{A} = (q^{\mu_1} t^{M - 1}, q^{\mu_2} t^{M-2},\ldots, q^{\mu_M}) \]
up to reordering. By \eqref{DUALITY}, we have
\[ \frac{P_\lambda(\widetilde{A}) Q_\lambda(B)}{\Pi(\widetilde{A};B)} = \frac{1}{\Pi(\widetilde{A};B)} \frac{P_\mu(q^{\lambda_1}t^{M-1},q^{\lambda_2} t^{n-2}\ldots,q^{\lambda_n}t^{M-n}, t^{M-n-1},\ldots,1)}{P_\mu(t^{M-1},t^{M-2},\ldots,1)} P_\lambda(t^{M-1},t^{M-2},\ldots,1) Q_\lambda(B). \]

Let $\lambda(\epsilon)$ be a family of Young diagrams with $n$ rows. Assume that $\lambda(\epsilon)$ depends on a positive parameter $\epsilon$ in such a way that
$\epsilon\lambda_j(\epsilon)\rightarrow -\log x_j$ as $\epsilon\rightarrow 0+$, for some values $0<x_1\leq\ldots\leq x_n<1$.
Then
\[\underset{\epsilon\rightarrow 0+}{\lim}\left[\frac{P_\mu(q^{\lambda_1}t^{M-1},q^{\lambda_2} t^{n-2}\ldots,q^{\lambda_n}t^{M-n}, t^{M-n-1},\ldots,1)}{P_\mu(t^{M-1},t^{M-2},\ldots,1)}\right] =\frac{J_\mu(x_1,\ldots,x_n,1^{M-n};\theta)}{J_\mu(1^M;\theta)} \]
In addition, we use several asymptotics from the proof of Theorem \ref{THEOREMMarkovKernel}. From \eqref{MACDONALDFormula} and \eqref{ASYMPTOTICSt}, under the same assumptions on the family $\lambda(\epsilon)$ of Young diagrams as above we have
\[ P_\lambda(t^{M-1},t^{M-2},\ldots,1) \sim \epsilon^{-\theta ((M-n)n + n(n-1)/2)} \prod_{\substack{i < j \\ 1 \le i \le n \\ 1 \le j \le M}} \frac{\Gamma(\theta(j-i))}{\Gamma(\theta(j - i + 1))} \prod_{1 \le i < j \le n} (x_j - x_i)^\theta \prod_{i=1}^n (1 - x_i)^{\theta(M-n)}. \]
By \eqref{eq:pochhammer}, we have
\[ \frac{1}{\Pi(\widetilde{A};B)} = \prod_{i=1}^n \prod_{r=1}^p \prod_{j=\nu_r + 1}^{m_r - n} \frac{(t^{j+i-1};q)_\infty}{(t^{j + i};q)_\infty} \sim \epsilon^{\theta Mn} \prod_{i=1}^n \prod_{r=1}^p \prod_{j=\nu_r + 1}^{m_r - n} \frac{\Gamma(\theta(j + i))}{\Gamma(\theta(j + i - 1))}, \]
and \eqref{eq:Q} gives the relevant asymptotics of $Q_\lambda(1,\ldots,t^{n-1})$.
Combining the asymptotics above and the fact that $d\lambda_i \sim \epsilon^{-1} x_i^{-1} dx_i$, we obtain the desired result after simplifying the Gamma factors. \qed
\section{The derivation of formulas for the singular values of the product of two truncated symplectic matrices} \label{SECpfaffian}
In this Section we provide the derivations of different formulas stated in Section \ref{SectionCorrelationFunctionsStatements}. These formulae describe the distribution of singular values of the product of two truncated symplectic matrices. We start from the proof of Proposition \ref{PROPOSITIONDENSITYDETERMINANT} which gives the density of squared singular values as a determinant.
\subsection{Proof of Proposition \ref{PROPOSITIONDENSITYDETERMINANT}}\label{FirstPropositionOnCorrelationFunctions}
Let $Y=T_2T_1$, where $T_1$ is the $\left(n+\nu_1\right)\times n$ truncation of a Haar distributed symplectic matrix $S_1$ of size $m_1\times m_1$, and $T_2$ is the $\left(n+\nu_2\right)\times\left(n+\nu_1\right)$ truncation of a Haar distributed symplectic matrix $S_2$ of size $m_2\times m_2$. The distribution of the squared singular values $\left(x_1,\ldots,x_n\right)$ of $T_1$ is the Jacobi ensemble with the parameters $\theta=2$, $\nu=\nu_1$, and $m=m_1$ given by equation (\ref{DISTRIBUTIONONEMATRIX}). Clearly, the density of the squared singular values
$\left(y_1,\ldots,y_n\right)$ of the product matrix $T_2T_1$ can be obtained using  Theorem \ref{THEOREMMarkovKernel} (with $\theta=2$). Namely, the probability measure defined by equation (\ref{TheJointDistributionTwoMatrices})  leads to a Markov kernel for the product matrix process formed by truncated symplectic matrices. It is not hard to see
that this Markov kernel (for $\theta=2$)  can be written as
\begin{equation}\label{MK1}
\frac{\Gamma\left(2\left(\nu_2+n+1\right)\right)}{\Gamma\left(2\left(\nu_2+1\right)\right)}\prod\limits_{i,j=1}^n\left(x_i-y_j\right)\frac{\triangle(y)}{\triangle(x)^3}
\det\left[\frac{y_j^{2\nu_2+1}}{x_i^{2\nu_2+3}}\mathbf{1}[x_i\geq y_j]\right]_{i,j=1}^n,
\end{equation}
where $0<x_1<\ldots<x_n$. Rewriting the double product in terms of Vandermonde determinants, we see that expression (\ref{MK1}) can be also rewritten as
\begin{equation}\label{MK2}
\frac{\Gamma\left(2\left(\nu_2+n+1\right)\right)}{\Gamma\left(2\left(\nu_2+1\right)\right)}\frac{\triangle(y,x)}{\triangle(x)^4}
\det\left[\frac{y_j^{2\nu_2+1}}{x_i^{2\nu_2+3}}\mathbf{1}[x_i\geq y_j]\right]_{i,j=1}^n.
\end{equation}
We apply this kernel to the density of squared singular values of $T_1$ given by equation (\ref{DISTRIBUTIONONEMATRIX}) with the parameters $\theta=2$, $\nu=\nu_1$,  $m=m_1$, and obtain
\begin{equation}
\const\prod\limits_{j=1}^ny_j^{2\nu_2+1}\int\limits_0^{\infty}\ldots\int\limits_{0}^{\infty}dx_1\ldots dx_n
\det\left[\varphi_j\left(y_1\right),\ldots,\varphi_j\left(y_n\right),\varphi_j\left(x_1\right),\ldots,\varphi_j\left(x_n\right)\right]_{1\leq j\leq 2n}
\det\left[\psi_i\left(x_j\right)\right]_{i,j=1}^n,
\nonumber
\end{equation}
where
$$
\varphi_j(x)=x^{j-1},\; 1\leq j\leq 2n;\;\; \psi_i(x)=x^{2\left(\nu_1-\nu_2-1\right)}\left(1-x\right)^{2\left(m_1-2n-\nu_1\right)+1}1\left[y_i\leq x\leq 1\right],\;\; 1\leq i\leq n.
$$
Representing the determinants as sums over permutations we can rewrite the expression above as a single determinant, namely as
\begin{equation}\label{SingleDeterminant}
\const\prod\limits_{j=1}^ny_j^{2\nu_2+1}\det\left[\begin{array}{ccc}
                                                    \varphi_1(y) & \ldots & \varphi_{2n}(y_1) \\
                                                    \vdots &  & \vdots \\
                                                    \varphi_1(y_n) & \ldots & \varphi_{2n}(y_n) \\
                                                    \int_0^{\infty}\psi_1(t)\varphi_1(t)dt & \ldots & \int_0^{\infty}\psi_1(t)\varphi_{2n}(t)dt \\
                                                    \vdots &  & \vdots \\
                                                     \int_0^{\infty}\psi_n(t)\varphi_1(t)dt & \ldots &  \int_0^{\infty}\psi_n(t)\varphi_{2n}(t)dt
                                                  \end{array}
\right].
\end{equation}
Changing variables $t=y_i/\tau$, we have
$$
\int_0^{\infty}\psi_i(t)\varphi_j(t)dt=\int\limits_{y_i}^1\left(\frac{y_i}{\tau}\right)^{2\left(\nu_1-\nu_2-1\right)+j}\left(1-\frac{y_i}{\tau}\right)^{2\left(m_1-2n-\nu_1\right)+1}
\frac{d\tau}{\tau}.
$$
Recall that
$$
\frac{x^b(1-x)^{a-b-1}}{\Gamma(a-b)}\mathbf{1}_{[0,1]}(x)=G_{1,1}^{1,0}\left(\begin{array}{c}
                                                          a \\
                                                          b
                                                        \end{array}
\biggr\vert x\right),
$$
see, for example, Luke \cite{Luke}, Section 6.6, equation (3).
Therefore,
$$
\left(\frac{y_i}{\tau}\right)^{2\nu_1+j-1}\left(1-\frac{y_i}{\tau}\right)^{2\left(m_1-2n-\nu_1\right)+1}
\mathbf{1}_{[0,1]}\left(\frac{y_i}{\tau}\right)=\Gamma\left(2\left(m_1-2n-\nu_1+1\right)\right)
G_{1,1}^{1,0}\left(\begin{array}{c}
                                                          2\left(m_1-2n\right)+j+1 \\
                                                          j+\nu_1-1
                                                        \end{array}
\biggr\vert \frac{y_i}{\tau}\right),
$$
which gives
$$
y_i^{2\nu_2+1}\int_0^{\infty}\psi_i(t)\varphi_j(t)dt=\Gamma\left(2\left(m_1-2n-\nu_1+1\right)\right)\int\limits_0^1\frac{d\tau}{\tau}\tau^{2\nu_2+1}
G_{1,1}^{1,0}\left(\begin{array}{c}
                                                          2\left(m_1-2n\right)+j+1 \\
                                                          j+\nu_1-1
                                                        \end{array}
\biggr\vert \frac{y_i}{\tau}\right).
$$
The following formula holds true
$$
\int\limits_0^1x^{\beta}(1-x)^{\alpha-\beta-1}G_{1,1}^{1,0}\left(\begin{array}{c}
                                                          a \\
                                                          b
                                                        \end{array}
\biggr\vert \frac{y}{x}\right)\frac{dx}{x}=\Gamma(\alpha-\beta)G_{2,2}^{2,0}\left(\begin{array}{cc}
                                                                \alpha & a \\
                                                                \beta & b
                                                              \end{array}
\biggr\vert y\right),
$$
see, for example,  Luke \cite{Luke}, Section 5.6, equation (6).
Taking this into account we see that the integrals in the determinant in expression (\ref{SingleDeterminant}) can be rewritten in terms of the corresponding Meijer $G$-functions. This gives the formula in the statement
of Proposition \ref{PROPOSITIONDENSITYDETERMINANT}.
\qed
\subsection{A matrix representation for the correlation kernel. Proof of Proposition \ref{PROPOSITIONK}}\label{SectionSecondPropositionOnCorrelationFunctions}
\subsubsection{The formula for the correlation kernel for a general symplectic-type ensemble}
Proposition \ref{PROPOSITIONDENSITYDETERMINANT} implies that squared singular values $\left(x_1,\ldots,x_n\right)$ of $X=T_2T_1$ form a symplectic-type ensemble in the sense of Section \ref{SectionCorrelationFunctionsStatements}.
Indeed, the density $P_{n,\Product}^{(2)}\left(x_1,\ldots,x_n\right)$ of squared singular values $\left(x_1,\ldots,x_n\right)$ of $X=T_2T_1$ can be written as
$$
P_{n,\Product}^{(2)}\left(x_1,\ldots,x_n\right)=\const\det\left(\phi_j(x_k)\;\psi_j(x_k)\right)_{1\leq k\leq n,\; 0\leq j\leq 2n-1},
$$
where
$$
\phi_j(x)=x^j,\;\;\;
\psi_j(x)=G_{2,2}^{2,0}\left(\begin{array}{cc}
                                                       2\nu_2+2 & 2\left(m_1-2n+1\right)+j \\
                                                       2\nu_2+1 & 2\nu_1+j
                                                     \end{array}
\biggr\vert x\right),
$$
and $0\leq j\leq 2n-1$. The next Proposition gives a $2\times 2$ matrix representation for the correlation kernel of a general symplectic-type ensemble.
\begin{prop}\label{GeneralPropositionSymplecticTypeEnsemble}
Consider a symplectic-type ensemble defined by probability measure (\ref{SymplecticTypeEnsemble}), where $\phi_j(x)$, $\psi_j(x)$ are certain functions,
$0\leq j\leq 2n-1$, and $Z_n$ is the normalizing constant. The correlation kernel $\mathbb{K}_n(x,y)$ of this ensemble defined by equation (\ref{CorrelationKernelSymplecticTypeEnsemble})
can be written as
\begin{equation}
\mathbb{K}_n(x,y)=\left(\begin{array}{cc}
               \sum\limits_{k,l=0}^{2n-1}\psi_k(x)q_{k,l}\phi_l(y)  & -\sum\limits_{k,l=0}^{2n-1}\psi_k(x)q_{k,l}\psi_l(y) \\
                \sum\limits_{k,l=0}^{2n-1}\phi_k(x)q_{k,l}\phi_l(y) & -\sum\limits_{k,l=0}^{2n-1}\phi_k(x)q_{k,l}\psi_l(y)
              \end{array}
\right),
\end{equation}
where $Q=\left(q_{i,j}\right)_{i,j=0}^{2n-1}$ is the inverse of $C=\left(c_{i,j}\right)_{i,j=0}^{2n-1}$ defined by
$$
c_{i,j}=\int\left(\phi_i(x)\psi_j(x)-\phi_j(x)\psi_i(x)\right)dx.
$$
\end{prop}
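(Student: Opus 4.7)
The plan is to apply de Bruijn's identity, which converts the multi-variable integral into a Pfaffian, then use the $\Pf^2=\det$ trick to linearize. First, de Bruijn's formula states that for any reasonable weight $w(x)$,
\begin{equation*}
\int \det(\phi_j(x_k)\;\psi_j(x_k))_{1\le k\le n,\;0\le j\le 2n-1} \prod_{k=1}^n w(x_k)\,dx_1\cdots dx_n = n!\,\Pf(C_w),
\end{equation*}
where $(C_w)_{ij}=\int w(x)[\phi_i(x)\psi_j(x)-\phi_j(x)\psi_i(x)]\,dx$. Setting $w\equiv 1$ identifies the normalization as $Z_n = n!\,\Pf(C)$, and taking $w = 1+f$ gives
\begin{equation*}
\int P_n(x_1,\ldots,x_n)\prod_{i=1}^n(1+f(x_i))\,dx_1\cdots dx_n = \frac{\Pf(C+F)}{\Pf(C)},
\end{equation*}
with $F_{ij}=\int f(x)[\phi_i(x)\psi_j(x)-\phi_j(x)\psi_i(x)]\,dx$.

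Next, I would square both sides and use $\Pf(A)^2=\det(A)$ for antisymmetric $A$, together with the factorization $\det(C+F)=\det(C)\det(I+QF)$ where $Q=C^{-1}$, to obtain
\begin{equation*}
\left(\int P_n\prod_{i=1}^n(1+f(x_i))\,dx_1\cdots dx_n\right)^2 = \det(I+QF).
\end{equation*}
Invertibility of $C$ is automatic since $\Pf(C) = Z_n/n! \neq 0$. It remains only to rewrite $\det(I+QF)$ as $\det(I+\mathbb{K}_n f)$ for the claimed $2\times 2$ matrix-valued kernel.

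For this, I would exploit the fact that $F$ admits a ``rank two per point'' factorization. Setting $\Phi(x)=(\phi_0(x),\ldots,\phi_{2n-1}(x))^T$, $\Psi(x)=(\psi_0(x),\ldots,\psi_{2n-1}(x))^T$, $A(x)=(\Phi(x)\;\Psi(x))$, and $J=\bigl(\begin{smallmatrix}0&1\\-1&0\end{smallmatrix}\bigr)$, one has $F = \int f(x)\,A(x)JA(x)^T\,dx$. Expanding the Fredholm determinant via $\log\det(I+QF)=\sum_{m\ge 1}(-1)^{m+1}m^{-1}\Tr((QF)^m)$ and using cyclicity of the trace gives
\begin{equation*}
\Tr((QF)^m) = \int f(x_1)\cdots f(x_m)\,\Tr\bigl(\mathbb{K}_n(x_1,x_2)\cdots \mathbb{K}_n(x_m,x_1)\bigr)\,dx_1\cdots dx_m,
\end{equation*}
where $\mathbb{K}_n(x,y):=JA(x)^TQA(y)$ is a $2\times 2$ matrix-valued kernel. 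Comparing with the analogous trace expansion of $\log\det(I+\mathbb{K}_n f)$ for matrix-valued kernels identifies the two Fredholm determinants.

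Finally, I would compute the entries of $\mathbb{K}_n(x,y)$ explicitly: the blocks of $A(x)^TQA(y)$ are $\Phi(x)^TQ\Phi(y)$, $\Phi(x)^TQ\Psi(y)$, $\Psi(x)^TQ\Phi(y)$, $\Psi(x)^TQ\Psi(y)$, and left-multiplication by $J$ permutes these rows with appropriate signs to produce the four formulas stated for $K_n^{(i,j)}$. The main obstacle is simply sign/convention bookkeeping --- matching the $J$-placement and row-permutation signs to the specific kernel layout in the statement --- rather than any deep difficulty; the argument is a standard instance of the Mahoux--Mehta scheme for symplectic-type ensembles.
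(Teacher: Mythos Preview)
Your proposal is correct and follows the standard Tracy--Widom argument; the paper itself gives no proof at all, simply citing \cite{TracyWidom}. Your de Bruijn / $\Pf^2=\det$ / rank-two-per-point factorization sketch is exactly the content of that reference, so there is nothing to compare beyond noting that you have supplied the details the paper omits.
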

\begin{proof} See Tracy and Widom \cite{TracyWidom}.
\end{proof}
\subsubsection{Proof of Proposition \ref{PROPOSITIONK}}
The distribution of squared singular values for the product of two truncated symplectic matrices is the symplectic-type ensemble  (\ref{SymplecticTypeEnsemble})
with
\begin{equation}
\phi_j(x)=x^j,
\psi_j(x)=G_{2,2}^{2,0}\left(\begin{array}{cc}
                                                                                                 2\nu_2+2 & 2\left(m_1-2n+1\right)+j \\
                                                                                                 2\nu_2+1 & 2\nu_1+j
                                                                                               \end{array}
\biggr|x\right),
\end{equation}
where $0\leq j\leq 2n-1$. The matrix $C=\left(c_{i,j}\right)_{i,j=0}^{2n-1}$  in Proposition \ref{GeneralPropositionSymplecticTypeEnsemble} can be computed explicitly, the computation gives equation (\ref{cproduct}).
\qed
\subsection{A general formula for the inverse of a skew-symmetric Hankel-type matrix. Proof of Proposition \ref{PROPOSITIONInverse}}\label{SectionProofInverse}
In order to obtain explicit formulae for the matrix entries of the kernel $\mathbb{K}_{n,\Product}(x,y)$ of Proposition \ref{PROPOSITIONK}
we need to find explicitly the inverse of $C^{\Product}=\left(c_{i,j}^{\Product}\right)_{i,j=0}^{2n-1}$ defined by equation (\ref{cproduct}).
The matrix $C^{\Product}$ can be understood as a \textit{skew-symmetric Hankel type} matrix, see Definition \ref{DEFINITIONHANKELTYPEMATRIX}
below. In this Section we first derive a general formula for the inverse of a skew-symmetric Hankel type matrix, see Proposition \ref{PropGeneralInverse}
below. Then we apply Proposition \ref{PropGeneralInverse} to derive the formulae stated in Proposition \ref{PROPOSITIONInverse}.
\begin{defn}\label{DEFINITIONHANKELTYPEMATRIX} Let $\mu$ be a positive measure on $\R$ with finite moments,
\begin{equation}\label{H1}
h_k=\int\limits_{\R}x^k\mu(dx),\;\; k=0,1,2,\ldots.
\end{equation}
We will refer to the skew-symmetric matrix $H$  of size $2n\times 2n$ defined by
\begin{equation}\label{H2}
H=\left((j-i)h_{i+j-1}\right)_{i,j=0}^{2n-1}
\end{equation}
as to a \textit{skew-symmetric Hankel type} matrix.
\end{defn}
Here we find a general formula for the inverse of $H$.
\begin{prop}\label{PropGeneralInverse}
Let $\langle.,.\rangle_{\mu}$ denote the skew inner product defined in terms of $\mu$,
\begin{equation}
\langle f,g\rangle_{\mu}=\frac{1}{2}\int\limits_{\R}\left(f(x)g'(x)-g(x)f'(x)\right)\mu(dx).
\end{equation}
Let $\left\{q_{2k}(x),q_{2k+1}(x)\right\}_{k=0}^{n-1}$ be a system of skew orthogonal polynomials  satisfying the following condition
\begin{equation}\label{SkewOrthogonalCondition}
\langle q_i,q_j\rangle_{\mu}=\left\{
                               \begin{array}{ll}
                                 r_k, & \hbox{if}\; i=2k, j=2k+1\; \hbox{for some}\; k\in \{0,1,\ldots,n-1\}, \\
                                -r_k, & \hbox{if}\; i=2k+1, j=2k\; \hbox{for some}\; k\in \{0,1,\ldots,n-1\},\\
                                 0, & \hbox{otherwise.}
                               \end{array}
                             \right.
\end{equation}
Then the $2n\times 2n$ skew-symmetric matrix  $Q$ be defined by
\begin{equation}\label{Q1}
Q=\left(\frac{1}{2}\left(q_{j,i}-q_{i,j}\right)\right)_{i,j=0}^{2n-1},
\end{equation}
where
\begin{equation}\label{FormulaForQij}
q_{i,j}=\sum\limits_{k=0}^{2n-1}\frac{1}{r_k}\left(\frac{1}{i!}\frac{d^i}{dx^i}q_{2k}(x)\right)\biggl|_{x=0}
\left(\frac{1}{j!}\frac{d^j}{dy^j}q_{2k+1}(y)\right)\biggl|_{y=0},
\end{equation}
is the inverse of the skew-symmetric Hankel-type matrix $H$ defined by equations (\ref{H1}) and (\ref{H2}).
\end{prop}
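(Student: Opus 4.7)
The plan is to recognize $H$ as (essentially) the Gram matrix of the monomial basis with respect to $\langle \cdot,\cdot\rangle_\mu$, then invert it by changing basis to the skew-orthogonal polynomials in which this Gram matrix becomes block-diagonal. First, a direct computation shows
\[
\langle x^i, x^j\rangle_\mu \;=\; \tfrac{1}{2}\!\int\!\bigl(x^i \cdot j x^{j-1} - x^j \cdot i x^{i-1}\bigr)\mu(dx) \;=\; \tfrac{j-i}{2}\, h_{i+j-1} \;=\; \tfrac{1}{2} H_{ij}.
\]
Thus if $G$ denotes the Gram matrix of $\{1,x,\ldots,x^{2n-1}\}$ in the skew inner product, then $H = 2G$, and the problem reduces to verifying $Q = \tfrac{1}{2} G^{-1}$.

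Next I would change basis to the skew-orthogonal polynomials. Write
\[
q_{2k}(x)=\sum_{i=0}^{2n-1} a_{k,i}\,x^i, \qquad q_{2k+1}(x)=\sum_{i=0}^{2n-1} b_{k,i}\,x^i,
\]
so that $a_{k,i} = q_{2k}^{(i)}(0)/i!$ and $b_{k,j} = q_{2k+1}^{(j)}(0)/j!$ are exactly the coefficients appearing in \eqref{FormulaForQij}. Form the $2n\times 2n$ matrix $M$ with $M_{i,2k}=a_{k,i}$ and $M_{i,2k+1}=b_{k,i}$; since $\{q_l\}_{l=0}^{2n-1}$ is a basis of polynomials of degree less than $2n$, $M$ is invertible. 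The skew-orthogonality conditions \eqref{SkewOrthogonalCondition} read
\[
M^T G M \;=\; J, \qquad J \;=\; \bigoplus_{k=0}^{n-1}\begin{pmatrix}0 & r_k \\ -r_k & 0\end{pmatrix},
\]
and consequently $G^{-1} = M J^{-1} M^T$, where $J^{-1}$ is block-diagonal with blocks $\begin{pmatrix}0 & -1/r_k \\ 1/r_k & 0\end{pmatrix}$.

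The final step is to expand $G^{-1}$ entrywise and match with the proposition:
\[
(G^{-1})_{ij} \;=\; \sum_{k=0}^{n-1}\frac{1}{r_k}\bigl(a_{k,j}b_{k,i}-a_{k,i}b_{k,j}\bigr) \;=\; q_{j,i}-q_{i,j} \;=\; 2Q_{ij},
\]
using the notation of \eqref{FormulaForQij} (so the summation index $2n-1$ in \eqref{FormulaForQij} should be read as $n-1$, which is forced by the range over which the $q_{2k},q_{2k+1}$ are defined). Hence $HQ = 2G \cdot \tfrac{1}{2}G^{-1} = I$, proving $H^{-1}=Q$.

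The argument is essentially bookkeeping once the right viewpoint is adopted; the only mild subtlety is tracking the factor $\tfrac{1}{2}$ through the identification of $H$ with a Gram matrix and the symmetrization $Q_{ij} = \tfrac{1}{2}(q_{j,i}-q_{i,j})$, which is needed because the raw expression $q_{i,j}$ in \eqref{FormulaForQij} has no built-in skew symmetry while the inverse of the skew-symmetric $H$ must. No obstacle beyond this arises, since existence of the skew-orthogonal basis is assumed as part of the hypothesis.
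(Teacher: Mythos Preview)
Your proof is correct and is essentially the same argument as the paper's, recast in a cleaner matrix/change-of-basis language. The paper phrases the key step via the reproducing kernel
\[
S_n(x,y)=\sum_{k=0}^{n-1}\frac{1}{r_k}\bigl(q_{2k}(x)q_{2k+1}(y)-q_{2k+1}(x)q_{2k}(y)\bigr),
\]
whose reproducing property $y^k=\langle S_n(x,y),x^k\rangle_\mu$ is exactly your identity $G\cdot G^{-1}=I$ read entrywise; expanding $S_n$ in monomials is precisely your computation of $MJ^{-1}M^T$. So the two routes coincide at the level of the computation, and your Gram-matrix formulation makes the factor-of-$\tfrac12$ bookkeeping (and the typo $2n-1\to n-1$ in the summation bound of \eqref{FormulaForQij}) more transparent.
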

\begin{proof}
It can be checked that the kernel
\begin{equation}
S_n(x,y)=\sum\limits_{k=0}^{n-1}\frac{1}{r_k}\left(q_{2k}(x)q_{2k+1}(y)-q_{2k+1}(x)q_{2k}(y)\right)
\end{equation}
has the reproducing property
\begin{equation}\label{Reproducing}
y^k=\langle S_n(x,y),x^k\rangle_{\mu}
\end{equation}
Write
\begin{equation}\label{Q2}
\sum\limits_{k=0}^{n-1}\frac{1}{r_k}\left(q_{2k}(x)q_{2k+1}(y)-q_{2k+1}(x)q_{2k}(y)\right)=\sum\limits_{i,j=0}^{2n-1}q_{i,j}\left(x^iy^j-x^jy^i\right).
\end{equation}
The equation above defines the coefficients $q_{i,j}$. Taking into account this equation we see that
the inner product $\langle S_n(x,y),x^k\rangle_{\mu}$  can be rewritten as
\begin{equation}
\langle S_n(x,y),x^k\rangle_{\mu}=\frac{1}{2}\sum\limits_{i,j=0}^{2n-1}\left(q_{i,j}-q_{j,i}\right)(k-i)h_{k+i-1}y^i
\end{equation}
which implies
\begin{equation}
\frac{1}{2}\sum\limits_{i=0}^{2n-1}(i-k)h_{k+i-1}\left(q_{j,i}-q_{i,j}\right)=\delta_{k,j}.
\end{equation}
Therefore, the matrix $Q$ defined by equations (\ref{Q1}) and (\ref{Q2}) is the inverse of the skew-symmetric Hankel-type matrix $H$ defined by equations (\ref{H1}) and (\ref{H2}). Moreover, it is not hard to see that $q_{i,j}$ (defined by (\ref{Q2})) can be determined from   equation (\ref{FormulaForQij})
\end{proof}
If  $d\mu(x)=(1-x)^{a+1}(1+x)^{b+1}dx$  (this is a measure on $[-1,1]$), then  we have
\begin{equation}\label{hkJacobi}
h_k=\int\limits_{-1}^{1}(x-1)^kd\mu(x)=\frac{2^{a+b+k+3}}{(-1)^k}\frac{\Gamma(a+k+2)\Gamma(b+2)}{\Gamma(a+b+k+4)},
\end{equation}
where $k=0,1,\ldots$.
\begin{prop}\label{PropositionGIJ}Let $\{q_k^{\Jacobi}\}$ be the family of the skew-orthogonal polynomials  with respect to $\langle.,.\rangle_{\mu}$, where $\mu$ is defined by
equation (\ref{hkJacobi}), and let
\begin{equation}
S_n^{\Jacobi}(x,y)=\sum\limits_{k=0}^{n-1}\frac{1}{r_k^{\Jacobi}}\left(q_{2k}^{\Jacobi}(x)q_{2k+1}^{\Jacobi}(y)-q_{2k+1}^{\Jacobi}(x)q_{2k}^{\Jacobi}(y)\right)
\end{equation}
be the corresponding reproducing kernel. Define the coefficients $q_{i,j}^{\Jacobi}$  from the expansion
\begin{equation}
S_n^{\Jacobi}(x,y)=\sum\limits_{i,j=0}^{2n-1}q_{i,j}^{\Jacobi}\left((x-1)^i(y-1)^j-(x-1)^j(y-1)^i\right).
\end{equation}
Then the matrix $q^{\Jacobi}=\left(\frac{1}{2}\left(q_{j,i}^{\Jacobi}-q_{i,j}^{\Jacobi}\right)\right)_{i,j=0}^{2n-1}$ is the inverse of
$$
C^{\Jacobi}=\left((j-i)\frac{2^{a+b+i+j+2}}{(-1)^{i+j-1}}\frac{\Gamma(a+i+j+1)\Gamma(b+2)}{\Gamma(a+b+i+j+3)}\right)_{i,j=0}^{2n-1}.
$$
The coefficients $q_{j,i}^{\Jacobi}$ can be written as
\begin{equation}
q_{i,j}^{\Jacobi}=\sum\limits_{k=0}^{n-1}\frac{1}{r_k^{\Jacobi}}\left(\frac{1}{i!}\frac{d^i}{dx^i}q_{2k}^{\Jacobi}(x)\right)\biggl|_{x=1}
\left(\frac{1}{j!}\frac{d^j}{dy^j}q_{2k+1}^{\Jacobi}(y)\right)\biggl|_{y=1}.
\end{equation}
\end{prop}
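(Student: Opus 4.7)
The plan is to deduce Proposition \ref{PropositionGIJ} directly from Proposition \ref{PropGeneralInverse} via a translation of the variable. The key observation is that the skew inner product $\langle f,g\rangle_\mu = \frac{1}{2}\int(fg'-gf')\,\mu(dx)$ is invariant under the substitution $x = y+1$: if we let $\tilde\mu$ denote the pushforward of $\mu$ under the map $x\mapsto x-1$ and set $\tilde f(y) := f(y+1)$, then $\tilde f'(y) = f'(y+1)$ and $\langle f,g\rangle_\mu = \langle \tilde f,\tilde g\rangle_{\tilde\mu}$. Consequently, $\tilde q_k(y) := q_k^{\Jacobi}(y+1)$ is a skew-orthogonal family with respect to $\langle\cdot,\cdot\rangle_{\tilde\mu}$ with the same normalization constants $r_k^{\Jacobi}$.

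Next I would match the Hankel-type data. The ordinary moments of $\tilde\mu$ are exactly the shifted moments of $\mu$:
\[
\int y^k\,\tilde\mu(dy) = \int(x-1)^k\,\mu(dx) = h_k,
\]
so the skew-symmetric Hankel-type matrix built from $\tilde\mu$ in the sense of Definition \ref{DEFINITIONHANKELTYPEMATRIX} is $((j-i)h_{i+j-1})_{i,j=0}^{2n-1}$. Substituting the explicit evaluation of $h_k$ from (\ref{hkJacobi}), which is a standard Beta integral after the change of variable $u = (1-x)/2$, this Hankel-type matrix is precisely $C^{\Jacobi}$.

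Finally, the expansion $S_n^{\Jacobi}(x,y) = \sum q_{i,j}^{\Jacobi}\bigl((x-1)^i(y-1)^j - (x-1)^j(y-1)^i\bigr)$ becomes, after the change of variables, exactly the monomial expansion of the reproducing kernel associated with $\tilde\mu$ and $\{\tilde q_k\}$ used in Proposition \ref{PropGeneralInverse}; and the derivative evaluations at $x=1$ in the formula for $q_{i,j}^{\Jacobi}$ are derivative evaluations at $y=0$ for $\tilde q_k$. Applying Proposition \ref{PropGeneralInverse} to the triple $(\tilde\mu,\{\tilde q_k\},\tilde S_n)$ therefore yields simultaneously the explicit formula for $q_{i,j}^{\Jacobi}$ and the assertion that $\bigl(\frac{1}{2}(q_{j,i}^{\Jacobi} - q_{i,j}^{\Jacobi})\bigr)_{i,j=0}^{2n-1}$ inverts $C^{\Jacobi}$. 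There is no genuine obstacle here: the result is essentially a corollary of Proposition \ref{PropGeneralInverse} adapted to the shifted monomial basis $\{(x-1)^k\}$, and the only piece of real calculation is the Beta-integral evaluation of $h_k$.
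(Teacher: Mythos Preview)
Your proposal is correct and essentially matches the paper's approach. The paper simply says the proof ``is very similar to that of Proposition \ref{PropGeneralInverse}'', replacing the reproducing identity $y^k=\langle S_n(x,y),x^k\rangle_\mu$ by its shifted version $(y-1)^k=\langle S_n^{\Jacobi}(x,y),(x-1)^k\rangle_\mu$; your translation $x\mapsto x-1$ is precisely the mechanism that converts one into the other, so the two arguments are equivalent.
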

\begin{proof}The proof of Proposition \ref{PropositionGIJ} is very similar to that of  Proposition \ref{PropGeneralInverse}. The only difference is that we use
$$
(y-1)^k=\langle S_n^{\Jacobi}(x,y),(x-1)^k\rangle_{\mu}
$$
instead of (\ref{Reproducing}).
\end{proof}
\begin{prop}\label{PROPOSITIONQJ} We have
\begin{equation}
\begin{split}
&q_{i,j}^{\Jacobi}=\frac{1}{2^{a+b+i+j+1}}\sum\limits_{k=0}^{n-1}\sum\limits_{l=0}^k\frac{2^{4k}}{2^{4l}}
\frac{(a+b+4l+1)(a+b+4k+3)\Gamma(a+1+2l)\Gamma(a+2+2k)}{\Gamma(l+1)\Gamma\left(\frac{a}{2}+\frac{b}{2}+l+1\right)
\Gamma\left(\frac{a}{2}+l+1\right)\Gamma\left(\frac{b}{2}+l+1\right)}\\
&\times\Theta\left(k+1\right)\Theta\left(\frac{a}{2}+k+1\right)\Theta\left(\frac{b}{2}+k+1\right)
\Theta\left(\frac{a}{2}+\frac{b}{2}+k+1\right)\\
&\times\frac{\Gamma(a+b+2l+i+1)\Gamma(a+b+2k+j+2)}{\Gamma(a+i+1)\Gamma(a+j+1)}\left(\begin{array}{c}
                                                                                          2l \\
                                                                                          i
                                                                                        \end{array}
\right)
\left(\begin{array}{c}
                                                                                          2k+1 \\
                                                                                          j
                                                                                        \end{array}
\right),
\end{split}
\end{equation}
where $\Theta(x)=\Gamma(x)/\Gamma(2x)$.
\end{prop}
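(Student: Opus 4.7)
The strategy is to insert an explicit formula for the skew orthogonal polynomials $\{q^{\Jacobi}_{2k},q^{\Jacobi}_{2k+1}\}_{k\ge 0}$ for the Jacobi-type weight $(1-x)^{a+1}(1+x)^{b+1}$ on $[-1,1]$ into the representation of $q_{i,j}^{\Jacobi}$ provided by Proposition \ref{PropositionGIJ}, and then to simplify the resulting double sum. The key observation is that the derivatives in that representation are taken at $x=1$, so we should prepare polynomials whose expansion around $x=1$ is transparent.

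First, I would construct $q^{\Jacobi}_{2k}$ and $q^{\Jacobi}_{2k+1}$ following the classical ``symplectic'' skew-orthogonalisation procedure available for Jacobi weights (see, e.g., the Adler--Forrester--Nagao--van Moerbeke construction, or the presentation in \cite{Forrester}): one can take $q^{\Jacobi}_{2k}$ to be a suitable multiple of the Jacobi polynomial $P_{2k}^{(a+1,b+1)}$, while $q^{\Jacobi}_{2k+1}$ is $P_{2k+1}^{(a+1,b+1)}$ plus a lower-order correction that kills the obstruction to $\langle q_{2k},q_{2k+1}\rangle_{\mu}\ne 0$, $\langle q_{2k+1},q_{2k'+1}\rangle_{\mu}=0$. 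Using integration by parts one checks that this procedure goes through for the weight $(1-x)^{a+1}(1+x)^{b+1}$ because the boundary term vanishes (the weight has an extra factor of $(1-x)(1+x)$). The explicit construction also yields a closed form for $r_k^{\Jacobi}$ in terms of the squared $L^2$-norms of Jacobi polynomials; by the Legendre duplication formula $\Gamma(2x)=\pi^{-1/2}2^{2x-1}\Gamma(x)\Gamma(x+\tfrac12)$, these norms rearrange into the product $\Theta(k+1)\Theta\!\left(\tfrac{a}{2}+k+1\right)\Theta\!\left(\tfrac{b}{2}+k+1\right)\Theta\!\left(\tfrac{a}{2}+\tfrac{b}{2}+k+1\right)$ that appears in the statement.

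Next, I would compute $\tfrac{1}{i!}(d^i/dx^i)q^{\Jacobi}_{2l}(x)|_{x=1}$ and $\tfrac{1}{j!}(d^j/dy^j)q^{\Jacobi}_{2k+1}(y)|_{y=1}$ by invoking the standard identity
\[
\frac{d^i}{dx^i}P_n^{(\alpha,\beta)}(x)=\frac{\Gamma(n+\alpha+\beta+i+1)}{2^i\,\Gamma(n+\alpha+\beta+1)}\,P_{n-i}^{(\alpha+i,\beta+i)}(x),
\]
together with $P_m^{(\alpha+i,\beta+i)}(1)=\binom{m+\alpha+i}{m}$. Because $q^{\Jacobi}_{2l}$ is proportional to $P_{2l}^{(a+1,b+1)}$, the derivative at $x=1$ yields precisely a factor of the form $\Gamma(a+b+2l+i+1)\binom{2l}{i}/\Gamma(a+i+1)$ (after collecting the $2^{-i}$ with the prefactors), which matches the $l$-block in the asserted formula. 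The analogous computation at $y=1$ applied to $q^{\Jacobi}_{2k+1}$ produces the companion factor with $2k+1$ and $j$; the lower-order correction in $q^{\Jacobi}_{2k+1}$ vanishes at $y=1$ (or contributes only to terms already captured by the $l<k$ summands) because the ``odd'' auxiliary piece is again a Jacobi polynomial.

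Finally, I would substitute these two derivative formulas and $1/r_k^{\Jacobi}$ into \eqref{FormulaForQij}, relabel the inner summation so that the index $l$ runs from $0$ up to $k$, and collect constants. The overall $2^{-(a+b+i+j+1)}$ prefactor emerges by tracking the powers of $2$ coming from the two derivative identities ($2^{-i}$ and $2^{-j}$), from the measure's normalisation ($2^{a+b+3}$ in $h_k$), and from the explicit $L^2$-norms of $P_n^{(a+1,b+1)}$.  The main obstacle I expect is not conceptual but bookkeeping: identifying the correct normalisation of the skew orthogonal polynomials so that the $\Theta$-factors align with the claim, and verifying that the lower-order correction term in $q^{\Jacobi}_{2k+1}$ does not create spurious contributions to $q_{i,j}^{\Jacobi}$. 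Both points are routine once one works in the variable $u=(1-x)/2$, under which the skew inner product and the Jacobi polynomials assume their most convenient form.
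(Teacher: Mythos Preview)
Your overall strategy---insert explicit skew-orthogonal polynomials into Proposition~\ref{PropositionGIJ} and differentiate at $x=1$---is exactly what the paper does. However, you have the structure of the Adler--Forrester--Nagao--van~Moerbeke polynomials reversed, and this is not a cosmetic issue: it is precisely what produces the double sum in the statement.

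In the actual construction for the symplectic-type skew inner product with weight $(1-x)^{a+1}(1+x)^{b+1}$, it is the \emph{odd} member that is a single (rescaled) Jacobi polynomial,
\[
q_{2k+1}^{\Jacobi}(x)=p_{2k+1}^{\Jacobi}(x)\propto P_{2k+1}^{(a,b)}(x),
\]
while the \emph{even} member is a linear combination
\[
q_{2k}^{\Jacobi}(x)=\sum_{l=0}^{k}c_{k,l}\,p_{2l}^{\Jacobi}(x),\qquad p_m^{\Jacobi}(x)=2^m m!\,\frac{\Gamma(a+b+m+1)}{\Gamma(a+b+2m+1)}P_m^{(a,b)}(x),
\]
with explicit $c_{k,l}$. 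Note also that the Jacobi parameters are $(a,b)$, not $(a+1,b+1)$. The inner sum $\sum_{l=0}^{k}$ in the claimed formula for $q_{i,j}^{\Jacobi}$ comes directly from this expansion of $q_{2k}^{\Jacobi}$; with your assignment (a single polynomial for $q_{2k}$ and a correction in $q_{2k+1}$), the formula from Proposition~\ref{PropositionGIJ} gives only a single sum over $k$, and your remark about ``relabelling the inner summation'' has no source. Your parenthetical that the correction in $q_{2k+1}$ ``contributes only to terms already captured by the $l<k$ summands'' is an attempt to patch this, but it cannot work: the $l$-dependent factors in the statement involve $\binom{2l}{i}$ and $\Gamma(a+b+2l+i+1)$, i.e.\ they sit on the $i$-side, which in Proposition~\ref{PropositionGIJ} is tied to $q_{2k}$, not $q_{2k+1}$.

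Once you swap the roles and use parameters $(a,b)$, the rest of your plan is exactly the paper's: apply $\tfrac{d^i}{dx^i}P_n^{(a,b)}(x)=\tfrac{(n+a+b+1)_i}{2^i}P_{n-i}^{(a+i,b+i)}(x)$ and $P_m^{(\alpha,\beta)}(1)=\tfrac{(\alpha+1)_m}{m!}$ to each $p_{2l}^{(a,b)}$ in the sum for $q_{2k}$ and to $p_{2k+1}^{(a,b)}=q_{2k+1}$, substitute the known $1/r_k^{\Jacobi}$, and collect powers of $2$.
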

\begin{proof}
The skew-orthogonal polynomials $\left\{q_{k}^{\Jacobi}\right\}_{k=0}^{\infty}$ are given by
\begin{equation}
q_{2k}^{\Jacobi}(x)=2^{6k}k!\sum\limits_{l=0}^{k}\frac{\Gamma\left(\frac{a}{2}+\frac{b}{2}+k+1\right)\Gamma\left(\frac{a}{2}+k+1\right)
\Gamma\left(\frac{b}{2}+k+1\right)}{\Gamma\left(\frac{a}{2}+\frac{b}{2}+l+1\right)\Gamma\left(\frac{a}{2}+l+1\right)
\Gamma\left(\frac{b}{2}+l+1\right)}
\frac{\Gamma(a+b+4l+2)}{\Gamma(a+b+4k+2)}\frac{p_{2l}^{\Jacobi}(x)}{l!2^{6l}},
\end{equation}
and
\begin{equation}
q_{2k+1}^{\Jacobi}(x)=p_{2k+1}^{\Jacobi}(x),
\end{equation}
see Adler, Forrester, Nagao, and van Moerbeke \cite{AdlerForresterNagaoMoerbeke}.
Here $\left\{p_{k}^{\Jacobi}(x)\right\}_{k=0}^{\infty}$ are polynomials defined by
\begin{equation}
p_{k}^{\Jacobi}(x)=2^kk!\frac{\Gamma(a+b+k+1)}{\Gamma(a+b+2k+1)}P_k^{(a,b)}(x),
\end{equation}
where $P_k^{(a,b)}(x)$ are the Jacobi polynomials. Since
\begin{equation}
\frac{d^k}{dx^k}P_n^{(a,b)}(x)=\frac{(n+a+b+1)_k}{2^k}P_{n-k}^{(a+k,b+k)}(x),
\end{equation}
and
\begin{equation}
P_n^{(a,b)}(1)=\frac{(a+1)_n}{n!},
\end{equation}
we obtain
\begin{equation}
\frac{1}{i!}\frac{d^i}{dx^i}P_k^{(a,b)}(x)\biggl|_{x=1}=\frac{\Gamma(a+1+k)\Gamma(a+b+1+k+i)}{2^ii!(k-i)!\Gamma(a+i+k)\Gamma(a+b+1+k)},
\end{equation}
which gives
\begin{equation}
\begin{split}
&\frac{1}{i!}\frac{d^i}{dx^i}q^{\Jacobi}_{2k}(x)\biggl|_{x=1}=\sum\limits_{l=0}^k\frac{2^{6k-4l-i}k!}{l!}
\frac{\Gamma\left(\frac{a}{2}+\frac{b}{2}+k+1\right)
\Gamma\left(\frac{a}{2}+k+1\right)\Gamma\left(\frac{b}{2}+k+1\right)}{\Gamma\left(\frac{a}{2}+\frac{b}{2}+l+1\right)
\Gamma\left(\frac{a}{2}+l+1\right)\Gamma\left(\frac{b}{2}+l+1\right)}\\
&\times\frac{\Gamma(a+1+2l)\Gamma(a+b+4l+2)\Gamma(a+b+2l+i+1)}{\Gamma(a+b+4l+1)\Gamma(a+b+4k+2)\Gamma(a+i+1)}\left(\begin{array}{c}
                                                                                                                             2l \\
                                                                                                                             i
                                                                                                                           \end{array}
\right),
\end{split}
\end{equation}
and
\begin{equation}
\frac{1}{j!}\frac{d^j}{dy^j}q^{\Jacobi}_{2k+1}(y)\biggl|_{y=1}=2^{2k+1-j}\left(
\begin{array}{c}
                                                                                                                             2k+1 \\
                                                                                                                             j
                                                                                                                           \end{array}\right)
\frac{\Gamma(a+2k+2)\Gamma(a+b+2k+j+2)}{\Gamma(a+j+1)\Gamma(a+b+4k+3)}
\end{equation}
Taking into account that
\begin{equation}
\frac{1}{r_k^{\Jacobi}}=\frac{\Gamma(a+b+4k+2)\Gamma(a+b+4k+4)}{2^{a+b+4k+2}(2k+1)!\Gamma(a+2k+2)\Gamma(b+2k+2)\Gamma(a+b+2k+2)},
\end{equation}
we obtain the formula in the statement of the Proposition.
\end{proof}
\textit{Proof of Proposition }\ref{PROPOSITIONInverse}. Proposition \ref{PROPOSITIONInverse} is an immediate Corollary of Proposition \ref{PropositionGIJ}
and Proposition \ref{PROPOSITIONQJ}. \qed
\subsection{Proof of Theorem \ref{THEOREMCORRELATIONKERNELFINAL}}\label{SectionFinalTheoremProof}
Now we are ready to derive explicit formulae for the matrix entries of the correlation kernel $\mathbb{K}_{n,\Product}(x,y)$ stated in  Theorem \ref{THEOREMCORRELATIONKERNELFINAL}. We use equations (\ref{K11}), (\ref{K12}), (\ref{K21}), and (\ref{K22}). All these formulae involve the coefficients
$q_{k,l}^{\Product}$ which can be understood as matrix entries of the inverse of $C^{\Product}=\left(c_{i,j}^{\Product}\right)_{i,j=0}^{2n-1}$
defined by equation (\ref{cproduct}). Proposition \ref{PROPOSITIONInverse} can be used to find the inverse of $C^{\Product}$ explicitly. As a result, we obtain
the desired formulae for the matrix entries of the correlation kernel $\mathbb{K}_{n,\Product}(x,y)$.
\qed

\section{Extension to Arbitrary $\beta > 0$} \label{SECarbitrarybeta}

In this section, we describe a generalization of our model to arbitrary $\theta > 0$. We write $\beta = 2\theta$, to indicate the connection with $\beta$-ensembles from random matrix theory. Further details about these $\beta$-deformed models can be found in \cite{GorinMarcus} and \cite{Ahn}.

We begin by recalling the $\beta$-Jacobi ensemble with parameters $m,\nu$. Suppose $T$ is an $(n+\nu)\times n$-truncation of a Haar distributed matrix $S$ taken from $U(m)$, $O(m)$, $\Sp(2m)$ for $\beta = 1,2,4$ respectively. Recall that (see Proposition \ref{PROPOSITIONJacobiEnsembles}) the distribution on $[0,1]^n$ defined by the density
\begin{equation} \label{DISTRIBUTIONJacobi}
\begin{split}
&P_{n,\Jacobi}^{(\beta/2)}\left(x_1,\ldots,x_n\right)dx_1\ldots dx_n\\
&=\frac{1}{Z_{n,\Jacobi}^{(\beta/2)}}\prod\limits_{1\leq j<k\leq n}\left|x_j-x_k\right|^{\beta}\prod\limits_{j=1}^n
\left(x_j\right)^{(\beta/2)\left(\nu+1\right)-1}\left(1-x_j\right)^{(\beta/2)(m-2n-\nu+1)-1}dx_1\ldots dx_n,
\end{split}
\end{equation}
where $Z_{n,\Jacobi}^{(\beta/2)}$ is given by \eqref{ZJacobi}, is the distribution of the eigenvalues $(x_1,\ldots,x_n)$ of $T^*T$. Here, we assume $m \ge 2n + \nu$. Though we require $\beta \in \{1,2,4\}$ in order to interpret $(x_1,\ldots,x_n)$ as eigenvalues of some invariant matrix ensemble, there is no obstruction in defining a random variable $(x_1,\ldots,x_n)$ on $[0,1]^n$ with density \eqref{DISTRIBUTIONJacobi} for arbitrary $\beta > 0$. The resulting distribution is referred to as the $\beta$-Jacobi ensemble and is among the three classical $\beta$-ensembles alongside the $\beta$-Laguerre and $\beta$-Hermite ensembles. For an introduction to $\beta$-ensembles and further literature, we refer the reader to \cite[Chapter 20]{AkemannBaikDiFrancesco} and references therein.

We can go further and consider a $\beta > 0$ extension of the Markov transition kernel from Theorem \ref{THEOREMMarkovKernel}. Recall that (see Theorem \ref{THEOREMMarkovKernel}) if we assume $m = n + \nu + 1$, then for a deterministic $n\times n$ matrix $X$ with squared singular values $x = (x_1<\cdots<x_n) \in [0,1]^n$, we have that the joint density of the ordered squared singular values $y = (y_1<\cdots<y_n)$ of $TX$ is
\begin{align*}
& p_{n,\nu}^{(\beta/2)}(y|x) \\
& = \frac{1}{\left(\Gamma(\beta/2)\right)^n}\frac{\Gamma\left((\beta/2)\left(\nu+n+1\right)\right)}{\Gamma\left((\beta/2)(\nu+1)\right)}
\left(\frac{\triangle(y)}{\triangle(x)}\right)^{\beta/2}\left(\det\left[\frac{1}{x_i-y_j}\right]_{i,j=1}^n\right)^{1-\beta/2}
\prod\limits_{i=1}^n\frac{y_i^{(\beta/2)(\nu+1)-1}}{x_i^{(\beta/2)(\nu+2)-1}}dy_i
\end{align*}
supported in
\begin{equation} \label{Interlacing}
0 \le y_1 \le x_1 \le \cdots \le y_n \le x_n.
\end{equation}
By the Cauchy determinant formula, we may reexpress this density as
\begin{align} \label{BetaMarkov}
\begin{split}
p_{n,\nu}^{(\beta/2)}(y|x) = \frac{1}{(\Gamma(\beta/2))^n} \frac{\Gamma((\beta/2)(\nu+n+1))}{\Gamma((\beta/2)(\nu+1))} \frac{\triangle(y)}{\triangle(x)^{\beta - 1}} \prod_{i,j=1}^n |x_i - y_j|^{\beta/2 - 1} \prod_{i=1}^n \frac{y_i^{(\beta/2)(\nu+1) - 1}}{x_i^{(\beta/2)(\nu+2) - 1}} dy_i.
\end{split}
\end{align}
While the interpretation of \eqref{BetaMarkov} as the density of squared singular values of a product of a truncated $O(m)$, $U(m)$, and $\Sp(2m)$ matrix with a deterministic matrix only makes sense for $\beta = 1,2,4$, we can still make sense of a random variable $(y_1,\ldots,y_n)$ supported in \eqref{Interlacing} with density \eqref{BetaMarkov} for arbitrary $\beta > 0$, just as with the $\beta$-Jacobi ensemble.

From another perspective, the density \eqref{BetaMarkov} can also be obtained by scaling limit of the transition $K_{\Markov}(\lambda,\mu;A,B)$ as in Proposition \ref{PROPOSITIONMarkovKernel} by taking $\theta = \beta/2$. Indeed, the proof of Proposition \ref{PROPOSITIONMarkovKernel} follows verbatim. By viewing \eqref{BetaMarkov} itself as a transition kernel, we can iterate the kernel and define a $\beta > 0$ generalization of the matrix product process:

\begin{defn}
Fix $\beta > 0$, $\nu_k > 1$ for $k = 2,3,\ldots$, and $x^1 = (x_1^1 < \cdots < x_n^1) \subset [0,1]^n$. Define the \emph{$\beta$-Jacobi product process with initial state $x^1$ and parameters $(\nu_k)_{k=2}^\infty$} to be the Markov process in discrete time $(x^k)_{k=1}^\infty$ where the Markov transition kernel from $x^{k-1}$ to $x^k$ is given by $p_{n,\nu_k}^{(\beta/2)}(x^k|x^{k-1})$ as in \eqref{BetaMarkov}.
\end{defn}

The limit shape and fluctuations of the $\beta$-Jacobi product process, where the initial state was distributed as a $\beta$-Jacobi ensemble, were studied in \cite{Ahn}.

We can further interpret the density \eqref{BetaMarkov} as coming from an operation on vectors which is some generalization of products of $O(m)$, $U(m)$, and $\Sp(2m)$ invariant matrices at the level of singular values. In particular, many properties from the $\beta = 1,2,4$ cases generalize readily to arbitrary $\beta > 0$. For details, we refer the reader to \cite{Ahn}. We indicate one of these properties here.

By Lemma \ref{LEMswitch_product}, the squared singular values of $TX$ are distributed as the squared singular values of $(T^*T)^{1/2} X$. Since $m = n + \nu + 1$, the condition $m \ge 2n + \nu$ is not satisfied. In particular, $n - 1$ of the eigenvalues of $T^*T$ must be $1$, and only one eigenvalue is deterministic. Nonetheless, it is known (see e.g. Forrester \cite{Forrester}, Section 3.8.3.) that if $(u_1,1,\ldots,1)$ are the eigenvalues of $T^*T$, then the density of $u_1$ is $P_{1,\Jacobi}^{(\beta/2)}$ with parameters $m = n + \nu + 1$ and $\nu$. By Lemma \ref{LEMzonal}, we have
\[ \E\left[ \frac{J_\kappa(y ;\beta/2)}{J_\kappa(1^n;\beta/2)} \right] = \frac{J_\kappa(x;\beta/2)}{J_\kappa(1^n;\beta/2)} \E \left[ \frac{J_\kappa(u_1,1^{n-1};\beta/2)}{J_\kappa(1^n;\beta/2)} \right], \quad \quad \kappa\in\Y, \quad \ell(\kappa) \le n \]
where the left hand side expectation is over $y_1,\ldots,y_n$ distributed as $p_n^{(\beta/2)}(y|x)$ and the right hand side expectation is over $u_1$. It turns out that we can extend this identity to arbitrary $\beta > 0$.

\begin{prop} \label{PROPOSITIONJackFactorization}
Fix $\beta > 0$. If $x = (x_1 < \cdots < x_n) \subset [0,1]^n$ and $y = (y_1 < \cdots < y_n)$ is distributed as $p_{n,\nu}(y|x)$, then
\[ \E\left[ \frac{J_\kappa(y ;\beta/2)}{J_\kappa(1^n;\beta/2)} \right] = \frac{J_\kappa(x;\beta/2)}{J_\kappa(1^n;\beta/2)} \E \left[ \frac{J_\kappa(u_1,1^{n-1};\beta/2)}{J_\kappa(1^n;\beta/2)} \right], \quad \quad \kappa\in\Y, \quad \ell(\kappa) \le n \]
where $u_1$ is distributed as $P_{1,\Jacobi}^{(\beta/2)}$ with parameters $m = n + \nu + 1$ and $\nu$.
\end{prop}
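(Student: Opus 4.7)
The strategy is to deduce the identity from its Macdonald-level analog and pass to the Jack limit. The central Macdonald formula is derived in the proof of Proposition \ref{PROPOSITIONMarkovKernel}: using the duality \eqref{DUALITY} to exchange the roles of $\lambda$ and $\kappa$, then inserting the definition of the Markov kernel and applying the skew Cauchy identity \eqref{SkewMacdonaldProperty}, one obtains
\begin{equation*}
\sum_\lambda K_{\Markov}(\lambda,\mu;A,B)\frac{P_\kappa(q^{\lambda_1}t^{n-1},\ldots,q^{\lambda_n})}{P_\kappa(B)} = \frac{P_\kappa(q^{\mu_1}t^{n-1},\ldots,q^{\mu_n})}{P_\kappa(B)} \cdot \frac{\Pi(A;C_\kappa)}{\Pi(A;B)},
\end{equation*}
where $A = t^{\nu+1}$ (the rank-one specialization $m = n+\nu+1$), $B = (1,t,\ldots,t^{n-1})$, and $C_\kappa = (q^{\kappa_1}t^{n-1},\ldots,q^{\kappa_n})$. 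Under the degeneration $q=e^{-\epsilon}$, $t=q^\theta$ with $\lambda_i = -\epsilon^{-1}\log y_i$ and $\mu_i = -\epsilon^{-1}\log x_i$, the Macdonald-to-Jack asymptotics from the proof of Proposition \ref{PROPOSITIONMarkovKernel} convert the left-hand side into $\E[J_\kappa(y;\theta)/J_\kappa(1^n;\theta)]$ integrated against $p_{n,\nu}^{(\beta/2)}(y|x)$, and the first factor on the right into $J_\kappa(x;\theta)/J_\kappa(1^n;\theta)$. Thus the proposition reduces to the identification
\[ L_\kappa := \lim_{\epsilon\to 0}\frac{\Pi(t^{\nu+1};C_\kappa)}{\Pi(t^{\nu+1};B)} = \E\left[\frac{J_\kappa(u_1,1^{n-1};\theta)}{J_\kappa(1^n;\theta)}\right]. \]

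To accomplish this identification, I apply the same Macdonald formula with $\mu = \emptyset$, in which case $K_{\Markov}(\cdot,\emptyset;A,B)$ collapses to the Macdonald measure $M_{\Macdonald}(\lambda;A,B) = P_\lambda(B)Q_\lambda(A)/\Pi(A;B)$. Since $A = t^{\nu+1}$ is a single variable, $Q_\lambda(A)$ vanishes unless $\ell(\lambda) \le 1$, so the measure is supported on partitions $\lambda = (\lambda_1,0,\ldots,0)$; consequently, in the Jack limit only $y_1 = q^{\lambda_1}$ remains random while $y_2 = \cdots = y_n = 1$, and the Macdonald identity above reads $\E[J_\kappa(y_1,1^{n-1};\theta)/J_\kappa(1^n;\theta)] = L_\kappa$ in the limit. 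It thus suffices to show that the limiting distribution of $y_1$ coincides with $P_{1,\Jacobi}^{(\theta)}$ at parameters $m=n+\nu+1$, $\nu$.

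Both candidate distributions are compactly supported on $[0,1]$, and the single-row Jack polynomials $\{J_{(r)}(\cdot,1^{n-1};\theta)\}_{r\ge 0}$ form a basis for polynomials in one variable, so a Stone--Weierstrass argument identical to the one in the proof of Proposition \ref{PROPOSITIONMarkovKernel} reduces the problem to matching the moments $L_{(r)}$ for each $r \ge 0$. A direct asymptotic calculation using \eqref{eq:pochhammer}, with the $\Gamma$-products telescoping, yields
\[ L_{(r)} = \frac{\Gamma(\theta(\nu+n+1))}{\Gamma(\theta(\nu+n))}\cdot\frac{\Gamma(r+\theta(\nu+n))}{\Gamma(r+\theta(\nu+n+1))}, \]
and the corresponding expectation under $P_{1,\Jacobi}^{(\theta)}$ evaluates to the same expression by expanding $J_{(r)}(u,1^{n-1};\theta)$ as a polynomial in $u$ and integrating term by term against the Jacobi weight $u^{\theta(\nu+1)-1}(1-u)^{\theta n - 1}$ via Beta integrals (the $r=0,1$ cases already fix the normalization transparently). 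The principal obstacle is this single-variable moment identification; once it is established, dividing the conditional Macdonald formula by its $\mu = \emptyset$ specialization and passing to the Jack limit immediately delivers the factorization of Proposition \ref{PROPOSITIONJackFactorization}.
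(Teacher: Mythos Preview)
Your approach is correct and is precisely the Macdonald-to-Jack degeneration underlying the paper's proof: the paper itself only cites \cite[Proposition 3.9]{Ahn} after noting that $p_{n,\nu}^{(\beta/2)}$ is the scaling limit of $K_{\Markov}$, and what you have written is essentially an unpacking of that reference using the identity established in the proof of Proposition~\ref{PROPOSITIONMarkovKernel}. The key structural idea---read the factorization off the Macdonald expectation formula and identify the second factor by specializing to $\mu=\emptyset$---is exactly right.

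There is one point that deserves tightening. You reduce the identification of the limiting law of $y_1$ with $P_{1,\Jacobi}^{(\theta)}$ to verifying $\E_{P_{1,\Jacobi}}[J_{(r)}(u,1^{n-1};\theta)]/J_{(r)}(1^n;\theta)=L_{(r)}$, and then assert this follows by expanding $J_{(r)}(u,1^{n-1};\theta)$ and applying Beta integrals. This is true, but the termwise computation is a nontrivial hypergeometric summation that you do not actually carry out (checking $r=0,1$ does not settle it). A cleaner and fully explicit alternative is to compute the limiting density of $y_1$ directly from the single-row Macdonald measure: with $\lambda=(\lambda_1)$ one has
\[
P_{(\lambda_1)}(B)\,Q_{(\lambda_1)}(t^{\nu+1})=\frac{(t^n;q)_{\lambda_1}}{(q;q)_{\lambda_1}}\,t^{(\nu+1)\lambda_1},
\]
and the asymptotics \eqref{eq:pochhammer} together with $d\lambda_1\sim\epsilon^{-1}y_1^{-1}dy_1$ give, after telescoping $\Pi(t^{\nu+1};B)$, the limiting density
\[
\frac{\Gamma(\theta(\nu+n+1))}{\Gamma(\theta(\nu+1))\Gamma(\theta n)}\,y_1^{\theta(\nu+1)-1}(1-y_1)^{\theta n-1}\,dy_1,
\]
which is exactly $P_{1,\Jacobi}^{(\theta)}$ with parameters $m=n+\nu+1$, $\nu$. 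This replaces your moment-matching step with a two-line density computation and completes the argument without any unproved identity.
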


\begin{proof}
From the discussion above, $p_{n,\nu}^{(\beta/2)}(y|x)$ is the limit of $K_{\Markov}(\lambda,\mu;A,B)$ under the scaling limit in Proposition \ref{PROPOSITIONMarkovKernel} with $\theta = \beta/2$. Then this proposition is a special case of \cite[Proposition 3.9]{Ahn} (see the discussion following the proof in the reference).
\end{proof}

\section{Crystallization and Gaussianity at $\beta = \infty$} \label{SECbeta=infty}

In this section, we consider the $\beta \to \infty$ limit of the $\beta$-Jacobi product process introduced in Section \ref{SECarbitrarybeta}. We find that the particles freeze at deterministic positions, but fluctuate as correlated Gaussians. We precisely describe this limiting object, the $\infty$-Jacobi product process, in Section \ref{SUBSECgaussian}. This crystallization phenomenon and associated Gaussianity also arise in the closely related $\infty$-Hermite corners process introduced in \cite{GorinMarcus}. We note that the $n\to\infty$ limit of the $\infty$-Hermite corners process was studied in \cite{GorinKleptsyn}. For us, we consider $\beta \to \infty$ and keep $n$ (the number of particles at each step in the Markov chain) fixed. However, we expect that the methods of \cite{GorinKleptsyn} may be generalized to study the $n\to\infty$ limit of the $\infty$-Jacobi product process.

\subsection{Warmup: $\beta \to \infty$ Crystallization of the $\beta$-Jacobi Ensemble} \label{SECwarmup}
Let $\mathcal{U}^n := [0,1]^n \cap \{x_1 \ge \cdots \ge x_n\}$. We can express the density for the $\beta$-Jacobi ensemble \eqref{DISTRIBUTIONJacobi}
\begin{align} \label{density_form}
P_{n,\Jacobi}^{(\beta/2)}(x_1,\ldots,x_n) dx_1 \cdots dx_n = \frac{1}{Z_{n,m,\nu}^{(\beta/2)}} h(x_1,\ldots,x_n;m,\nu)^{\beta/2} \prod_{i=1}^n \frac{dx_i}{x_i(1-x_i)}
\end{align}
supported on $\mathcal{U}^n$ where $Z_{n,m,\nu}^{(\beta/2)}$ is a normalization constant and
\[ h(x_1,\ldots,x_n;m,\nu) = (\triangle(x))^2 \prod_{i=1}^n x_i^{\nu+1} (1 - x_i)^{m - 2n - \nu + 1} \]
for some $m,n,\nu \ge 0$ with $m - 2n - \nu \ge 0$.

By \cite{MarcellanMartinezMartinez}, we know that $(\tilde{x}_1,\ldots,\tilde{x}_n)$ is the unique maximizer of $h(x_1,\ldots,x_n,m,\nu)$ in $\mathcal{U}^n$. It follows that as $\beta \to \infty$, the random particle system $(x_1,\ldots,x_n)$ converges to the deterministic configuration $(\tilde{x}_1,\ldots,\tilde{x}_n)$.

As an example, consider the simple case where $n = 1$, then
\[ h(x_1;m,\nu) = x_1^{\nu+1} (1 - x_1)^{m - \nu - 1} \]
where we need $\nu+1 < m$. The log-derivative gives
\[ \frac{\nu+1}{x_1} - \frac{m - \nu - 1}{1 - x_1} \]
which has a  unique root $\tilde{x}_1 = \tfrac{\nu+1}{m} \in [0,1]$.

\subsection{$\beta \to \infty$ Crystallization of the $\beta$-Jacobi Product Process}
We now turn to the transition kernel \eqref{BetaMarkov}. Fixing $0 < x_1 < \cdots < x_n < 1$, we can see that the transition density has the form
\begin{align} \label{transition_kernel}
p_{n,\nu}^{(\beta/2)}(y|x) = \frac{1}{Z_{x,n,\nu}^{(\beta/2)}} f_n(y;x) g_{n,\nu}(y;x)^{\beta/2}
\end{align}
where $Z_{x,n,\nu}^{(\beta/2)}$ is a normalization constant and
\begin{align*}
f_n(y;x) &= \frac{\triangle(y)}{\prod_{i,j=1}^n |x_i - y_j|} \prod_{i=1}^n \frac{1}{y_i} \\
g_{n,\nu}(y;x) &= \prod_{i,j=1}^n |x_i - y_j| \prod_{i=1}^n y_i^{\nu+1}
\end{align*}
supported in
\[ \mathcal{U}^n_x := [0,1]^n \cap \{ x_1 \le y_1 \le \cdots \le x_n \le y_n\}. \]

As $\beta \to \infty$, we see that $y$ converges in distribution to a deterministic configuration $\tilde{y} = (\tilde{y}_1 \le \ldots \le \tilde{y}_n)$ which satisfies
\[ \tilde{y} = \underset{y}{\mathrm{argmax}}\, g_{n,\nu}(y;x) \]
where the $\mathrm{argmax}$ is over $y \in \mathcal{U}^n_x$. The maximizer solves
\begin{align} \label{g_maximizer}
\sum_{i=1}^n \frac{1}{\tilde{y}_i - x_j} + \frac{\nu+1}{\tilde{y}_i} = 0, \quad \quad i = 1,\ldots,n.
\end{align}
In words, it is a root of the log-gradient of $g_{n,\nu}$; clearly $\log g_{n,\nu}$ is strictly concave on $\mathcal{U}^n_x$ so the maximizer is unique.

\begin{lem} \label{lemMAXIMIZER}
The maximizer $\tilde{y}$ above satisfies
\[ \frac{1}{n} \sum_{i=1}^n (z - \tfrac{\nu +1}{n+\nu+1}x_i) \prod_{j \ne i} (z - x_j) = \prod_{i=1}^n (z - \tilde{y}_i). \]
\end{lem}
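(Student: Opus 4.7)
The plan is to recognize both sides of the claimed identity as explicit polynomials in $z$ built from $P(z) := \prod_{i=1}^n(z - x_i)$, and then match them directly. Set $Q(z) := \prod_{i=1}^n(z - \tilde{y}_i)$. The partial-fraction identity $P'(z)/P(z) = \sum_j 1/(z - x_j)$ allows the critical-point equations \eqref{g_maximizer} to be rewritten as
\[
\tilde{y}_i\, P'(\tilde{y}_i) + (\nu+1)\, P(\tilde{y}_i) = 0, \qquad i=1,\ldots,n,
\]
so the $\tilde{y}_i$ are roots of the polynomial
\[
R(z) := z P'(z) + (\nu+1)\, P(z).
\]

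Next I would check the degree and leading coefficient of $R$. Since $P(z) = z^n - e_1 z^{n-1} + \cdots$ and $zP'(z) = nz^n - (n-1)e_1 z^{n-1} + \cdots$, we get $R(z) = (n+\nu+1) z^n + \cdots$, a polynomial of degree exactly $n$. Because the $\tilde{y}_i$ are $n$ distinct roots of $R$ (uniqueness of the critical point in $\mathcal{U}^n_x$ follows from strict log-concavity of $g_{n,\nu}$), we conclude
\[
R(z) = (n+\nu+1)\, Q(z).
\]

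The main step is then a one-line manipulation of the right-hand side of the lemma. Using $\sum_{i=1}^n \prod_{j\ne i}(z-x_j) = P'(z)$ together with the telescoping identity $\sum_{i=1}^n (z - x_i)\prod_{j\ne i}(z-x_j) = nP(z)$, one obtains $\sum_{i=1}^n x_i \prod_{j\ne i}(z - x_j) = z P'(z) - n P(z)$. Substituting these into
\[
\frac{1}{n}\sum_{i=1}^n\left(z - \frac{\nu+1}{n+\nu+1} x_i\right)\prod_{j\ne i}(z-x_j)
\]
and collecting terms gives
\[
\frac{1}{n}\left[z P'(z) - \frac{\nu+1}{n+\nu+1}\bigl(z P'(z) - n P(z)\bigr)\right] = \frac{z P'(z) + (\nu+1) P(z)}{n+\nu+1} = \frac{R(z)}{n+\nu+1} = Q(z),
\]
which is exactly the claimed identity.

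There is no genuine obstacle here: the only thing that needs care is the justification that $R(z) = (n+\nu+1) Q(z)$, which requires knowing that the $n$ critical-point equations have $n$ distinct solutions $\tilde{y}_i$. This is immediate from the strict concavity of $\log g_{n,\nu}$ on $\mathcal{U}^n_x$ (already noted in the text) together with the interlacing $x_{i-1} \le \tilde{y}_i \le x_i$ inherited from the support $\mathcal{U}^n_x$, ensuring the $\tilde{y}_i$ are all distinct and hence account for all $n$ roots of the degree-$n$ polynomial $R$.
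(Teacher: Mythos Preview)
Your proof is correct and takes a genuinely different route from the paper's. The paper argues probabilistically via symmetric function theory: using Proposition~\ref{PROPOSITIONJackFactorization} with column partitions $\kappa=(1^k)$ (following Gorin--Marcus), one shows that the expected characteristic polynomial $\E\bigl[\prod_i(z-y_i)\bigr]$ equals $\frac{1}{n}\sum_i(z-\E[u_1]\,x_i)\prod_{j\ne i}(z-x_j)$ for all $\beta>0$, then sends $\beta\to\infty$ so that $y\to\tilde y$ and $\E[u_1]\to\frac{\nu+1}{n+\nu+1}$.

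Your argument is instead a direct algebraic verification: the critical-point equations \eqref{g_maximizer} say exactly that each $\tilde y_i$ is a root of $R(z)=zP'(z)+(\nu+1)P(z)$, and a short symmetric-function manipulation shows the left-hand side of the lemma equals $R(z)/(n+\nu+1)$. This is more elementary and self-contained, requiring nothing beyond the log-gradient equations and polynomial identities; the paper's route, while heavier, ties the lemma back to the Jack-function machinery underlying the whole article and makes transparent why the coefficient $\frac{\nu+1}{n+\nu+1}$ is the $\beta\to\infty$ mean of the single free Jacobi eigenvalue. One small remark on your justification of distinctness: strict log-concavity gives uniqueness of the maximizing \emph{tuple}, not distinctness of its components; the latter follows (as you note at the end) from the fact that $g_{n,\nu}$ vanishes on the boundary of $\mathcal{U}^n_x$, forcing the maximizer into the open region where the interlacing with the distinct $x_i$ is strict.
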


\begin{proof}
Following the proof of Theorem 1.1 in \cite{GorinMarcus} by taking $\kappa$ in Proposition \ref{PROPOSITIONJackFactorization} to be a partition of the form $(1,\ldots,1,0,\ldots,0)$, we can show that
\[ \E\left[ \prod_{i=1}^n (z - y_i) \right] = \frac{1}{n!} \sum_{\sigma \in S_n} \E\left[ (z - u_{\sigma(i)} x_i) \right] \]
is independent of $\beta > 0$, where $u_2 = \cdots = u_n = 1$ and $u_1$ is distributed as the $\beta$-Jacobi ensemble with parameters $m = n + \nu + 1$ and $\nu$. We refer the reader to \cite[Section 2]{GorinMarcus} for details on this argument. After evaluating the right hand side expectation, we obtain
\[ \frac{1}{n} \sum_{i=1}^n (z - \E[u_1] x_i) \prod_{j \ne i} (z - x_j). \]
Sending $\beta \to \infty$ and recalling the discussion from \eqref{SECwarmup}, we have
\[ \lim_{\beta \to \infty} \E[u_1] \to \frac{\nu+1}{n+\nu+1} \]
which establishes the lemma. We note that in fact $\E[u_1]$ is independent of $\beta > 0$, see e.g. \cite{Aomoto}.
\end{proof}

\subsection{Gaussianity and the $\infty$-Jacobi Product Process} \label{SUBSECgaussian}

We are now ready to introduce the $\beta \to \infty$ limit of the $\beta$-Jacobi product process.

\begin{defn}
Fix parameters $\nu_k > 0$ for $k = 2,3,\ldots$ and $\tilde{x}^1 = (\tilde{x}^1_1 < \cdots < \tilde{x}^1_n) \in [0,1]^n$. The \emph{$\infty$-Jacobi product process with initial state $\tilde{x}^1$ with parameters $(\nu_k)_{k=2}^\infty$} is a deterministic sequence $(\tilde{x}^k)_{k=1}^\infty$, where $\tilde{x}^k = (\tilde{x}^k_1 < \cdots < \tilde{x}^k_n)$ recursively satisfies
\[ \frac{1}{n} \sum_{i=1}^n (z - \tfrac{\nu_{k+1}+1}{n+\nu_{k+1}+1} \tilde{x}_i^k) \prod_{j \ne i} (z - \tilde{x}_j^k) = \prod_{i=1}^n (z - \tilde{x}_i^{k+1}), \quad \quad k = 2,3,\ldots, \]
equipped with a Gaussian field $\{\xi^k = (\xi_1^k,\ldots,\xi_n^k) \}_{k=1}^\infty$ such that $\xi^1_1 = \cdots \xi^1_n = 0$ and the joint density of $\xi^2,\ldots,\xi^p$ is proportional to
\begin{align} \label{infty_density}
\exp\left[ \sum_{k=2}^p \left( \frac{1}{2} \sum_{1 \le i < j \le n} \frac{(\xi_i^{k-1} - \xi_j^{k-1})^2}{(\tilde{x}_i^{k-1} - \tilde{x}_j^{k-1})^2} - \frac{1}{4} \sum_{i,j=1}^n \frac{(\xi_i^{k-1} - \xi_j^k)^2}{(\tilde{x}_i^{k-1} - \tilde{x}_j^k)^2} + \sum_{i=1}^n \left( \frac{\nu_k+2}{4} \frac{(\xi_i^{k-1})^2}{(\tilde{x}_i^{k-1})^2} - \frac{\nu_k+1}{4} \frac{(\xi_i^k)^2}{(\tilde{x}_i^k)^2} \right) \right) \right].
\end{align}
\end{defn}

\begin{thm}
Fix parameters $\nu_k > 0$ for $k = 2,3,\ldots$ and $x^1 = (x_1^1 < \cdots < x_n^1) \in [0,1]^n$. Suppose $(x^k)_{k=1}^\infty$ is distributed as the $\beta$-Jacobi product process with initial state $x^1$ and parameters $(\nu_k)_{k=2}^\infty$. Let $(\tilde{x}^k)_{k=1}^\infty$ be the deterministic part of the $\infty$-Jacobi product process with initial condition $\tilde{x}^1 = x^1$ and parameters $(\nu_k)_{k=2}^\infty$. If $x^k = \tilde{x}^k + \frac{\Delta x^k}{\sqrt{\beta}}$, then $(\tilde{x}^k)_{k=2}^\infty$, $(\Delta x^k)_{k=1}^\infty$ converges to the Gaussian field $(\xi^k)_{k=1}^\infty$ associated to the $\infty$-Jacobi product process with initial state $\tilde{x}^1$ as $\beta \to \infty$.
\end{thm}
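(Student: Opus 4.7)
The plan is to Taylor-expand the one-step transition density $p_{n,\nu_k}^{(\beta/2)}(x^k|x^{k-1})$ of \eqref{transition_kernel} around the deterministic trajectory with the scaling $x^k=\tilde{x}^k+\xi^k/\sqrt{\beta}$, and to chain the one-step Gaussian limits using the Markov property. The $\beta$-Jacobi product process is Markov by construction, and the limiting density \eqref{infty_density} factors as a product over $k$ of summands depending only on $(\xi^{k-1},\xi^k)$, so the limit is also a Markov Gaussian process. It therefore suffices to prove the one-step convergence and chain by induction on $k$.

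Writing $\log p_{n,\nu_k}^{(\beta/2)}(x^k|x^{k-1})=\log f_n(x^k;x^{k-1})+\tfrac{\beta}{2}\log g_{n,\nu_k}(x^k;x^{k-1})-\log Z_{x^{k-1},n,\nu_k}^{(\beta/2)}$, I would substitute $x^k=\tilde{x}^k+\xi^k/\sqrt{\beta}$ and $x^{k-1}=\tilde{x}^{k-1}+\xi^{k-1}/\sqrt{\beta}$ and Taylor expand in $1/\sqrt{\beta}$. The $\log f_n$ piece contributes only to the normalization at leading order. Expanding $\tfrac{\beta}{2}\log g_{n,\nu_k}$ produces a $\beta$-proportional constant, an $O(\sqrt{\beta})$ linear form, and an $O(1)$ quadratic form; the linear-in-$\xi^k$ coefficients vanish by the critical point equation \eqref{g_maximizer}, while the linear-in-$\xi^{k-1}$ term is cancelled exactly by the corresponding term in $-\log Z_{x^{k-1}}$ through Laplace's method and the envelope theorem, which identifies $\nabla_{x^{k-1}}\log g_{n,\nu_k}(\tilde{y}(x^{k-1});x^{k-1})$ at $\tilde{x}^{k-1}$ with $\nabla_x\log g_{n,\nu_k}$ at $(\tilde{x}^k,\tilde{x}^{k-1})$. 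A second application of the envelope theorem (using that the $xx$-Hessian of $\log g_{n,\nu_k}(\tilde{y}(x);x)$ equals $H_{xx}-H_{xy}H_{yy}^{-1}H_{yx}$, where $H$ denotes the full Hessian of $\log g_{n,\nu_k}$ at $(\tilde{x}^k,\tilde{x}^{k-1})$) combines with the direct Taylor expansion of $\tfrac{\beta}{2}\log g_{n,\nu_k}$ to produce the surviving $O(1)$ quadratic form
\begin{equation*}
\tfrac14(\xi^k)^T H_{yy}\,\xi^k + \tfrac12(\xi^k)^T H_{yx}\,\xi^{k-1} + \tfrac14(\xi^{k-1})^T H_{xy}H_{yy}^{-1}H_{yx}\,\xi^{k-1},
\end{equation*}
which is the log-density of a Gaussian on $\xi^k$ conditional on $\xi^{k-1}$ with mean $-H_{yy}^{-1}H_{yx}\,\xi^{k-1}$ and covariance $-2H_{yy}^{-1}$. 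Upgrading this pointwise convergence of densities to convergence in distribution is automatic from the strict concavity of $\log g_{n,\nu_k}$ in $y$ near the maximizer, which gives uniform exponential tail bounds.

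Finally, I would match the quadratic form termwise against \eqref{infty_density}. Using $H_{yy}(a,a)=-\sum_i(\tilde{x}^k_a-\tilde{x}^{k-1}_i)^{-2}-(\nu_k+1)(\tilde{x}^k_a)^{-2}$ (off-diagonal zero) and $H_{yx}(a,b)=(\tilde{x}^{k-1}_b-\tilde{x}^k_a)^{-2}$, the $(\xi^k)^T H_{yy}\xi^k$ and $(\xi^k)^T H_{yx}\xi^{k-1}$ pieces reproduce the $(\xi^k)^2$ and the $\xi^{k-1}\xi^k$ portions of terms (b) and (d) of \eqref{infty_density} by direct inspection. The main obstacle is the remaining algebraic identity
\begin{equation*}
\tfrac14(\xi^{k-1})^T H_{xy}H_{yy}^{-1}H_{yx}\xi^{k-1} = \tfrac12\sum_{i<j}\frac{(\xi^{k-1}_i-\xi^{k-1}_j)^2}{(\tilde{x}^{k-1}_i-\tilde{x}^{k-1}_j)^2} - \tfrac14\sum_i(\xi^{k-1}_i)^2\sum_j\frac{1}{(\tilde{x}^{k-1}_i-\tilde{x}^k_j)^2} + \tfrac{\nu_k+2}{4}\sum_i\frac{(\xi^{k-1}_i)^2}{(\tilde{x}^{k-1}_i)^2},
\end{equation*}
in which the Vandermonde-type repulsion among the $\tilde{x}^{k-1}_i$ must emerge from sums over $a$ of residue-like weights $H_{yy}(a,a)^{-1}(\tilde{x}^k_a-\tilde{x}^{k-1}_i)^{-2}(\tilde{x}^k_a-\tilde{x}^{k-1}_{i'})^{-2}$. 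To verify this I would use the polynomial relation from Lemma \ref{lemMAXIMIZER}, namely $\prod_j(z-\tilde{x}^k_j)=\tfrac{1}{n+\nu_k+1}[(\nu_k+1)P(z)+zP'(z)]$ with $P(z)=\prod_i(z-\tilde{x}^{k-1}_i)$, together with partial-fraction decomposition at the poles $z=\tilde{x}^k_a$ to rewrite the double sums as residues at the poles $z=\tilde{x}^{k-1}_i$; the maximizer equation $\sum_j(\tilde{x}^k_a-\tilde{x}^{k-1}_j)^{-1}=-(\nu_k+1)/\tilde{x}^k_a$ then produces both the $(\nu_k+2)/4$ coefficient and the Cauchy-type $(\tilde{x}^{k-1}_i-\tilde{x}^{k-1}_j)^{-2}$ cross interaction. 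Once this identity is established, multiplying the one-step conditional Gaussians across $k=2,\ldots,p$ gives \eqref{infty_density} directly.
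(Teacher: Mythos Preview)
Your approach is sound in principle but takes an unnecessarily indirect route, and the ``main obstacle'' you identify is an artifact of that detour rather than a genuine difficulty. The key point you miss is that the per-step normalization $Z_{x^{k-1},n,\nu_k}^{(\beta/2)}$ is \emph{explicitly known}: from \eqref{BetaMarkov} one reads off
\[
Z_{x^{k-1},n,\nu_k}^{(\beta/2)} \;=\; \frac{(\Gamma(\beta/2))^n\,\Gamma((\beta/2)(\nu_k+1))}{\Gamma((\beta/2)(\nu_k+n+1))}\,\triangle(x^{k-1})^{\beta-1}\prod_{i=1}^n (x_i^{k-1})^{(\beta/2)(\nu_k+2)-1}.
\]
The $x^{k-1}$-dependence is completely explicit, so there is no need for Laplace's method or the envelope theorem: one simply Taylor-expands $-\log Z_{x^{k-1}}$ directly. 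Doing so produces \emph{exactly} the Vandermonde repulsion term $\tfrac12\sum_{i<j}(\xi_i^{k-1}-\xi_j^{k-1})^2/(\tilde{x}_i^{k-1}-\tilde{x}_j^{k-1})^2$ and the $(\nu_k+2)/4$ term in \eqref{infty_density}, with no further work. Your identity involving $H_{xy}H_{yy}^{-1}H_{yx}$ never arises.

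This is precisely the paper's route: it writes the full joint density with its explicit $x^{k-1}$-dependent factors, Taylor-expands everything, and the quadratic piece matches \eqref{infty_density} by inspection. The real work then shifts to the $O(\sqrt{\beta})$ linear terms. Here the coefficient of $\Delta x_i^k$ receives contributions from \emph{both} the step-$k$ transition (where $x^k$ is the output) and the step-$(k{+}1)$ transition (where $x^k$ is the input). The former cancels by \eqref{g_maximizer}; the latter requires differentiating the polynomial recursion of Lemma~\ref{lemMAXIMIZER} and evaluating at $z=\tilde{x}_i^k$. Your envelope-theorem argument is in fact a slick repackaging of exactly this cancellation, which is a genuine conceptual gain. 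But by treating $Z$ as implicit you then pay for it on the quadratic side, where you must recover the Vandermonde interaction from a residue computation that is essentially equivalent to (and no easier than) the paper's linear-term cancellation. Your sketched partial-fraction argument for that identity is plausible---the identity is certainly true, since both sides compute the same conditional density---but it is more labor than the direct expansion.
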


\begin{proof}
We proceed as in \cite[Section 3.4]{GorinMarcus}. The density of $(x^1,\ldots,x^p)$ is proportional to
\[ \prod_{k=2}^p \frac{\triangle(x^k)}{(\triangle(x^{k-1}))^{\beta - 1}} \prod_{i,j=1}^n |x_i^{k-1} - x_j^k|^{\beta/2 - 1} \prod_{i=1}^n \frac{(x_i^k)^{\beta/2(\nu_k + 1) - 1}}{(x_i^{k-1})^{\beta/2(\nu_k + 2) - 1}} dx^k_i. \]
Sequentially define $\tilde{x}^1 = x^1$ and $\tilde{x}^k$ to be the maximizer of $g_{n,\nu_k}(x^k;\tilde{x}^{k-1})$ for $k = 2,3,\ldots$. Observe that
\begin{align} \label{recursion_zeros}
\prod_{i=1}^n (z - \tilde{x}_i^{k+1}) = \frac{1}{n} \sum_{i=1}^n (z - \tfrac{\nu_{k+1}+1}{n+\nu_{k+1}+1} \tilde{x}_i^k) \prod_{j\ne i} (z - \tilde{x}_j^k), \quad \quad k = 2,3,\ldots
\end{align}
by Lemma \ref{lemMAXIMIZER}. In other words $(\tilde{x}^k)_{k=1}^\infty$ is the deterministic part of a $\infty$-Jacobi product process. Set
\[ x^k_i = \tilde{x}^k_i + \frac{\Delta x^k_i}{\sqrt{\beta}}. \]
The density (ignoring the differentials) then becomes
\begin{align*}
& \frac{1}{Z_n(\beta)} \prod_{k=2}^p \left( \prod_{i < j} \frac{|\tilde{x}_i^k - \tilde{x}_j^k|}{|\tilde{x}_i^{k-1} - \tilde{x}_j^{k-1}|^{\beta-1}} \prod_{i,j=1}^n |\tilde{x}_i^{k-1} - \tilde{x}_j^k|^{\beta/2 - 1} \prod_{i=1}^n \frac{(\tilde{x}_i^k)^{\beta/2(\nu_k+1)-1}}{(\tilde{x}_i^{k-1})^{\beta/2(\nu_k+2)-1}} \right) \\
& \quad \times \prod_{k=2}^p \left( \prod_{i < j} \frac{\left| 1 + \frac{\Delta x_i^k - \Delta x_j^k}{\sqrt{\beta}(\tilde{x}_i^k - \tilde{x}_j^k)} \right|}{\left| 1 + \frac{\Delta x_i^{k-1} - \Delta x_j^{k-1}}{\sqrt{\beta}(\tilde{x}_i^{k-1} - \tilde{x}_j^{k-1})} \right|^{\beta - 1}} \prod_{i,j=1}^n \left|1 + \frac{\Delta x_i^{k-1} - \Delta x_j^k}{\sqrt{\beta}(\tilde{x}_i^{k-1} - \tilde{x}_j^k)} \right|^{\beta/2-1} \prod_{i=1}^n \frac{\left| 1 + \frac{\Delta x_i^k}{\sqrt{\beta} \tilde{x}_i^k} \right|^{\beta/2(\nu_k+1) - 1}}{\left| 1 + \frac{\Delta x_i^{k-1}}{\sqrt{\beta} \tilde{x}_i^{k-1}} \right|^{\beta/2(\nu_k+2)-1}} \right)
\end{align*}
The second line can be rewritten as
\begingroup\makeatletter\def\f@size{10}\check@mathfonts
\begin{align*}
& \prod_{k=2}^p \exp\left[ - \sqrt{\beta} \sum_{i < j} \frac{\Delta x_i^{k-1} - \Delta x_j^{k-1}}{\tilde{x}_i^{k-1} - \tilde{x}_j^{k-1}} + \frac{\sqrt{\beta}}{2} \sum_{i,j=1}^n \frac{\Delta x_i^{k-1} - \Delta x_j^k}{\tilde{x}_i^{k-1} - \tilde{x}_j^k} + \frac{\sqrt{\beta}}{2} \sum_{i=1}^n \left( (\nu_k+1) \frac{\Delta x_i^k}{\tilde{x}_i^k} - (\nu_k+2) \frac{\Delta x_i^{k-1}}{\tilde{x}_i^{k-1}} \right) \right] \\
& \quad \times \prod_{k=2}^p \exp\left[ \frac{1}{2} \sum_{i < j} \frac{(\Delta x_i^{k-1} - \Delta x_j^{k-1})^2}{(\tilde{x}_i^{k-1} - \tilde{x}_j^{k-1})^2} - \frac{1}{4} \sum_{i,j=1}^n \frac{(\Delta x_i^{k-1} - \Delta x_j^k)^2}{(\tilde{x}_i^{k-1} - \tilde{x}_j^k)^2} - \frac{\nu_k+1}{4} \frac{(\Delta x_i^k)^2}{(\tilde{x}_i^k)^2} + \frac{\nu_k+2}{4} \frac{(\Delta x_i^{k-1})^2}{(\tilde{x}_i^{k-1})^2} + O(1/\sqrt{\beta}) \right].
\end{align*} \endgroup
The second line of the above gives the density of the $\infty$-Jacobi product process up to normalization as $\beta \to \infty$. We check that the first line evaluates to $1$. The coefficient of $\Delta x_i^k$ is given by
\begin{align} \label{cancel}
- \sqrt{\beta} \sum_{j \ne i} \frac{1}{\tilde{x}_i^k - \tilde{x}_j^k} + \frac{\sqrt{\beta}}{2} \sum_{j=1}^n \frac{1}{\tilde{x}_i^k - \tilde{x}_j^{k+1}} - \frac{\sqrt{\beta}}{2} \frac{\nu_{k+1}+2}{\tilde{x}_i^k} + \frac{\sqrt{\beta}}{2} \sum_{j=1}^n \frac{1}{\tilde{x}_i^k - \tilde{x}_j^{k-1}} + \frac{\sqrt{\beta}}{2} \frac{\nu_k+1}{\tilde{x}_i^k}
\end{align}
for $k = 2,\ldots,p-1$, and for $k = p$ the first three terms are omitted. The last two terms cancel by the critical point equation \eqref{g_maximizer} for $1 \le k \le p$. To see that the first three terms cancel for $1 \le k < p$, differentiate \eqref{g_maximizer} and evaluate at $z = \tilde{x}_i^k$ to obtain
\begin{align} \label{chareq_differentiate}
\begin{split}
\sum_{j=1}^n \prod_{a \ne j} (\tilde{x}_i^k - \tilde{x}_a^{k+1}) &= \frac{1}{n} \left( \prod_{j \ne i} (\tilde{x}_i^k - \tilde{x}_j^k) + \sum_{a \ne i} \left( \tfrac{2n+\nu_{k+1}+1}{n+\nu_{k+1}+1} \tilde{x}_i^k - \tfrac{\nu_{k+1}+1}{n+\nu_{k+1}+1} \tilde{x}_a^k \right) \prod_{j\ne a,i}(\tilde{x}_i^k - \tilde{x}_j^k) \right) \\
&= \tfrac{\nu_{k+1}+2}{n+\nu_{k+1}+1} \prod_{j \ne i} (\tilde{x}_i^k - \tilde{x}_j^k) + \tfrac{2}{n+\nu_{k+1}+1} \tilde{x}_i^k \sum_{a \ne i} \prod_{j\ne a,i}(\tilde{x}_i^k - \tilde{x}_j^k)
\end{split}
\end{align}
Evaluating \eqref{g_maximizer} at $z = \tilde{x}_i^k$, we obtain
\begin{align} \label{chareq_evaluate}
\prod_{j=1}^n (\tilde{x}_i^k - \tilde{x}_j^{k+1}) = \frac{\tilde{x}_i^k}{n+\nu_{k+1}+1} \prod_{j \ne i} (\tilde{x}_i^k - \tilde{x}_j^k)
\end{align}
Divide \eqref{chareq_differentiate} by \eqref{chareq_evaluate} to obtain
\[ \sum_{j=1}^n \frac{1}{\tilde{x}_i^k - \tilde{x}_j^{k+1}} = (\nu_{k+1}+2)\frac{1}{\tilde{x}_i^k} + 2 \sum_{j \ne i} \frac{1}{\tilde{x}_i^k - \tilde{x}_j^k} \]
from which we see that the first three terms of \eqref{cancel} cancel.

It remains to check the integrability of \eqref{infty_density}. This would be immediate if the coefficients of each of the $\xi_i^k$ were negative, however this is not the case. From the integrability of $p_{n,\nu}^{(\beta/2)}(y|x)$, we have
\[ \frac{1}{(\Gamma(\theta))^n} \frac{\Gamma(\theta(\nu+n+1))}{\Gamma(\nu+1)} \underset{y \in \mathcal{U}_x^n}{\int \!\!\cdots \!\!\int} \triangle(y) \prod_{i,j=1}^n |x_i - y_j|^{\beta/2-1} \prod_{i=1}^n y_i^{\beta/2(\nu+1)-1} \,dy_i = \triangle(x)^{\beta-1} \prod_{i=1}^n x_i^{\beta/2(\nu+1)-1}. \]
Arguing as above, we see that
\begin{align*}
& \int \! \! \cdots \! \! \int \exp\left( -\frac{1}{4} \sum_{j=1}^n \frac{(\zeta_i - \xi_j)^2}{(x_i^{k-1} - x_j^k)^2} - \frac{\nu_k+1}{4} \frac{(\xi_i)^2}{(x_i^k)^2} \right) d \xi \\
& \quad \quad = Z \cdot \exp\left(- \frac{1}{2} \sum_{i < j} \frac{(\zeta_i - \zeta_j)^2}{(x_i^{k-1} - x_j^{k-1})^2} - \frac{\nu+1}{4} \frac{\zeta_i^2}{(x_i^{k-1})^2} \right)
\end{align*}
for some constant $Z > 0$ independent of $\zeta_1,\ldots,\zeta_n$. Thus we can integrate the density of the $(\xi^2,\ldots,\xi^p)$ in the $\infty$-Jacobi corners process sequentially in $k$ starting from $p$ and descending using the identity above.
\end{proof}


\end{document}